\newcolumntype{C}{>{$\displaystyle}c<{$}} 
\setlist[enumerate,1]{label=(\arabic*),ref=\arabic*}
\definecolor{blue}{rgb}{0,0.2,1}
\definecolor{red}{rgb}{0.9,0,0}
\newtheorem{theorem}{Theorem}
\newtheorem{lemma}[theorem]{Lemma}
\newtheorem{corollary}[theorem]{Corollary}
\newtheorem{remark}[theorem]{Remark}
\newtheorem{assumption}[theorem]{Assumption}
\renewcommand{\mathbf}[1]{\boldsymbol{#1}}
\begin{document}

\title{Quantum algorithms for  viscosity solutions to nonlinear Hamilton-Jacobi equations based on an entropy penalisation method}

\author{Shi Jin}

\affiliation{Institute of Natural Sciences, Shanghai Jiao Tong University, Shanghai 200240, China}
\affiliation{School of Mathematical Sciences, Shanghai Jiao Tong University, Shanghai, 200240, China}
\affiliation{Ministry of Education Key Laboratory in Scientific and Engineering Computing, Shanghai Jiao Tong University, Shanghai 200240, China}
\author{Nana Liu}
\email{nana.liu@quantumlah.org}
\affiliation{Institute of Natural Sciences, Shanghai Jiao Tong University, Shanghai 200240, China}
\affiliation{School of Mathematical Sciences, Shanghai Jiao Tong University, Shanghai, 200240, China}
\affiliation{Ministry of Education Key Laboratory in Scientific and Engineering Computing, Shanghai Jiao Tong University, Shanghai 200240, China}
\affiliation{Global College, Shanghai Jiao Tong University, Shanghai 200240, China.}

\date{\today}

\begin{abstract}
We present a framework for efficient extraction of the viscosity solutions of nonlinear Hamilton–Jacobi equations with convex Hamiltonians. These viscosity solutions play a central role in areas such as front propagation, mean-field games, optimal control, machine learning, and a direct application to the forced Burgers' equation. Our method is based  on an  entropy penalisation method proposed by Gomes and Valdinoci \cite{gomes2007entropy}, which generalises the Cole–Hopf transform from quadratic to general convex Hamiltonians, allowing an approximation of viscous Hamilton-Jacobi dynamics by a discrete-time linear dynamics which approximates a linear heat-like parabolic equation, and can also extend to continuous-time dynamics. This makes the method suitable for quantum simulation.  The validity of these results hold for arbitrary  nonlinearity that correspond to convex Hamiltonians, and for arbitrarily long times, thus obviating a chief obstacle in most quantum algorithms for nonlinear partial differential equations. We provide quantum algorithms -- both analog and digital -- for extracting pointwise values, gradients, minima, and function evaluations at the minimiser of the viscosity solution, without requiring nonlinear updates or full state reconstruction. 
\end{abstract}

\maketitle

\tableofcontents 

\section{Introduction}

The simulation of partial differential equations (PDEs) on quantum computers has emerged as a promising direction in computational mathematics and quantum information science. Most existing quantum algorithms for PDEs, however, are restricted to linear systems. Quantum simulation of nonlinear PDEs remains a formidable  challenge. In this paper we develop efficient quantum algorithms for {\it viscosity solutions} to nonlinear PDEs of Hamilton-Jacobi type,  which appear ubiquitously in mechanics, control theory, and statistical physics.  \\

There are two challenges in developing quantum algorithms for Hamilton-Jacobi equations. The most obvious is the nonlinearity in its Hamiltonian $\mathcal{H}$, which can give rise to  finite time singularity -- caustics -- even if the initial data are smooth. Beyond the singularity, it is necessary to define physically relevant solutions. The first such solution is the multivalued solution, which is relevant to  application domains in geometric optics, seismic waves, semi-classical limit of quantum mechanics and high frequency limits of linear waves. Previously, the authors developed a quantum algorithm for nonlinear Hamilton–Jacobi equations capable of capturing \textit{multi-valued solutions}. The idea there is to use the level set method \cite{Jin-Osher}, that transfers, exactly, the nonlinear  Hamilton-Jacobi equation to the \textit{linear} Liouville equation defined in the phase space \cite{JinLiu-nonlinear}. \\

The second interesting kind of solution beyond caustics are the \textit{viscosity solutions}, introduced by Crandall and Lions \cite{CL1}, which plays a central role in areas such as front propagation, mean-field games,  optimal control and machine learning \cite{E-Jentzen}. In some of these problems,  classical computational methods offer suffer from the curse-of-dimensionality, thus calls for quantum computation. \\

So far, there have been two broad classes of quantum algorithms for nonlinear PDEs. The first one uses {\it analytic} transformations to make the equations linear, which we  call linear representation \cite{jin2022timemultiscale}. The  level set \cite{JinLiu-nonlinear} and von-Neumann Koopman approach \cite{joseph2020koopman} belong to this category. Such methods have no restrictions for the strength of nonlinearity but are limited in its applicability to just few nonlinear  PDEs. The other class of quantum algorithms for nonlinear ODEs and PDEs, for example the Carleman truncation method \cite{Liu-PNAS, Liu-CMS},  approximates the original problems by linear ones, which we classify as linear approximations in \cite{JinLiu-nonlinear}. For nonlinear PDEs like  Hamilton-Jacobi or fluid dynamic equations, such linear approximations are unable to capture essential nonlinear physical phenomena such as caustics, shocks and turbulence, since they appear in physical systems with strong nonlinearity and weak dissipation.  In fact a main restriction of the linear approximation methods is that they are usually valid for weak nonlinearity and with strong dissipations.  The nonlinear Hamilton-Jacobi equation, in the contrary, has strong nonlinearity and weak dissipation--if one is interested in the viscosity solution.\\

 In this paper, we use what we call a \textit{linear formulation}, where we can still approximate the physically relevant solutions -- the viscosity solutions -- of nonlinear Hamilton-Jacobi equations  using a linear representation where the results still remain valid {\it globally} in time and truly nonlinear features can be captured. Usually when one computes  problems with possible singular solutions--such as shocks in gas dynamics or caustics in Hamilton-Jacobi equations--one needs to  use numerical viscosity in order to eliminate numerical oscillations and guarantee convergence to the physically relevant viscosity solutions. Even for a linear $\mathcal{H}$, the numerical viscosity in modern shock or caustic capturing schemes is nonlinear \cite{LeVequeBook}. This adds another difficulty in numerical computation of Hamilton-Jacobi equations.  To avoid this numerical difficulty, we use a {\it linear, artificial} viscosity approximation, which perturbs the Hamilton-Jacobi equation into a viscous Hamilton-Jacobi equation with a linear viscosity term. The concept of artificial viscosity was first introduced by von Neumann and Ritchmyer \cite{vNR} in the 1950's for the computation of shock waves in compressible gas dynamics, and has become a standard numerical tool to handle singular solutions. 
 
 To overcome the problem of nonlinear $\mathcal{H}$, we first observe that for quadratic Hamiltonians, the classical Cole–Hopf transformation converts the viscous Hamilton-Jacobi equation into a linear heat equation, providing an exact mapping -- thus a true linear representation -- between nonlinear and linear dynamics. For more general convex Hamiltonians, we adopt and generalise the entropy penalisation method of Gomes and Valdinoci \cite{gomes2007entropy}, which constructs an entropy-regularised approximation that correspond to the viscous Hamilton-Jacobi equations. Furthermore,  via the Cole-Hopf transform, it gives a {\it linear} discrete-time model that corresponds to a linear parabolic PDE, which is convenient for quantum simulation. This transformation enables a linear formulation for the nonlinear viscous Hamilton-Jacobi equation without loss of essential nonlinear behavior.  A related method was introduced in \cite{Mather} to compute the minimal Mather measure.\\

A second main challenge is that even with either a linear formulation or linear representation, one still needs to find efficient algorithms to extract physical quantities of interest.  In the present work we also develop quantum protocols -- both analog and digital -- to estimate physically relevant quantities. Examples include the solution at any point, the global minimum, the gradient of the solution  at a given point, and the value of an arbitrary function at the minimum point of the solution. \\
 
  In Section~\ref{sec:background} we summarise the classical algorithms to find solutions of the viscous Hamilton-Jacobi equation via linear formulation, along with error bounds to the corresponding viscosity solutions. Here the solutions can be approximated by solving linear parabolic PDEs. In Section~\ref{sec:quantumsimulation} we present the quantum simulation algorithms -- both analog and digital -- for those linear parabolic PDEs. In Section~\ref{sec:observables}, using the quantum states constructed in Section~\ref{sec:quantumsimulation}, we present methods for estimating four types of quantities: (a) solution at a point (b) gradient of the solution at a point (c) global minimum of solution and (d) value of known functions at the minimum point. 

\section{Linear formulation of Hamilton-Jacobi equations} \label{sec:background}

\subsection{Background on viscosity solutions of Hamilton-Jacobi equations}

The Hamilton-Jacobi equation is a classical PDE that takes the general form
\begin{align}\label{hj0}
    \frac{\partial S}{\partial t}+\mathcal{H}(t,x,\nabla S)=0, \qquad S_0(x)=S(0,x),
\end{align}
where $\mathcal{H}(t,x,\nabla S)$ is the Hamiltonian, assumed to be convex and Lipschitz continuous in the canonical momentum variable $\nabla S$. The  space variable is $x\in \mathbb{R}^d$ and time $t>0$. The Hamilton-Jacobi equation has wide range of applications, from classical mechanics \cite{goldstein}, geometric optics \cite{evans2022partial}, to optimal control \cite{Bardi-book} and machine learning \cite{E-Jentzen}. It also arises as a numerical tool -- known as the level set method for computing propagating interfaces \cite{OsherSethian}. For these nonlinear PDEs, even with smooth initial data, the solutions can develop singularities known as caustics (e.g. discontinuity in the gradient) in finite time, which  leads to a breakdown of classical (continuously differentiable) solutions. 

The viscosity solution, introduced by Crandall and Lions \cite{CL1}, is a standard mathematical notion to define a unique weak solution beyond the singularity.  Numerical approximations to Hamilton-Jacobi equations have been extensively studied, see for example \cite{OsherSethian}.  To avoid numerical oscillations due to the presence of caustics, high order numerical schemes need to use slope limiters around caustics -- which reduce the numerical accuracy to first order --  so a sufficient amount of numerical viscosity is used to suppress numerical oscillations and to guarantee viscosity solutions can be obtained \cite{LeVequeBook}.  It's important to remark that these numerical viscosity terms are {\it nonlinear} because they depend on the solutions themselves, as their role is to locally suppress oscillations near regions of high oscillations and in smooth regions they are not used. This nonlinearity is present even when the original equations are linear. Since our goal is to identify a linear formulation for the viscosity solution of Hamilton-Jacobi equations, we do not use these more robust but nonlinear modern shock-capturing schemes (which uses slope limiters)  \cite{OsherSethian}. 

Instead, we use a {\it linear, artificial} viscosity approximation. The simplest such approximation is the following regularised problem, called the viscous Hamilton-Jacobi equation
\begin{align} \label{eq:hjviscosity0}
     \frac{\partial S_{\nu}}{\partial t}+\mathcal{H}(t,x,\nabla S_{\nu})=2 a\nu\sum_{j=1}^d \frac{\partial^2 S_{\nu}}{\partial x^2_j}, \quad S_{\nu}(0,x)=S_{0}(x), \qquad \nu>0,
\end{align}
where the right hand side is the linear, artificial viscosity term with a small viscosity coefficient $0<\nu<<1$., and $a$ is a constant depending on $\mathcal{H}$ (to be given in Lemma \ref{prop-24}).   The concept of artificial viscosity was first introduced by von Neumann and Ritchmyer \cite{vNR} in the 50's for the computation of shock waves in compressible gas dynamics, and has become a standard numerical tool to handle singular solutions, although more modern high-resolution schemes use the more robust nonlinear artificial viscosities as mentioned earlier.  The viscosity term  smooths out the caustics and the solutions of Eq.~\eqref{eq:hjviscosity0}, and in the vanishing viscosity limit $\nu \to 0$, recovers the   viscosity solution defined by Crandall and Lions \cite{CL2}.  This added artificial viscosity introduces an error, the rate of convergence of it has been under extensive mathematical studies, see for examples  \cite{Fleming, CL2, Souga,  Evans-10, lin20011, Brenier, CG25, tran2021hamilton}. The rate of convergence is between  $O(\nu^{1/2})$ and $O(\nu)$, depending on the regularity and convexity of the Hamiltonian and the initial data.

The dependence of the error on  $d$ also varies depending on the nature of $H$ and space dimension. When $\mathcal{H}$ is just Lipschitz continuous, $c$ can be even independent of $d$ while the rate of error is $O(\nu^{1/2})$ \cite{CL2}. This paper studies general Hamiltonian so we will not get into specific situations, instead  we will loosely use the error in Lemma \ref{lem:viscosityerror}, which also gives error estimate for derivatives of the solution. To estimate the approximation of the derivative, we need semi-concavity on $S_0$, and  stronger assumptions on $\mathcal{H}$. Let $Q_T=(0,T)\times \mathbb{T}^d$ where $\mathbb{T}^d$ is a $d$-dimensional torus. Assume $\mathcal{H}(t,x,p)\in C^2((0,T)\times \mathbb{R}^n\times\mathbb{R}^n)$, with
\begin{align}
\sum_{(t,x)\in Q_T}|\nabla_x \mathcal{H}|, \quad 
\sum_{(t,x)\in Q_T}|\nabla^2_{xp} \mathcal{H}| , \quad
\sum_{(t,x)\in Q_T}|\nabla^2_{xx} \mathcal{H} |\le C_H(1+|p|) \nonumber  
\end{align}
and
\begin{align}
\xi^T \, (\nabla^2_{pp} \mathcal{H} )\xi\ge a(t,x) |\xi|^2,  \quad a>0. \nonumber 
\end{align}
Then by combining Theorem 3.1 and Corollary 3.4 from \cite{CG25} gives:

\begin{lemma} \label{lem:viscosityerror}
 For   the viscosity solution $S$ of Eq.~\eqref{hj0}, 
    \begin{align}\label{deriv-error}
     & \sup_{0 \leq t \leq T} \sup_{x\in \mathbb{T}^d} |S(t,x)- S_{\nu}(t, x)|\leq C {\nu}^\beta, \\
 & \sup_{0 \leq t \leq T} \|\nabla S(t,x)-\nabla S_{\nu}(t, x)\|_{l_2(\mathbb{T}^d)}\leq C {\nu}^{\beta/2}, 
    \end{align}
      for all $\beta\in (1/2, 1)$, $C$ is linear in $d$, scales with $T^{3/2}$, and also depends on $\beta,  S_0$ and $H$.
\end{lemma}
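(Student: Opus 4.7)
The plan is to prove the two estimates separately using standard viscosity-solution machinery, refined by the convexity of $\mathcal{H}$ and the semi-concavity inherited from $S_0$. For the $C^0$ bound, the natural starting point is the Crandall--Ishii--Lions doubling-of-variables technique: on $[0,T]\times\mathbb{T}^d\times\mathbb{T}^d$ form the auxiliary function
\[
\Phi_{\epsilon,\delta}(t,x,y) \;=\; S(t,x) - S_\nu(t,y) - \frac{|x-y|^2}{2\epsilon} - \delta t,
\]
and examine its maximum. Using that $S$ is a viscosity subsolution of \eqref{hj0} and $S_\nu$ a classical solution of \eqref{eq:hjviscosity0} yields an inequality controlling (i) the difference $\mathcal{H}(t,x,p)-\mathcal{H}(t,y,p)$ at $p=(x-y)/\epsilon$, which is handled by the growth bound $|\nabla_x\mathcal{H}|\le C_H(1+|p|)$, and (ii) a viscous contribution $2a\nu\,\mathrm{tr}(Y)$, where the matrix $Y$ is an upper bound on $\nabla^2 S_\nu$ coming from its uniform semi-concavity. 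Optimising $\epsilon$ in terms of $\nu$ gives the classical $O(\nu^{1/2})$ rate; the upgrade to $O(\nu^{\beta})$ for $\beta\in(1/2,1)$ requires iterating this comparison, using the one-sided Hessian bound produced by $\nabla^2_{pp}\mathcal{H}\succeq aI$ to absorb part of the Hamiltonian error, which is where the $C^2$ regularity of $\mathcal{H}$ and the semi-concavity of $S_0$ are essential.

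For the gradient bound, the strategy is to avoid comparing $\nabla S$ and $\nabla S_\nu$ pointwise, since $\nabla S$ is only defined almost everywhere beyond caustics. Instead I would use that both $S$ and $S_\nu$ are semi-concave with a constant independent of $\nu$, so $w=S-S_\nu$ has uniformly bounded one-sided second derivative. For such $w$ one has an elementary Bernstein-type interpolation
\[
\|\nabla w\|_{L^2(\mathbb{T}^d)}^2 \;\lesssim\; \|w\|_{L^\infty(\mathbb{T}^d)}\,\bigl(\|w\|_{\mathrm{sc}}+1\bigr),
\]
obtained by integration by parts after regularising with a convolution kernel and using $\nabla^2 w \le M I$ in the distributional sense. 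Combined with the $C^0$ estimate $\|w\|_{L^\infty}=O(\nu^\beta)$ from the first step, this yields the claimed $O(\nu^{\beta/2})$ rate on the gradient.

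The main obstacle is the quantitative tracking of the constant $C$ in terms of $d$, $T$, $\beta$, $S_0$ and $\mathcal{H}$. The linear-in-$d$ behaviour enters through the trace $\mathrm{tr}(Y)$ in the viscous contribution and through the $\ell_2$ sum of the coordinate gradient components in the Bernstein step, both of which aggregate $d$ contributions. The $T^{3/2}$ scaling emerges as $T\cdot T^{1/2}$, with one power of $T$ produced by integrating the pointwise error in time in the doubling argument and the extra $T^{1/2}$ coming from the growth of the semi-concavity modulus of $S_\nu$ (which typically worsens like $1/(c+t)$ or $\sqrt{t}$ under the assumed convexity bound). Rather than reproducing this bookkeeping in full, the argument is closed by appealing to Theorem~3.1 and Corollary~3.4 of \cite{CG25}, where precisely these constants are computed under the present hypotheses on $\mathcal{H}$ and $S_0$.
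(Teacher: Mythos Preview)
Your proposal is correct and aligns with the paper's own treatment: the paper does not supply an independent proof of this lemma but simply records it as a direct consequence of Theorem~3.1 and Corollary~3.4 of \cite{CG25}, which is exactly the reference you invoke to close your argument. Your sketch of the doubling-of-variables and Bernstein-interpolation mechanisms, together with the heuristic tracking of the $d$- and $T$-dependence, goes beyond what the paper provides, but the endpoint is the same citation.
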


\subsection{Linear formulations of viscous nonlinear Hamilton-Jacobi equations} \label{sec:heatquantumsimulation}

To simulate the solutions of nonlinear PDEs for with general nonlinearity, with long time validity, using quantum simulation, one of the important components is being able to find a linear formulation of the nonlinear PDEs. Now $S_{\nu}$ obeying Eq.~\eqref{eq:hjviscosity0} still satisfies a nonlinear PDE and the next step is to find a linear formulation of the viscous Hamilton-Jacobi equation. We will see in Section~\ref{sec:HJmechanics} that for the special Hamiltonian that is quadratic in $\nabla S_{\nu}$, 
one can use the  classical Cole-Hopf transform
\cite{evans2022partial} to transform the nonlinear viscous Hamilton-Jacobi PDE \eqref{eq:hjviscosity0} into a linear heat equation. 

However, for more general convex Hamiltonians, the direct application of the  Cole-Hopf transformation will not lead to a linear heat or parabolic equation. We will see in  Section~\ref{sec:HJgeneral} how the entropy penalisation method introduced by Gomes and Valdinoci in \cite{gomes2007entropy} can be used to provide such a linear formulation. Through a discrete-time process, this method generalises the Cole-Hopf transformation between a general Hamilton-Jacobi equation with convex Hamiltonian and a {\it linear},  discrete-time evolution process that is asymptotically close to a linear parabolic equation when the time step is small.

\subsubsection{Quadratic Hamiltonian} \label{sec:HJmechanics}

We begin with a special example of the quadratically nonlinear viscous Hamilton-Jacobi PDE 
\begin{align} \label{eq:hjviscosity1}
     \frac{\partial S_{\nu}}{\partial t}+\mathcal{H}(t,x,\nabla S_{\nu})=\nu\sum_{j=1}^d \frac{\partial^2 S_{\nu}}{\partial x^2_j}, \qquad \mathcal{H}(t,x,\nabla S_{\nu})=\sum_{j=1}^d \frac{1}{2}\left(\frac{\partial S_{\nu}}{\partial x_j}\right)^2+V(t,x), \quad S_{\nu}(0,x)=S_{0}(x).
\end{align}
Eq.~\eqref{eq:hjviscosity1} can be transformed into a linear PDE (heat equation) using the Cole-Hopf transformation
\cite{evans2022partial}, except here we include a normalisation term $\mathcal{N}_0$ 
\begin{align} \label{eq:ch1}
    S_{\nu}(t,x)=-2 \nu \ln (\mathcal{N}_0u(t, x)), \qquad \mathcal{N}^2_0=\int e^{-S_{\nu}(0,x)/\nu}dx.
\end{align} 
Then it is simple to verify that  $u(t,x)$ satisfies the $d$-dimensional heat equation with a linear source term:
\begin{align} \label{eq:heat1}
    \frac{\partial u}{\partial t}-\frac{V(t, x)}{2\nu}u=\nu\sum_{j=1}^d \frac{\partial^2 u}{\partial x_j^2}, \qquad u_0(x)=u(0,x), \qquad \int |u_0(x)|^2 dx=1,
\end{align}
where the initial state is automatically normalised from our definition of the Cole-Hopf transformation
\begin{align}
    u_0(x)=\frac{e^{-S_{\nu}(0,x)/(2\nu)}}{\mathcal{N}_0} \implies \int |u_0(x)|^2 dx=1,
\end{align}
which is more convenient for embedding into a normalised initial quantum state. The linear PDE in Eq.~\eqref{eq:heat1} can then be simulated on a quantum device - both digital and analog with qubit, qudit or continuous-variable quantum systems. We describe both the analog and digital quantum simulation algorithms for $u(t)$ in the following Section~\ref{sec:quantumsimulation}. \\

For the purpose of simplifying notation in  our algorithm, we can set $\mathcal{N}_0=1$ in the rest of the paper without losing generality, since a different value of $\mathcal{N}_0 \neq 1$ can always be added classically at the end, adding $-2\nu \ln \mathcal{N}_0$ to $S_{\nu}$. For any derivatives of $S_{\nu}$, the value of $\mathcal{N}_0$ does not contribute. \\

\begin{remark}\label{CH-Burgers}
Defining $R_\nu=\nabla S_\nu$. If one takes  gradient on the Hamilton-Jacobi equation \eqref{eq:hjviscosity1}, one gets the forced Burgers' equation
\begin{align}\label{burgers}
     \frac{\partial R_{\nu}}{\partial t}+ R_\nu \cdot \nabla R_\nu + \nabla V =\nu \sum_{j=1}^d \frac{\partial^2 R_{\nu}}{\partial x^2_j}.
\end{align}
Clearly, all of the quantum algorithms developed in this article can also be directly applied to solve this multi-dimensional forced Burgers' equation, when $R_{\nu}$ is curl-free.
\end{remark}

\subsubsection{More general convex Hamiltonians} \label{sec:HJgeneral}
In the previous section, the Hamiltonian is quadratic in the canonical momentum, which is represented by the gradient of $S_{\nu}(t, x)$, and intuitively this term acts like a kinetic energy term. For Hamiltonians with terms that are more general in canonical momentum, however, it is not possible to directly transform the nonlinear PDE into a linear one. So we proceed using an alternative method introduced in \cite{gomes2007entropy}, which still makes use of the Cole-Hopf transformation, though doing so indirectly.

We first define $\mathcal{L}(x,v)$ as a suitably smooth Lagrangian and  $v\in \mathbb{R}^d$ is the velocity. This Lagrangian is conjugate to the Hamiltonian $\mathcal{H}(x,p)$ in the Hamilton-Jacobi equation, with conjugate momentum $p\in \mathbb{R}^d$. This conjugacy can be expressed using the Legendre–Fenchel definition 
\begin{align}
    & \mathcal{L}(x,v)=\sup_p \{p \cdot v-\mathcal{H}(x, p)\}, \\
    & \mathcal{H}(x,p)=\sup_v \{p \cdot v-\mathcal{L}(x, v)\}=-\inf_v \{\mathcal{L}(x,v)-p \cdot v\}.
\end{align}
We assume that this Lagrangian can be written in the form $\mathcal{L}(x,v)=K(v)-V(x)$, where $K(v)$ is strictly convex in $v$ and superlinear at infinity, and $V(x)$ is semiconvex and bounded.

The basic idea proposed in \cite{gomes2007entropy} is that, instead of working directly with the PDE in continuous time and then perform the Cole-Hopf transformation, one first chooses a small time step $h>0$ and proceeds with discrete-time schemes. Here we define at time-step $n$ the solution $S_D^{n}(x)$ via a time-marching process:
       \begin{align} \label{eq:sndef}
             S_D^{n+1}(x)=G[S_D^n](x)=-2\nu \ln \left(\int e^{-(h\mathcal{L}(x,v)+S_D^n(x+hv))/(2\nu)}dv\right),
        \end{align}
where $G$ is a {\it nonlinear} operator that approximately corresponds to the viscous Hamilton-Jacobi equation Eq.~\eqref{eq:hjviscosity0}. The solution of the discrete-time scheme $S_D^n$ can be related to the solution of another discrete-time scheme $\tilde{u}^n$, defined by 
 \begin{align}\label{linear-scheme}
        \tilde{u}^{n+1}(x)= \tilde{L}[\tilde{u}^n](x)=\int  e^{-h\mathcal{L}(x,v)/(2\nu)}\tilde{u}^n(x+hv)dv, 
    \end{align}
Here $\tilde{L}$ is a {\it linear} operator. Then it has been shown that \cite{gomes2007entropy}
$S_D^n(x)$ and $\tilde{u}^n$ are related by  the Cole-Hopf transformation:
\begin{equation}\label{Cole-Hopf}
        S_D^{n}(x)=-2\nu \ln \tilde{u}^n(x).
 \end{equation}
Numerically it will be more convenient to use the normalised 
equation
\begin{align}\label{N-linear-scheme}
        {u}^{n+1}(x)= {L}[{u}^n](x)=\int P(v) {u}^n(x+hv)dv,\qquad P(v) = \frac{e^{-h{K}(v)/(2\nu)}}{\int e^{-h{K}(v)/(2\nu)}dv},
\end{align}
that  satisfies the property $\|L[u]\|\le \|u\|$, hence the dynamics \eqref{N-linear-scheme} is {\it contractive}. It  can be shown that \eqref{N-linear-scheme}  approximates  some linear parabolic heat-like operator when the discrete time step $h$ is small, as we will see in Lemma~\ref{lem:generalhj}.


The discrete-time steps of the algorithm in \cite{gomes2007entropy}, following a time-marching process, then proceeds as follows.

\begin{algorithm}[H]
		\caption{The discrete-time (via time-marching) classical algorithm to compute viscosity solution at time-step $N$ \cite{gomes2007entropy}. The output $S_{\nu}^N$ approximates the viscosity solution $S(t=t^N, x)$ to precision $O(\epsilon)$, see Lemma~\ref{lem:gomeserror}.}
		\label{Alg-1}
\begin{enumerate}
    \item Input: $u^0(x)=u(0,x)=e^{-S_0(x)/(2\nu)}$. Choose a small time step $h$ (see Lemma \ref{lem:gomeserror}); 
    \item Solve for $u^n(x)$, for $n=0, \cdots, N$, by the \textit{linear} scheme \eqref{N-linear-scheme};
    \item The solution to $S_\nu^N(x)$ can be recovered by applying the Cole-Hopf transformation: 
   \begin{equation}\label{HJ-recover}
   S_\nu^N(x)=-2\nu \ln u^N(x);
   \end{equation}
    \item The gradient of $S^N_\nu$ is obtained using
    \begin{align}
        \nabla S_\nu^N (x) = -2\nu \nabla {u}^N(x)/{u}^N(x).
    \end{align}
\end{enumerate}
\end{algorithm}

 Intuitively, one can see how the definition of $S_D^n$ from  Eq.~\eqref{eq:sndef}  approximates the solution of the Hamilton-Jacobi equation. We first observe that for small $\nu \ll 1$, a variation of Laplace's method can be applied to the integral, which means that the integral is dominated by the exponential term at its optimal value. From this, it can be proved that \cite{gomes2007entropy}
\begin{align} \label{eq:varGdef}
  S_D^{n+1}(x)= G[S_D^n](x)=\inf_{\gamma} \int (h\mathcal{L}(x,v)+S_D^n(x+hv)+2 \nu \ln \gamma(v)) \gamma(v) dv,
\end{align}
where an entropy regulariser $\gamma(v) \ln \gamma(v)$ is added for $\nu \neq 0$  and $\gamma(v)$ is a probability density over $\mathbb{R}^d$ so $
\int_{\mathbb{R}^d} \gamma(v) dv =1$. The inclusion of an entropy regularisation \cite{JKO} introduces uncertainty in $v$ and the optimal probability distribution $\gamma(v)$ is expected to take the form of a Gibbs state with temperature $\nu$, from well-known arguments related to free energy minimisation. Eq.~\eqref{eq:varGdef} is also a Hopf-Lax-like representation, one that uses an entropy regulariser. Using this definition, one can form a suggestive argument to see how the Hamilton-Jacobi equation emerges. For small $\nu$ and  $h$, using Eq.~\eqref{eq:varGdef}, the time derivative of $S_D$ behaves roughly like $\partial S_D^n/ \partial t \sim (S_D^{n+1}-S_D^n)/h =(G[S_D^n]-S_D^n)/h \sim \inf (\mathcal{L}-p \cdot v)=-\mathcal{H}(x,p)$, where the $p \cdot v$ term comes from the Taylor expansion $S_D^n(x+hv) \sim S_D^n(x)+h\nabla S_D^n \cdot v$ and $p=-\nabla S_D$. This then gives  the Hamilton-Jacobi equation $\partial S_D/ \partial t+\mathcal{H}(x,p)=0$. This is to provide intuition only, and for full details see \cite{gomes2007entropy}.


Next we show the error of this algorithm for the case of $K(v)=|v|^2$. Our analysis uses the results from \cite{gomes2007entropy}, in which most of the theory and error analysis was developed for this special $K(v)$. The analysis for more general $K$ is possible but involves highly technical and detailed analysis, which is out of the scope of this article. The error analysis also uses Lemma \ref{lem:viscosityerror}, but we will use the worst case scenario, $\beta=1/2$.

\begin{lemma} \label{lem:gomeserror}
    Let ${S_\nu} \in C^3(\mathbb{R}^d)$, and $K(v)=|v|^2$. Suppose that $h\|\nabla^2 {S_\nu}\|_{L^\infty}(\mathbb{R}^d) $ is smaller than a suitable constant. Then 
    \begin{align} \label{error-10}
   & S^n_\nu(x) -S(t^n, x)=O\left( h^2d^3/\nu+ hd^2+(h/\nu)^{3/2} d^2+ d\nu^{1/2}\right),\\ 
\label{error-a} &   \|\nabla S^n_\nu(x) -\nabla S(t^n, x)\|_{l_2(\mathbb{T}^d)}=O\left( h^2d^{7/2}/\nu^2+ hd^{5/2} /\nu+h^{3/2} d^{5/2}/\nu^{5/2}+ d\nu^{1/2}\right),  
    \end{align}
    To achieve a precision of $\epsilon$, namely,
    \begin{align} \label{precision}
    S^n_\nu(x) -S(t^n, x) =O(\epsilon),
    \end{align}
    one needs to choose
    \begin{align}\label {requirement-10}
    \nu =O\left( (\epsilon/d)^2\right), \quad h=O\left(\epsilon^{8/3}/d^{10/3} \right).
    \end{align}
     Furthermore, to achieve a precision of $\epsilon$ for $\nabla S_\nu^n$, namely,
    \begin{align} \label{precisiongrad}
    \|\nabla S^n_\nu(x) -\nabla S(t^n, x)\|_{l_2(\mathbb{T}^d)} =O(\epsilon),
    \end{align}
     one needs to choose
    \begin{align}\label{requirement-0}
    \nu =O\left( (\epsilon/d)^2\right), \quad h=O\left(\epsilon^4/d^{5} \right).
    \end{align}
\end{lemma}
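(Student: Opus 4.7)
The plan is to introduce the continuum viscous solution $S_\nu(t^n,x)$ as an intermediary and split via the triangle inequality:
\begin{align*}
S^n_\nu(x) - S(t^n,x) = \bigl[S^n_\nu(x) - S_\nu(t^n,x)\bigr] + \bigl[S_\nu(t^n,x) - S(t^n,x)\bigr].
\end{align*}
The second bracket is the vanishing-viscosity error, bounded directly by Lemma~\ref{lem:viscosityerror} with the worst-case exponent $\beta = 1/2$; this produces the $d\nu^{1/2}$ terms appearing in both \eqref{error-10} and \eqref{error-a}. The first bracket is the purely discrete-time error of the Gomes--Valdinoci scheme, and I would handle it by a consistency-plus-stability argument in the style of \cite{gomes2007entropy}.

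For the discretisation error I would work in the $u$-variable, where the scheme \eqref{N-linear-scheme} is linear. With $K(v) = |v|^2$ the kernel $P(v)$ is an isotropic Gaussian of variance of order $\nu/h$, so Taylor expanding $u^n(x+hv)$ to fourth order inside $\int P(v)\,\cdot\,dv$ matches the leading terms with a heat-like parabolic operator, leaving a residual of order $h^2$ times sums of fourth-order derivatives of $u$; these contribute a factor $d^2$ from the number of isotropic mixed-derivative contractions. Translating the resulting local error on $u$ back to $S_\nu = -2\nu \ln u$ via the Cole--Hopf relation \eqref{Cole-Hopf} divides by $\nu$, which is the origin of the $h^2 d^3/\nu$ and $(h/\nu)^{3/2}d^2$ contributions; the $hd^2$ term encodes the first-order consistency gap between the semidiscrete parabolic equation and the continuum viscous Hamilton--Jacobi equation. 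The contractivity of $L$ noted after \eqref{N-linear-scheme} supplies the stability needed to sum these local errors over $n \leq T/h$ steps without incurring additional powers of $h^{-1}$; the smallness assumption on $h\|\nabla^2 S_\nu\|_{L^\infty}$ ensures the Taylor remainders stay controlled across each step.

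For the gradient bound \eqref{error-a} I would differentiate the Cole--Hopf identity, $\nabla S^n_\nu = -2\nu\, \nabla u^n/u^n$, and apply the same triangle-inequality split. The viscosity half is again Lemma~\ref{lem:viscosityerror}. The discrete half requires a parallel consistency estimate for $\nabla u^n$, where each spatial derivative applied to the Gaussian kernel $P$ brings an extra factor of $1/\nu$ (since the kernel concentrates on scale $\sqrt{\nu/h}$ only in second moment, while differentiation against its exponent produces $1/\nu$). This is what multiplies every $h$-dependent exponent in \eqref{error-10} by an extra $1/\nu$ and yields the $h^2 d^{7/2}/\nu^2$, $hd^{5/2}/\nu$, and $h^{3/2}d^{5/2}/\nu^{5/2}$ terms of \eqref{error-a}. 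A uniform lower bound on $u^n$, needed to control the denominator, follows from a maximum-principle argument for the contractive operator $L$ applied to the strictly positive initial datum $u^0 = e^{-S_0/(2\nu)}$.

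The parameter choices \eqref{requirement-10} and \eqref{requirement-0} then follow by balancing all four terms against $\epsilon$. The last term $d\nu^{1/2}$ forces $\nu = O((\epsilon/d)^2)$ in both cases. Inserting this $\nu$ into the remaining three terms and solving for $h$, the binding constraint for \eqref{error-10} is $(h/\nu)^{3/2}d^2 \lesssim \epsilon$, yielding $h = O(\epsilon^{8/3}/d^{10/3})$, while for \eqref{error-a} the binding constraint is $h^{3/2}d^{5/2}/\nu^{5/2} \lesssim \epsilon$, yielding $h = O(\epsilon^4/d^5)$. The hardest step in carrying this programme out rigorously is tracking the $d$-dependence inside the Taylor remainders of the convolution scheme, because in dimension $d$ the number of mixed partial derivatives grows combinatorially; the product structure of the isotropic Gaussian $P$ afforded by $K(v) = |v|^2$ is essential for keeping the dependence polynomial, which is also why the statement is limited to quadratic kinetic energy rather than general strictly convex $K$.
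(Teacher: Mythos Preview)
Your decomposition is exactly the paper's: split by the triangle inequality into the vanishing-viscosity error (Lemma~\ref{lem:viscosityerror} at $\beta=1/2$, supplying the $d\nu^{1/2}$ term) and the discrete-time consistency error, which the paper simply quotes as Lemma~\ref{prop-24} in the Appendix --- its content is precisely the Gaussian-kernel Taylor expansion you sketch. Your parameter balancing for \eqref{requirement-10} and \eqref{requirement-0}, including the identification of the binding constraints, is also what the paper does. The one place you and the paper genuinely diverge is the mechanism for the gradient estimate \eqref{error-a}: you propose a fresh consistency-plus-stability analysis for $\nabla u^n$, attributing the extra $1/\nu$ to differentiation against the kernel, whereas the paper derives \eqref{error-a} from \eqref{error-10} by a one-line scaling argument --- multiply the three $h$-dependent terms by $1/\nu$ (justified by ``$\nabla S_\nu^n = O(1/\nu)$'') and by $d^{1/2}$ (to pass from the maximum norm of \eqref{error-10} to the $\ell_2$ norm). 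The paper's route is much shorter but heuristic; yours is more principled, though it would require a separate stability bound for the differentiated scheme, and your sketch does not yet say where the additional $d^{1/2}$ factor enters.
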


    \begin{proof}
       \eqref{error-a} is the result of combining Lemma \ref{prop-24} in Appendix \ref{app:background} and  Lemma \ref{lem:viscosityerror}. In \eqref{error-10} and \eqref{precision}, the errors were given in maximum norm.  The error \eqref{error-a} in gradient loses one order of $\nu$ since $\nabla S_\nu^n =O(1/\nu)$, and from maximum norm to $l_2$ norm one increases the error by a factor of $d^{1/2}$. 
       \eqref{requirement-10} is obtained by requiring each term on the right hand side of \eqref{error-10} to be of $O(\epsilon)$. Likewise for \eqref{requirement-0}. 
       \end{proof}

\begin{remark} \cite{gomes2007entropy} did not carry out the analysis for more general $K(v)$. Nevertheless, since the analysis will be similar and the errors come from Taylor expansion of the solutions, up to the same order of derivatives, hence the error dependence on $d$ is expected to be  the same as the special case of $K(v)=|v|^2$.
\end{remark}

The steps in the algorithm above is in discrete-time. However, a continuous-time limit (setting $h \rightarrow 0$) can also be found, which we see later is more suitable for the quantum simulation of PDEs. To derive the continuous-time limit, it is first necessary to ensure that the following linear iterative process is stable 
\begin{align}
    \frac{u^{n+1}(x)-u^n(x)}{h}=\frac{1}{h}(L-\mathbf{1})u^n(x).
\end{align}
This is guaranteed since the operator $L-\mathbf{1}$ has only non-positive spectra (using $\|L[u]\|\leq \|u\|$, Propositions 19 and 20 \cite{gomes2007entropy}). One can then take the continuous-time limit of the discrete scheme to obtain a linear heat-like parabolic differential equation for $u(t,x)=\lim_{h\to 0} u^n(x)$, 
while $S_\nu=\lim_{h\to 0} S_{\nu}$ is the solution to the corresponding viscous Hamilton-Jacobi PDE in Eq.~\eqref{eq:hjviscosity0}, as shown by the next lemma.

\begin{lemma}\cite{gomes2007entropy} \label{lem:generalhj}
   Let $u(t, x)$ solve the following linear parabolic differential equation
   \begin{align} \label{eq:heatcontinuous}
      &  \frac{\partial u(t, x)}{\partial t}=\sum_{i=1}^d \mu_i \frac{\partial u(t,x)}{\partial x_i}+\sum_{j,k=1}^d \nu_{jk} \frac{\partial^2 u(t,x)}{\partial x_j \partial x_k}, \qquad u(0,x)=\frac{e^{-S(0,x)/(2\nu)}}{\mathcal{N}_0}, \qquad \mathcal{N}^2_0=\int e^{-S(0,x)/\nu }dx,\nonumber \\
      & \mu_i=v_i^*, \qquad \nu_{jk}=v^*_j v^*_k+2\nu (D^{-1})_{jk},  \qquad v^*=\text{argmin}_v K(v), \qquad D_{jk}=\partial^2 K(v^*)/\partial v_j \partial v_k,
   \end{align}
   where the drift coefficient $\mu_i$ and the diffusion coeffients $\nu_{ij}$ are all independent of $x$ and $D$ is a matrix with elements $D_{jk}$.  
   Let $S_{\nu}(t,x)$ be the solution of the viscous Hamilton-Jacobi equation in Eq.~\eqref{eq:hjviscosity0} with
   \begin{align} \label{eq:hamgeneral}
     \mathcal{H}(x, \nabla S_{\nu}(t, x))=\sup_v(-p \cdot v-K(v)+V(x)),
   \end{align}
    (assuming $K(v)$ strictly convex in $v$ and superlinear at infinity, $V(x)$ semiconvex and bounded). Then we can approximate $S(t,x)$, the viscosity solution to the Hamilton-Jacobi PDE in Eq.~\eqref{hj0}, by $S_\nu(t,x)$.
   
\end{lemma}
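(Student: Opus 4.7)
My plan is to take the continuum limit $h\to 0$ of the normalized discrete scheme \eqref{N-linear-scheme} to obtain the linear parabolic equation \eqref{eq:heatcontinuous}, and then verify via Cole-Hopf that the associated $S_\nu = -2\nu\ln u$ solves the viscous Hamilton-Jacobi equation \eqref{eq:hjviscosity0}. The lemma's final claim — that $S_\nu$ approximates the viscosity solution $S$ of \eqref{hj0} — then follows immediately from Lemma~\ref{lem:viscosityerror} by taking the vanishing-viscosity limit $\nu\to 0$.

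First I would Taylor expand $u^n(x+hv)$ around $x$ to second order and integrate against $P(v)$, producing
\begin{equation*}
\frac{u^{n+1}(x)-u^n(x)}{h} \;=\; \langle v\rangle_P\cdot \nabla u^n \;+\; \frac{h}{2}\sum_{j,k}\langle v_j v_k\rangle_P\,\partial_j\partial_k u^n \;+\; (\text{Taylor remainder}),
\end{equation*}
where $\langle\,\cdot\,\rangle_P$ denotes expectation under the kernel $P$. The problem thus reduces to evaluating the first two moments of $P$ and controlling the remainder. Since $K$ is strictly convex with unique minimizer $v^*$ and positive-definite Hessian $D=\nabla^2 K(v^*)$, the rescaling $v = v^* + \sqrt{2\nu/h}\,w$ turns $P(v)\,dv$ into a standard Gaussian in $w$ to leading order, yielding $\langle v\rangle_P = v^* + O(h)$ and $\langle v_j v_k\rangle_P = v^*_j v^*_k + (2\nu/h)(D^{-1})_{jk} + O(1)$, which identifies the claimed drift $\mu_i=v^*_i$ and diffusion coefficients $\nu_{jk}$. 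The $k$-th Taylor term scales as $h^k(2\nu/h)^{k/2}=(2\nu h)^{k/2}$ after the same rescaling, so all $k\ge 3$ remainders vanish as $h\to 0$, giving consistency with \eqref{eq:heatcontinuous}. The $V(x)$-dependent prefactor $e^{hV(x)/(2\nu)}$ coming from the full Lagrangian kernel $e^{-h\mathcal{L}(x,v)/(2\nu)}$ in \eqref{linear-scheme} contributes a zeroth-order reaction term in the same limit.

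Second, I would apply the Cole-Hopf identity $S_\nu=-2\nu\ln u$ directly to the limit PDE. Using $\partial_t S_\nu = -(2\nu/u)\partial_t u$ together with $\partial_j\partial_k S_\nu = -(2\nu/u)\partial_j\partial_k u + (2\nu/u^2)\partial_j u\,\partial_k u$, the second-order term on $u$ splits into a linear viscosity on $S_\nu$ plus a bilinear-in-gradient piece weighted by the diffusion matrix. This gradient piece, combined with the drift and reaction contributions, has to reassemble exactly into the Legendre transform of $\mathcal{L}(x,v)=K(v)-V(x)$ evaluated at $p=\nabla S_\nu$, producing $\mathcal{H}(x,\nabla S_\nu)$ as in \eqref{eq:hamgeneral}. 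Once this algebraic identification is made, $S_\nu$ solves \eqref{eq:hjviscosity0} and Lemma~\ref{lem:viscosityerror} delivers the stated approximation of $S$.

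The hardest step will be the algebraic matching at the Cole-Hopf stage: when $v^*\ne 0$, the clean identity $|\nabla S_\nu|^2 = 4\nu^2|\nabla u|^2/u^2$ of the quadratic-kinetic case is replaced by a bilinear form dictated by $D^{-1}$ and $v^*$, and the drift, diffusion, and reaction pieces must be shown to combine precisely into $\sup_v(p\cdot v - K(v)+V(x))$ rather than merely a quadratic approximation of it. Equally delicate is justifying the exchange of the $h\to 0$ limit with the integration against $P$: the Taylor remainder bound above is natural, but it requires uniform control of the derivatives of $u^n$, which I would obtain from the contractivity $\|L[u]\|\le\|u\|$ noted after \eqref{N-linear-scheme} together with the regularity properties used in \cite{gomes2007entropy}.
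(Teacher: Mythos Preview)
The paper does not prove this lemma; it simply attributes it to \cite{gomes2007entropy} and moves on. So there is no ``paper's proof'' to compare against, only the surrounding discussion (the paragraph before the lemma, Remark~\ref{rem:scerror}, and Lemma~\ref{prop-24} in the appendix). That surrounding discussion, however, reveals that the picture you have in mind is not quite the one the authors intend, and your own final paragraph already puts a finger on the genuine gap.

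Your plan hinges on showing, via Cole--Hopf, that $S_\nu=-2\nu\ln u$ \emph{exactly} solves the viscous Hamilton--Jacobi equation \eqref{eq:hjviscosity0} with the general convex Hamiltonian \eqref{eq:hamgeneral}. This cannot work for non-quadratic $K$. Applying Cole--Hopf to the linear parabolic PDE \eqref{eq:heatcontinuous} produces a PDE for $S_\nu$ whose nonlinearity is the bilinear form $\frac{1}{2\nu}\sum_{jk}\nu_{jk}\,\partial_j S_\nu\,\partial_k S_\nu$ plus a drift term --- that is, a Hamiltonian quadratic in $p=\nabla S_\nu$. There is no algebra that will turn this into $\sup_v(-p\cdot v-K(v)+V(x))$ for a general strictly convex $K$; at best you recover the second-order Taylor expansion of $K$ about $v^*$. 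Note also that \eqref{eq:heatcontinuous} as written carries no zeroth-order reaction term, so the $V(x)$ contribution you anticipate from $e^{hV(x)/(2\nu)}$ is not present in the target equation either.

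The resolution, implicit in the paper's discussion and explicit in \cite{gomes2007entropy}, is that the Cole--Hopf identity \eqref{Cole-Hopf} is exact only at the \emph{discrete} level, between $S_D^n$ and $\tilde u^n$. One shows (i) the nonlinear discrete scheme $S_D^{n+1}=G[S_D^n]$ is consistent with the viscous Hamilton--Jacobi equation up to the errors in Lemma~\ref{prop-24}, and (ii) the linear discrete scheme $u^{n+1}=L[u^n]$ is consistent with the linear parabolic PDE \eqref{eq:heatcontinuous} up to $O(h^2)$ (this is the content of Remark~\ref{rem:scerror}, citing Proposition~23 of \cite{gomes2007entropy}). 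The function $-2\nu\ln u(t,x)$ is therefore only an \emph{approximation} to the viscous HJ solution $S_\nu$, with the discrepancy controlled by the scheme errors, not an exact Cole--Hopf transform of it. The lemma's final sentence --- that $S_\nu$ approximates the viscosity solution $S$ --- is then just Lemma~\ref{lem:viscosityerror}, as you say. So your step~1 (moment computation for the continuum limit) is correct and matches the paper; your step~2 should be replaced by the two discrete-consistency arguments above rather than a continuous-level Cole--Hopf identity that does not hold.
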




To obtain the values $\mu_i$ and $\nu_{ij}$ we can assume a form of $K(v)$ where the minimum is simple to extract using classical processing. It is also possible to devise quantum algorithms to obtain $\mu_i$ and $\nu_{ij}$, but we will not examine in detail these settings here.

Next we summarize the 
        continuous-time  version of Algorithm~\ref{Alg-1}, which is what we will use later for the quantum simulation of $u(t,x)$, from which we will estimate observables of $S(t,x)$.

\begin{algorithm}[H]
		\caption{Continuous-time  version of Algorithm~\ref{Alg-1}.}
		\label{Alg-2}
\begin{enumerate}
\item Input: $u(0,x)=u_0(x)=e^{-S_0(x)/(2\nu)}$
\item Solve for $u(t,x)$ using the linear parabolic PDE in Eq.~\eqref{eq:heatcontinuous} (using the choice $\mathcal{N}_0=1$) in Lemma~\ref{lem:generalhj};
\item Apply the Cole-Hopf transformation to $u(t,x)$ to obtain $S_{\nu}(t,x)=-2\nu \ln  u(t,x)$.
\end{enumerate}
\end{algorithm}

\begin{remark}\label{rem:scerror} By Proposition 23 in \cite{gomes2007entropy},
\begin{align}
  u^n - u =O(h^2\|D^3 u^n\|_{L^\infty})=O( d^2h^2), \nonumber 
\end{align}
where we used $\|D^3 u^n\|_{L^\infty}=O(d^2)$. 
 This is the extra error introduced on top of the previous error of the discrete scheme. Since $O(d^2h^2)$ is much smaller than the terms in \eqref{error-1}, the final error for this continuous time scheme Algorithm \ref{Alg-2} is the same order as the discrete scheme stated in Lemma
 \ref{lem:gomeserror}.
\end{remark}

\section{Quantum simulation of the corresponding linear parabolic PDEs}
\label{sec:quantumsimulation}






The general linear parabolic PDE in $d$ spatial dimensions ($x=(x_1, \cdots, x_d)\in \mathbb{R}^d$) that encompasses both Eq.~\eqref{eq:heat1} and Eq.~\eqref{eq:heatcontinuous} can be written as 
\begin{align}\label{eq:generalpde}
 &  \frac{\partial u(t,x)}{\partial t}=a(t,x)u+\sum_{j=1}^d b_j \frac{\partial u(t,x)}{\partial x_j}+\sum_{j,k=1}^d c_{jk}\frac{\partial^2 u(t,x)}{\partial x_j \partial x_k}, \qquad u(0,x)=u_0(x), \qquad t\geq 0. 
    \end{align}
When the Hamilton-Jacobi equation is given by Eq.~\eqref{eq:hjviscosity1} (quadratic $\mathcal{H}$), then we want to simulate Eq.~\eqref{eq:generalpde} for  $a(t,x)=V(t,x)/(2\nu)$, $b_j=0$ and $c_{jk}=\nu \delta_{jk}$. When we look at the more general $\mathcal{H}$ in Eq.~\eqref{eq:hamgeneral}, then $a(t,x)=0$, $b_j=\mu_i$, $c_{jk}=\nu_{jk}$ where $\mu_i$ and $\nu_{jk}$ are constants given in Lemma~\ref{lem:generalhj}.\\

The goal of quantum simulation is to prepare quantum states $|u(t)\rangle$ whose amplitudes encode the solution $u(t,x)$. Then in Section~\ref{sec:observables} we describe how to estimate observables of $S(t,x)$ once we can prepare $|u(t)\rangle$. The state $|u(t)\rangle$ can come in two forms: continuous-variable and discrete-variable, depending on whether we choose to keep $x$ continuous or we choose to discretise $x$. If we keep the simulation continuous in space as well as time, we call this analog quantum simulation. If it is discrete in space and time, we call this digital quantum simulation. \\

Below we summarise a simple method -- called Schr\"odingerisation -- which is the only currently known method that can enable quantum simulation of linear PDEs in Eq.~\eqref{eq:generalpde} in both the fully continuous-space, continuous-time language. It can also be used in the discrete-variable language (i.e., simulation on qubit-based systems), where the query complexity of the algorithm can also achieve optimality \cite{optimalschr}. The Schr\"odingerisation protocol also gives a very accessible way of extracting normalisation constants that does not rely on any extra algorithms requiring access to special oracles for state preparation or the linear combination of unitaries method, for instance in \cite{wang2017efficient, chakraborty2018power}. We will make use of this normalisation estimator later on for estimating the minimum value of the viscosity solution of the Hamilton-Jacobi equation later on, for example. The reader can refer to \cite{2023analogPDE, schrprl, schrpra} for more details and justification. \\

However, we note that other methods for simulating $|u(t)\rangle$ using digital quantum simulation can also be used, and the success of the overall methodology for estimating the observables of the viscosity solution of Hamilton-Jacobi equation will still carry through. \\

In the rest of this paper, we will drop the $CV$, $DV$ subscripts whenever it is clear from context whether we are using continuous-variables or discrete-variables. \\

\subsection{Analog quantum simulation}
In the continuous-variable language (i.e. $x$ is continuous and not discretised), each $u(t,x)$ function can be embedded into a normalised  infinite-dimensional vector which we call a continuous-variable quantum state $|u(t)\rangle_{CV}$
\begin{align} \label{eq:ukrep}
    |u(t)\rangle_{CV}=\frac{\mathbf{u}(t)}{\|\mathbf{u}(t)\|}, \qquad \mathbf{u}(t)=\int u(t,x)|x\rangle dx, \quad \|\mathbf{u}(t)\|^2= \int |u(t,x)|^2 dx,  
\end{align}
where $\int=\int_{-\infty}^{\infty}$ unless otherwise specified. It is the aim of the quantum algorithm to prepare the state $|u(t)\rangle$. 

Here the eigenbasis $\{|x\rangle\}_{x \in \mathbb{R}^d}$ over the field of complex numbers spans an infinite-dimensional complex Hilbert space, and describes states consisting of $d$ \textit{qumodes}. Qumodes are quantum analogue of a continuous classical degree of freedom, like position, momentum or energy before being quantised.  A qumode is equipped with observables with a continuous spectrum, such as
the position $\hat{x}$ and momentum $\hat{p}$ observables of a quantum particle, also known as quadratures, where $[\hat{x}, \hat{p}]=i\mathbf{1}_x$. Here their eigenvectors are denoted respectively by $|x\rangle$ and $|p\rangle$ where $\langle x|p\rangle=\exp(ixp)/\sqrt{2\pi}$ and $\int dx |x\rangle \langle x|=\mathbf{1}_x=\int dp |p\rangle \langle p|$. The $|x\rangle$ and $|p\rangle$ eigenstates are known as the position and momentum eigenstates. 
For a system of $d$-qumodes, we can define the position/momentum operator only acting on the $j^{\text{th}}$ mode as 
 \begin{align}
    \hat{p}_j=\mathbf{1}_x^{\otimes j-1}\otimes \hat{p} \otimes \mathbf{1}_x^{\otimes d-j}, \qquad \hat{x}_j=\mathbf{1}_x^{\otimes j-1}\otimes \hat{x} \otimes \mathbf{1}_x^{\otimes d-j}, \qquad [\hat{x}_j,\hat{p}_k]=i\delta_{jk}\mathbf{1}_x.
    \end{align}

 The PDE in Eq.~\eqref{eq:generalpde} can be rewritten as 
    \begin{align} \label{eq:generalA}
        \frac{d\mathbf{u}(t)}{dt}=-i\mathbf{A}(t)\mathbf{u}(t), \quad \mathbf{A}(t)=ia(t, \hat{x})+i\sum_{j=1}^db_j \hat{p}_j-\sum_{j,k=1}^d c_{jk} \hat{p}_j \hat{p}_k. 
    \end{align}
   We summarise the algorithm in Algorithm~\ref{Alg-CVsimulation} (see \cite{2023analogPDE} for more details). One can also use an alternative a Jaynes-Cummings-like Hamiltonian in Algorithm~\ref{Alg-CVsimulation}, where  Hamiltonian can be more accessible in analog quantum systems for certain platforms like circuit QED systems, for example see \cite{hyperbolic2024}. 
    
\begin{algorithm}[H]
		\caption{The 
        continuous-time Schr\"odingerisation algorithm  to prepare $|u(t)\rangle_{CV}$ and also to estimate the normalisation constant $\|\mathbf{u}(t)\|$, see Figure~\ref{fig: Normalisation}.}
		\label{Alg-CVsimulation}
\begin{enumerate}

\item    Input:  (a) $d$ qumode state $|u_0\rangle$; (b)  $1$ qumode ancilla state $|\Xi\rangle=\int e^{-|\xi|}|\xi\rangle d\xi=\int 2/(1+\eta^2)|\eta\rangle d \eta$; (c) quantum Hamiltonian $\mathbf{H}(t)=\mathbf{A}_2(t)\otimes \hat{\eta}+\mathbf{A}_1(t)\otimes \mathbf{1}_{\eta}$, $\mathbf{A}_1(t)=(1/2)(\mathbf{A}(t)+\mathbf{A}^{\dagger}(t))$, $\mathbf{A}_2(t)=(i/2)(\mathbf{A}(t)-\mathbf{A}^{\dagger}(t))$, where $[\hat{\xi}, \hat{\eta}]=i\mathbf{1}_{\eta}$.

    \item Given oracle access to $\mathbf{H}(t)$, where $d\mathbf{v}(t)/dt=-i\mathbf{H}(t)\mathbf{v}(t)$, beginning with initial state $|u_0\rangle|\Xi\rangle$, Prepare $|v(t)\rangle$ with quantum simulation  
    
    \item  Output: $|u(t)\rangle$ when $\mathbf{A}_2<0$

    Measure $|v(t)\rangle$ from step (2) using imperfect measurement operator  $\mathbf{1} \otimes \hat{\Pi}_{\text{imp}}=\int_0^{\infty} g(\xi)|\xi\rangle \langle \xi| d\xi$ (where $g(\xi)$ models the imperfection in the detector and $g(\xi)=1$ for perfect projective measurement).  
    
    Flag success -- output  $|u(t)\rangle$  --  when ancilla mode reads $\xi>0$, otherwise measure again. The probability of success is $p_{succ}=(\int_0^{\infty} e^{-\xi}g(\xi)d\xi\|\mathbf{u}(t)\|/\|\mathbf{u}(0)\|)^2$
    \item Output: $\|\mathbf{u}(t)\|$ when $\mathbf{A}_2<0$

     $p_{succ}$ estimated from step (3). Then $\|\mathbf{u}(t)\|=\sqrt{p_{succ}}\|\mathbf{u}(0)\|/\int_0^{\infty} e^{-\xi}g(\xi)d\xi$
\end{enumerate}
\end{algorithm}

\begin{figure}[h] 
\centering
\includegraphics[width=12cm]{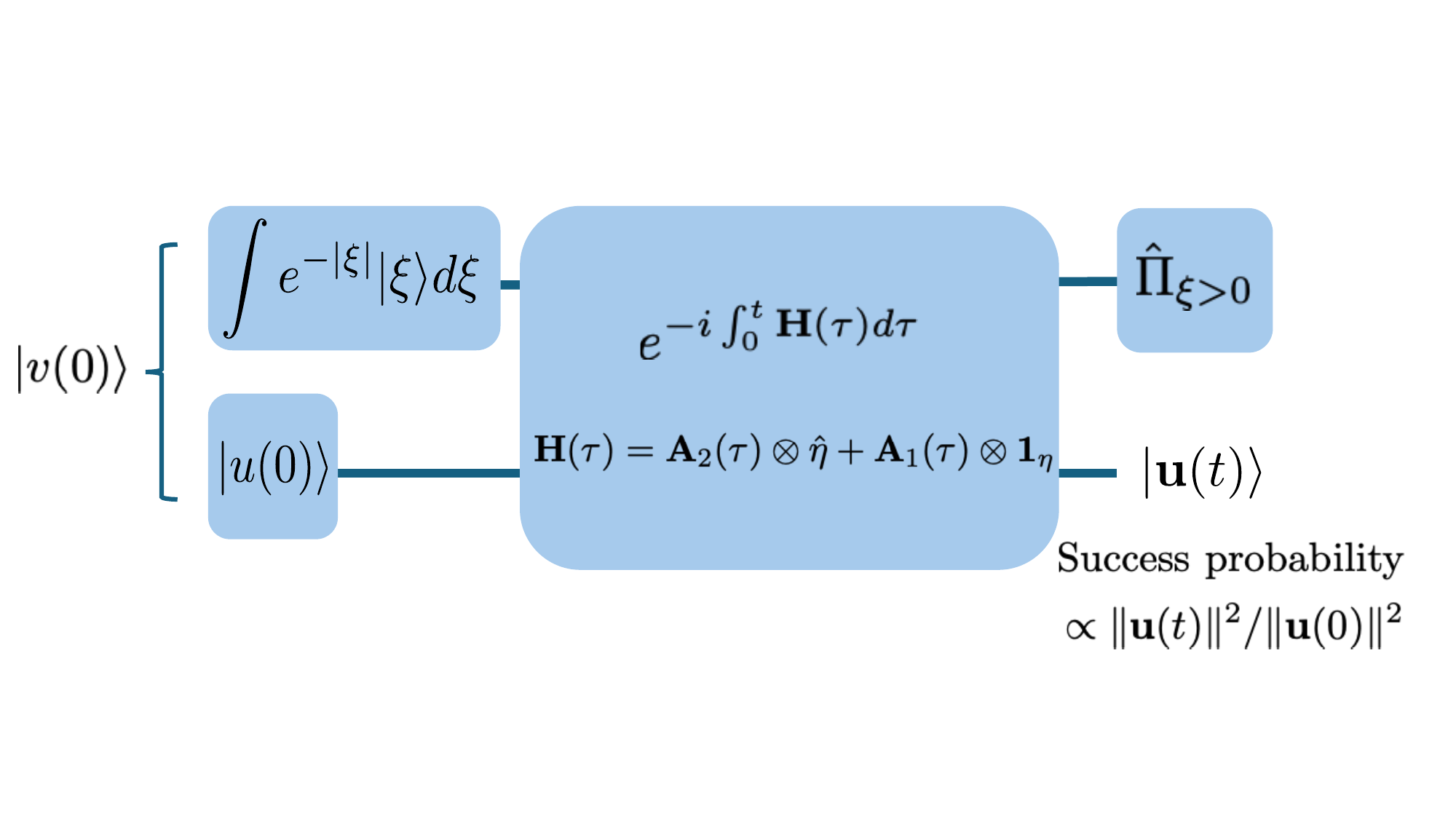} 

\caption{\justifying This is the quantum algorithm for simulating $|u(t)\rangle$ using Schr\"odingerisation, and it is also an algorithm for estimating the normalisation constant $\|\mathbf{u}(t)\|$. This normalisation factor is estimated by measuring the probability when $\xi>0$ is detected for the ancilla state (flag for success). Here the algorithm is expressed in the language of analog quantum simulation for simplicity, and it can easily be extended to the discrete-variable setting by discretising all the states and operators, and in the digital case amplitude amplification can be employed to quadratically boost the success probability.}\label{fig: Normalisation}
\end{figure}

\subsection{Digital quantum simulation}

If we instead embed the solution of our problem into a discrete-variable quantum state $|u(t)\rangle_{DV}$, then we need to first decompose $x \in [-1/2, 1/2]^d$ into a grid of size $N_x$ in each dimension, and consider the cost of simulating the state of $d \ln N_x$ qubits (or a $N_x^d$-dimensional qudit), that embeds the solution of Eq.~\eqref{eq:generalpde}  
\begin{align} \label{eq:digitalu1}
    & |u(t)\rangle_{DV}=\frac{\mathbf{u}(t)}{\|\mathbf{u}(t)\|}, \qquad \mathbf{u}(t,x_j=j/N_x)=\sum_{j=-N_x^d/2}^{N^d_x/2} u_j |j\rangle, \qquad u_j=u(t, j/N_x),\nonumber \\
    & \|\mathbf{u}(t)\|^2=\sum_{j=-N_x^d}^{N^d_x} |u(t, j/N_x)|^2,  \qquad j=j_1\cdots j_d \in \{-N_x/2, N_x/2\}^d.
\end{align}
Each $j_k$, $k=1, \cdots d$ can be decomposed into its binary decimal representation $j_k=j_k^{(1)} \cdots j_k^{(\ln N_x)}$, where each $j_k^{(l)} \in \{0, 1\}$.\\

In Algorithm~\ref{Alg-CVsimulation}, we assumed that all the eigenvalues of $\mathbf{A}_2$ are negative, which means the original PDE is well-posed, which is true for the initial value problems that we consider. The more general case where some eigenvalues of $\mathbf{A}_2$ can be positive can also be tackled. Here Schr\"odingerisation can be used in the exactly the same way along with the same Hamiltonian simulation step for $|v(t)\rangle$, except with a different post-selection process in the final projective measurement. For well-posed problems where all eigenvalues of $\mathbf{A}_2$ is negative, it is sufficient to choose any $\xi^*>0$ and the state becomes a product $\mathbf{v}(t,x)=e^{-\xi^*} \mathbf{u}(t,x)$, so we can retrieve $\mathbf{u}(t)$ easily. However, when $\mathbf{A}_2$ has any positive eigenvalues, we choose $\mathbf{v}(t,x)=e^{-\xi^*} \mathbf{u}(t,x)$ when $\xi^*> \max\{\lambda(\mathbf{A}_2)t, 0\}$ where $\lambda(\mathbf{A}_2)$ represents the eigenvalues of $\mathbf{A}_2$. Then the final projective measurement process if $\Pi_{\xi>\xi^*}$. A prominent example is the backward heat equation, where we have terminal condition instead of an initial condition. We include this case in step (4) in Algorithm~\ref{Alg-DVsimulation} since in the discrete case certain boundary conditions can also give rise to some positive eigenvalues of $\mathbf{A}_2$ even when the PDE itself is well-posed. However, in this paper we do not encounter these problems as we are not concerned about explicit gate constructions for this simulation step and only provide complexity scalings. For more details on the procedure and the justification, see \cite{illposed, Inhomo}.\\

 The analog quantum algorithm in Algorithm~\ref{Alg-CVsimulation} can be very easily extended to a digital quantum algorithm in Algorithm~\ref{Alg-DVsimulation}.  In this case, we  also discretise the $\xi$ and $\eta$ modes into $N_{\xi}$ and $N_{\eta}$ segments respectively. The optimal resource cost is stated in Lemma~\ref{lem:sim}. 
Define
\[ \mathbf{H}_{DV}(t)=\mathbf{A}_{2,DV}(t)\otimes \hat{D}+\mathbf{A}_{1,DV} (t)\otimes I
\]
which is a matrix of size $N_{\eta}N_x^d \times N_{\eta} N_x^d$ that acts on a system of $\ln N_{\eta}+d\ln N_x$ qubits. One can obtain $\mathbf{A}_{1,DV}$ and $\mathbf{A}_{2,DV}$ by discretising the spatial components of $\mathbf{A}_1, \mathbf{A}_2$, for example. In this case replace $\hat{p} \rightarrow \hat{P}$, $\hat{x} \rightarrow{X}$, where $\hat{P}_j$ is a sparse shift matrix and $\hat{X}=\text{diag}(-1/2N_x, \cdots, 1/2N_x)$ for the one-dimensional case, and $\hat{D}=\text{diag}(-1/2N_{\eta}, \cdots, 1/2N_{\eta})$. We summarize the algorithm below. 

 \begin{algorithm}[H]
		\caption{The 
        discrete-variable (digital) quantum algorithm to prepare $|u(t)\rangle_{DV}$  (Schr\"odingerisation) and also to estimate the normalisation constant $\|\mathbf{u}(t)\|_{DV}$.}
		\label{Alg-DVsimulation}
 \begin{enumerate}
 \item Input: discrete-variable counterparts $|u_0\rangle_{DV}$ and $|\Xi\rangle_{DV}$ or smoothed initial ancilla qubit state (see \cite{optimalschr}); oracle access to Hamiltonian  $\mathbf{H}_{DV}(t)$;
 
 \item Hamiltonian simulation for time $t$ applied to state $|v(0)\rangle=|u_0\rangle_{DV} |\Xi\rangle_{DV}$, which evolves with respect to the unitary generated by $\mathbf{H}_{DV}(t)$, to obtain $|v(t)\rangle_{DV}$
 \item Output $|u(t)\rangle$ and $\|\mathbf{u}(t)\|$ when $\mathbf{A}_2<0$. 

Apply projective operator $\hat{\Pi}_{\xi>0}$ on $\ln N_{\xi}$ ancilla qubits (or $N_{\xi}$-dimensional ancilla qudit). Flag success -- output $|u(t)\rangle$ -- when $\xi>0$ and success probability $p_{succ}=\|\mathbf{u}(t)\|/\|\mathbf{u}(0)\|$ (amplitude amplification used for quadratic boosting). 

Output normalisation  $\|\mathbf{u}(t)\|=p_{succ}\|\mathbf{u}(0)\|$.
 
\item Output $|u(t)\rangle$ and $\|\mathbf{u}(t)\|$ when some eigenvalues of $\mathbf{A}_2$ are positive  
    
    Repeat step (3) using the measurement operator $\hat{\Pi}_{\xi>\xi^*}$, where success is flagged when $\xi>\xi^*$ and $\xi^*> \max\{\lambda(\mathbf{A}_2)t, 0\}$. Here $\lambda(\mathbf{A}_2)$ represents the eigenvalues of $\mathbf{A}_2$. 
\end{enumerate}
\end{algorithm}

Suppose we solve the Schr\"odingerized linear heat equation with forward Euler in time, center difference in space, and spectral method in $\eta$, and choose the optimal initial data as in  \cite{optimalschr}. This will introduce an error of $O(h+ d (\Delta x)^2)$ (the error from the discretization in $\eta$ is ignored since it is much smaller when one uses spectral method with smooth data).   Numerical stability condition requires $h=O((\Delta x)^2/d)$ (the above order could  be divided by $\nu$ when $L(v)=|v|^2$, since $c_{jk}=O(\nu)$, but here when we treat the general case we don't seek the sharp estimate, which can be improved when one works on specific examples). So the overall error is of $O(d(\Delta x)^2)$. Combining this with the continuous error results in Lemma \ref{lem:gomeserror}, we have the following error estimates by classical numerical analysis:

\begin{lemma}\label{error-dis}
Let $S_{\nu,\Delta}=-2\nu \ln |u\rangle$   where $|u\rangle$ is the discrete solution obtained by Algorithm \ref{Alg-DVsimulation}. Then,

  \begin{align} \label{error-ad}
   & S^n_{\nu,\Delta}(x) -S(t^n, x)=O\left( h^2d^3/\nu+ hd^2+(h/\nu)^{3/2} d^2+d(\Delta x)^2+ d\nu^{1/2} \right),\\
   \label{error-cd}
  & S^n_{\nu,\Delta}(x) -S_\nu(t^n, x)=O\left( h^2d^3/\nu+ hd^2+(h/\nu)^{3/2} d^2+ d(\Delta x)^2 \right),\\
  \label{error-bd}
 & \|\nabla S^n_{\nu,\Delta}(x) -\nabla S_\nu(t^n, x)\|_{l_2(\mathbb{T}^n)}=O\left( h^{3/2}d^{3}/\nu+ h^{1/2}d^2 +h d^{2}/\nu^{3/2}+ d^{3/2}\Delta x \right). 
    \end{align}
    To achieve a precision of $\epsilon$, namely,
    \begin{align} \label{precisiondelta}
    S^n_{\nu,\Delta} (x) -S(t^n, x) =O(\epsilon), \quad S^n_{\nu,\Delta}(x) -S_\nu(t^n, x)=O(\epsilon),
    \end{align}
    one needs to choose
    \begin{align}\label {requirement-aa}
    \nu =O\left( (\epsilon/d)^2\right), \quad \Delta x=O((\epsilon/d)^{1/2}), \quad h=O\left(\epsilon^{8/3}/d^{10/3} \right).
    \end{align}
     Furthermore, to achieve a precision of $\epsilon$ for $\nabla S_{\nu,\Delta}^n$, namely,
    \begin{align} \label{precisiondeltagrad}
   \| \nabla S^n_{\nu, \Delta}(x) -\nabla S_{\nu}(t^n, x)\|_{l_2(\mathbb{T}^n)} =O(\epsilon),
    \end{align}
     one needs to choose
    \begin{align}\label {requirement-bb}
    \Delta x=O(\epsilon/d^{3/2}), \quad h=O\left(\epsilon^4/d^{5} \right).
    \end{align}

\end{lemma}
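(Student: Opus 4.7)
The plan is to split the total error by the triangle inequality into three physically distinct contributions: the vanishing-viscosity error $|S_\nu-S|$, the entropy-penalisation error introduced by the linear-formulation step at the continuous PDE level, and the spatial/temporal discretisation error introduced by Algorithm~\ref{Alg-DVsimulation}. Contribution (i) is supplied by Lemma~\ref{lem:viscosityerror} with $\beta=1/2$, yielding the $d\nu^{1/2}$ term present in Eq.~\eqref{error-ad} but absent from Eq.~\eqref{error-cd}. Contribution (ii) is already given by Lemma~\ref{lem:gomeserror} combined with Remark~\ref{rem:scerror}: the continuous-time scheme Algorithm~\ref{Alg-2} inherits the same $O(h^2 d^3/\nu+hd^2+(h/\nu)^{3/2}d^2)$ bound as the discrete-time scheme, since the extra $O(d^2 h^2)$ error from the continuous-time limit is dominated by those terms.

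For contribution (iii) I would work classically on the linear parabolic PDE Eq.~\eqref{eq:heatcontinuous}. Let $u$ be the exact solution and $u_\Delta$ the grid function produced by forward Euler in $t$, centered differences in $x$, and a spectral treatment of $\eta$. A consistency-plus-stability (Lax-equivalence) argument, using the CFL-type condition $h=O((\Delta x)^2/d)$ which makes the centered-difference Laplacian dissipative in each of the $d$ spatial directions, yields local truncation error $O(h+d(\Delta x)^2)$ and hence, via discrete Gronwall, $\|u-u_\Delta\|_\infty=O(h+d(\Delta x)^2)$; the spectral error in $\eta$ is higher-order for smooth data and can be dropped. Pulling this back through the Cole--Hopf map $S_{\nu,\Delta}=-2\nu\ln u_\Delta$ and linearising $-2\nu\ln(u_\Delta/u)\approx -2\nu(u_\Delta-u)/u$ produces the $d(\Delta x)^2$ term appearing in Eq.~\eqref{error-cd} and Eq.~\eqref{error-ad}. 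For the gradient estimate Eq.~\eqref{error-bd} I would differentiate the Cole--Hopf identity $\nabla S_\nu=-2\nu\nabla u/u$; the discrete gradient of $u_\Delta$ has pointwise truncation error $O(\Delta x)$, which becomes $d^{3/2}\Delta x$ in $l_2(\mathbb{T}^d)$ after summing over the $d$ components, while the three $h$-dependent terms come from re-running the entropy-penalisation estimate of Lemma~\ref{lem:gomeserror} in the gradient variable, with the $d\nu^{1/2}$ piece suppressed since it belongs to $|\nabla S_\nu-\nabla S|$ rather than $|\nabla S_{\nu,\Delta}-\nabla S_\nu|$. The parameter choices Eq.~\eqref{requirement-aa} and Eq.~\eqref{requirement-bb} then follow by equating each term in Eq.~\eqref{error-ad}, respectively Eq.~\eqref{error-bd}, to $\epsilon$ and solving the resulting algebraic system in $(\nu,h,\Delta x)$.

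The delicate step is propagating the discretisation error through the Cole--Hopf logarithm without double-counting the amplification already built into Lemma~\ref{lem:gomeserror}. Because $u\sim\exp(-S_\nu/(2\nu))$, the classical spatial derivatives of $u$ carry factors $(1/\nu)^k$, so a naive application of truncation-error bounds for finite differences inflates the constants in $\nu$; one must check that the $2\nu$ prefactor in the log exactly cancels this blow-up to produce the $\nu$-free $d(\Delta x)^2$ term in Eq.~\eqref{error-cd} and the correct $\nu$-dependence in Eq.~\eqref{error-bd}. The cleanest way to do this, and the route I would follow, is to mimic the approach of Lemma~\ref{lem:gomeserror} and perform the discrete error analysis directly in the $S$-variable via the Hopf--Lax-type fixed point Eq.~\eqref{eq:sndef}, rather than in the $u$-variable, so that the log amplification is never introduced in the first place.
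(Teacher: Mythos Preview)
Your treatment of Eq.~\eqref{error-ad} and Eq.~\eqref{error-cd} matches the paper: split into vanishing-viscosity, entropy-penalisation, and discretisation contributions, invoke Lemma~\ref{lem:gomeserror} together with Remark~\ref{rem:scerror}, and add the $O(d(\Delta x)^2)$ numerical error under the CFL constraint $h=O((\Delta x)^2/d)$. The parameter choices in Eq.~\eqref{requirement-aa} and Eq.~\eqref{requirement-bb} then follow by balancing each term against $\epsilon$, exactly as you say.

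The gap is in your derivation of Eq.~\eqref{error-bd}. You propose that the three $h$-dependent terms ``come from re-running the entropy-penalisation estimate of Lemma~\ref{lem:gomeserror} in the gradient variable''. But the gradient bound actually recorded in Lemma~\ref{lem:gomeserror}, Eq.~\eqref{error-a}, has $h$-terms $h^2 d^{7/2}/\nu^2 + h d^{5/2}/\nu + h^{3/2} d^{5/2}/\nu^{5/2}$, obtained there by dividing the function-level bound by $\nu$ and multiplying by $d^{1/2}$. These do \emph{not} coincide with the $h^{3/2}d^3/\nu + h^{1/2}d^2 + h d^2/\nu^{3/2}$ appearing in Eq.~\eqref{error-bd}, so your route does not reproduce the stated lemma.

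The paper obtains Eq.~\eqref{error-bd} from Eq.~\eqref{error-cd} by a single mechanical rule applied uniformly to every term: a discrete spatial derivative is one order less accurate in $\Delta x$ than the function itself, and under the stability condition one has $\Delta x = O((dh)^{1/2})$. Thus each term of Eq.~\eqref{error-cd} is divided by $(dh)^{1/2}$ (the last term by $\Delta x$ directly), and then multiplied by $d^{1/2}$ for the passage from the maximum norm to the $l_2(\mathbb{T}^d)$ norm. You can check term-by-term that $h^2d^3/\nu \mapsto h^{3/2}d^3/\nu$, $hd^2 \mapsto h^{1/2}d^2$, $(h/\nu)^{3/2}d^2 \mapsto hd^2/\nu^{3/2}$, and $d(\Delta x)^2 \mapsto d^{3/2}\Delta x$. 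The key point you missed is that the $h$-terms in Eq.~\eqref{error-bd} are \emph{not} an independent gradient-level entropy-penalisation estimate; they are the function-level errors of Eq.~\eqref{error-cd} degraded by discrete differentiation, with the CFL link converting the lost factor of $\Delta x$ into the stated powers of $h$ and $d$.
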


\begin{proof} 
The approximation of spatial derivative is one order less accurate in $\Delta x=O((dh)^{1/2})$ than the approximation of the function. In addition, from maximum norm to $l_2$ norm increases the error by a factor of $d^{1/2}$. This is how we get \eqref{error-bd} from \eqref{error-cd}. \eqref{requirement-aa} is obtained by requiring each term on the right hand side of \eqref{error-ad} to be of $O(\epsilon)$. Likewise for \eqref{requirement-bb}.
\end{proof}

\begin{lemma} \label{lem:sim} \cite{schrprl, schrpra, optimalschr} Let the discretisation mesh size in $\eta$ be $\Delta \eta=1/N_{\eta}=O(\mu)$. Then the Schr\"odingerisation algorithm can prepare the state $|u(t)\rangle$ to precision $\epsilon$ with $\Omega(1)$ success probability and a flag indicating success, with query complexity (oracle access to block-encoding of the Hamiltonian $\mathbf{H}_{DV}(t)$, see \cite{optimalschr})
\begin{align}
    \tilde{O}\left(\frac{1}{\|\mathbf{u}(t)\|}\left(\alpha_A t \mu+\ln \left(\frac{\mu}{\epsilon \|\mathbf{u}(t)\|}\right)\right)\right), \qquad \|\mathbf{u}(0)\|=1,
\end{align}
where $\alpha_A \geq \|\mathbf{A}_1\|, \|\mathbf{A}_2\|$ (spectral norm) and using $O(\|\mathbf{u}(0)\|/\|\mathbf{u}(t)\|)$ queries to the initial state preparation protocol for $|v(0)\rangle$. If one discretises the initial ancilla state $|\Xi\rangle_{DV}$ by discretising $\xi$ with the total number of grids $N_{\xi} \sim 1/\epsilon$, then $\mu=O(1/\epsilon)$. There also exists optimal choices of the ancilla initial state that can give $\mu=O(\ln(1/\epsilon))$. For time-dependent Hamiltonians, the optimal query complexity only incurs an extra $\ln(1/\epsilon)$ factor. 
\end{lemma}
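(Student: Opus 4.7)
The plan is to establish the lemma by chaining together three standard ingredients: the Schr\"odingerisation reduction that turns the non-unitary linear evolution Eq.~\eqref{eq:generalA} into a Hamiltonian simulation problem on an enlarged Hilbert space, optimal block-encoded Hamiltonian simulation applied to the augmented generator, and amplitude amplification on the post-selection flag. First I would verify the structural identity underlying Algorithm~\ref{Alg-DVsimulation}: evolving $|v(0)\rangle = |u_0\rangle_{DV}|\Xi\rangle_{DV}$ under $\mathbf{H}_{DV}(t)$ and then projecting on $\xi > \max\{\lambda(\mathbf{A}_2)t, 0\}$ produces the unnormalised target $e^{-\xi^*}\mathbf{u}(t)$ times a constant depending on $|\Xi\rangle$, up to a simulation error inherited from the Hamiltonian simulation subroutine. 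This is exactly the content of the Schr\"odingerisation construction in \cite{schrprl, schrpra}, and it is the step that fixes the success probability at $p_\text{succ} = \Theta(\|\mathbf{u}(t)\|^2/\|\mathbf{u}(0)\|^2)$ in the discrete-variable setting.

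Next I would estimate the cost of simulating $\mathbf{H}_{DV}(t)$. Because $\mathbf{H}_{DV}(t) = \mathbf{A}_{2,DV}(t) \otimes \hat{D} + \mathbf{A}_{1,DV}(t) \otimes I$ and the discretised operator $\hat{D}$ has spectral norm bounded by $1/\Delta\eta = N_\eta = O(\mu)$, one has a block-encoding of norm $O(\alpha_A \mu)$. Applying state-of-the-art QSVT-based Hamiltonian simulation gives query complexity $\tilde{O}(\alpha_A \mu t + \log(1/\delta))$ to produce $|v(t)\rangle$ up to error $\delta$ in trace norm. To guarantee the final post-selected state is $\epsilon$-close to $|u(t)\rangle$ after amplitude amplification, one must take $\delta = O(\epsilon\|\mathbf{u}(t)\|/\mu)$, which accounts for the logarithmic factor $\log(\mu/(\epsilon\|\mathbf{u}(t)\|))$ in the claim. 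For time-dependent $\mathbf{A}(t)$ the same bound holds with an additional $\log(1/\epsilon)$ overhead from the truncated Dyson-series or rescaled Hamiltonian simulation.

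The third step is amplitude amplification. Since the success probability of projecting on the flag is $\Theta(\|\mathbf{u}(t)\|^2/\|\mathbf{u}(0)\|^2)$ with $\|\mathbf{u}(0)\| = 1$, standard fixed-point amplitude amplification gives a $\Omega(1)$-probability algorithm with $O(1/\|\mathbf{u}(t)\|)$ rounds, each invoking one Hamiltonian-simulation call and one preparation of $|v(0)\rangle$. Multiplying gives the stated query complexity $\tilde{O}\bigl((1/\|\mathbf{u}(t)\|)(\alpha_A t \mu + \log(\mu/(\epsilon\|\mathbf{u}(t)\|)))\bigr)$ and the stated $O(\|\mathbf{u}(0)\|/\|\mathbf{u}(t)\|)$ initial-state preparation count.

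The main obstacle, and the point where I would follow \cite{optimalschr} most closely, is the dependence of $\mu$ on the discretisation of the ancilla $|\Xi\rangle_{DV}$. The naive choice $|\Xi\rangle = \int e^{-|\xi|}|\xi\rangle d\xi$ has a cusp at $\xi = 0$, which causes slow Fourier decay and forces $\mu = O(1/\epsilon)$ in order to truncate $\eta$ without sacrificing precision. The improvement to $\mu = O(\log(1/\epsilon))$ requires replacing the cusped profile by a smoothed ancilla whose Fourier transform decays super-polynomially, for which one must verify that the Schr\"odingerisation identity still holds after post-selection, and that the error caused by the smoother profile is controlled; this is exactly the analysis carried out in \cite{optimalschr}, and invoking it completes the proof.
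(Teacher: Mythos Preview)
The paper does not actually prove this lemma: it is stated as a quoted result from \cite{schrprl, schrpra, optimalschr}, with no proof environment following it, and the text immediately moves on to a corollary. So there is no ``paper's own proof'' to compare against; your proposal is a reconstruction of the argument from the cited references rather than a reproduction of anything in this manuscript.

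That said, your sketch is faithful to how the result is established in those references. The three-step decomposition (Schr\"odingerisation identity giving the post-selection structure and success probability, block-encoded Hamiltonian simulation of $\mathbf{H}_{DV}$ with effective norm $O(\alpha_A\mu)$ and simulation error budget $\delta = O(\epsilon\|\mathbf{u}(t)\|/\mu)$, then amplitude amplification contributing the $1/\|\mathbf{u}(t)\|$ factor) is exactly the structure of the analysis in \cite{optimalschr}, and your identification of the cusp in $e^{-|\xi|}$ as the obstruction forcing $\mu = O(1/\epsilon)$, with the smoothed ancilla giving $\mu = O(\log(1/\epsilon))$, is the key technical point of that reference. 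Nothing is missing from your outline relative to what the paper would expect a reader to recover from the citations.
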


\begin{corollary}
Ignoring the terms with logarithmic scaling, one can simulate $|u(t)\rangle_{DV}$ to precision $\epsilon$ with cost $\tilde{O}(\alpha_A t \ln(1/\epsilon)/\|\mathbf{u}(t)\|)$, where it is sufficient to choose $\alpha_A \geq \max \{\|\mathbf{A}_1\|, \|\mathbf{A}_2\| \}=O(d^4/\epsilon^2)$.  
\end{corollary}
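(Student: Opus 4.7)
The plan is to substitute directly into the cost formula of Lemma~\ref{lem:sim} and then bound $\alpha_A$ by tracking the spectral norms of the discretised operators under the accuracy constraints from Lemma~\ref{error-dis}. First I would invoke Lemma~\ref{lem:sim} with the optimal ancilla initial state so that $\mu=O(\ln(1/\epsilon))$, combined with $\|\mathbf{u}(0)\|=1$. Absorbing the logarithmic $\ln(\mu/(\epsilon\|\mathbf{u}(t)\|))$ term and the extra $\ln(1/\epsilon)$ factor from the time-dependent Hamiltonian simulation into $\tilde{O}(\cdot)$, the leading query complexity reduces to $\tilde{O}(\alpha_A t\ln(1/\epsilon)/\|\mathbf{u}(t)\|)$. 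Hence the only remaining task is to quantify $\alpha_A$.

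Since $\mathbf{A}_1=(\mathbf{A}+\mathbf{A}^\dagger)/2$ and $\mathbf{A}_2=i(\mathbf{A}-\mathbf{A}^\dagger)/2$, both have spectral norm at most $\|\mathbf{A}_{DV}\|$; it therefore suffices to bound the discretisation of Eq.~\eqref{eq:generalA}. The key input is $\|\hat{P}_j\|=O(N_x)=O(1/\Delta x)$, and the gradient-precision constraint \eqref{requirement-bb} forces $\Delta x=O(\epsilon/d^{3/2})$, so $N_x^2=O(d^3/\epsilon^2)$; the scaling $\nu=O((\epsilon/d)^2)$ is inherited from the same lemma. I would then split $\mathbf{A}_{DV}$ into three pieces and bound them separately: the zeroth-order term $ia(t,\hat{X})$ has norm $O(\|a\|_\infty)$, equal to $O(1/\nu)=O(d^2/\epsilon^2)$ in the quadratic case and $0$ in the general case; the drift $i\sum_j b_j\hat{P}_j$ contributes at most $O(|v^*|_1 N_x)=O(\sqrt{d}\,N_x)=O(d^2/\epsilon)$; and the diffusion $-\sum_{jk}c_{jk}\hat{P}_j\hat{P}_k$, where $c=v^*(v^*)^\top+2\nu D^{-1}$, is bounded using the rank-one structure as $\|\sum_j v^*_j\hat{P}_j\|^2+O(\nu d\,N_x^2)=O(d\,N_x^2)=O(d^4/\epsilon^2)$, which dominates the other two.

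The main obstacle I anticipate is being careful with this second-order diffusion term: the naive bound $\sum_{jk}|c_{jk}|\,\|\hat{P}_j\|\,\|\hat{P}_k\|$ gives $O(d^2 N_x^2)=O(d^5/\epsilon^2)$, off by one factor of $d$. Recovering the advertised $O(d^4/\epsilon^2)$ requires exploiting the rank-one-plus-small-perturbation structure of the coefficient matrix $c_{jk}$ so that only its \emph{operator} norm (of order $O(d)$, assuming $|v^*|_2=O(1)$ as is natural when the Lagrangian is held fixed as $d$ varies) enters, rather than its entrywise $\ell_1$-mass of order $O(d^2)$. Once this sharper bound is established, substituting $\alpha_A=O(d^4/\epsilon^2)$ back into the cost expression from the first step yields the corollary immediately.
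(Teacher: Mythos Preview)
Your proposal is correct and follows the same overall line as the paper: invoke Lemma~\ref{lem:sim} with the optimal ancilla so $\mu=O(\ln(1/\epsilon))$, absorb the subleading logarithms, and then bound $\alpha_A$ using the mesh size $\Delta x=O(\epsilon/d^{3/2})$ from the gradient case in Lemma~\ref{error-dis}, arriving at $O(dN_x^2)=O(d^4/\epsilon^2)$.

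The differences are mostly in presentation. The paper does not bound $\|\mathbf{A}_1\|,\|\mathbf{A}_2\|$ via $\|\mathbf{A}_{DV}\|$; instead it observes that under centre differencing the Hermitian part $\mathbf{A}_1$ \emph{is} the drift term and the anti-Hermitian part $\mathbf{A}_2$ \emph{is} the zeroth-plus-diffusion term, then simply cites as ``classical results'' the estimates $\|\mathbf{A}_1\|=O(dN_x)$ and $\|\mathbf{A}_2\|=O(dN_x^2)$ without further justification. In particular, the paper does not carry out the rank-one-plus-small-perturbation analysis of $c_{jk}$ that you flag as the main obstacle; your treatment of that point is in fact more careful than the paper's, which implicitly assumes the $O(dN_x^2)$ scaling. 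Your route via $\|\mathbf{A}_{DV}\|$ is slightly coarser on $\mathbf{A}_1$ but, since only the maximum enters $\alpha_A$, the final bound is identical.
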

\begin{proof}
If we use centre difference approximation to approximate the space derivative in the linear parabolic equation \eqref{eq:generalpde}, we can obtain the corresponding $\mathbf{A}_1$ and $\mathbf{A}_2$, 
then $\mathbf{A}_1$ is the difference operator corresponding to the first order derivative term while $\mathbf{A}_2$ corresponds to the other two terms. It is classical results that
\begin{equation}\label{norm-est}
\|\mathbf{A}_1\| = O(dN_x), \quad
\|\mathbf{A}_2\| = O(dN_x^2).
\end{equation}
Since we take $\Delta x=O(1/N_x)=O((\epsilon/d)^{1/2})$ in \eqref{requirement-aa} to approximate $S$, this gives
\begin{align}\label{norm-est}
\|\mathbf{A}_1\| = O(d^{3/2}/\epsilon^{1/2}), \quad
\|\mathbf{A}_2\| = O(d^2/\epsilon).
\end{align}
If one wants to compute $\nabla S$, then \eqref{requirement-bb} requires $\Delta x=1/N_x=O(\epsilon/d^{3/2})$.
 Consequently,
 \begin{align}\label{norm-dest}
\|\mathbf{A}_1\| = O(d^{5/2}/\epsilon), \quad
\|\mathbf{A}_2\| = O(d^4/\epsilon^2).
\end{align}
\end{proof}
\begin{remark}
   For example, if $L(v)=|v|^2$, then $b_j=0, c_{jk}=O(\nu)$ in \eqref{eq:generalpde}. Consequently $\|\mathbf{A}_1\|=O(1)$, while $\|\mathbf{A}_2\|$   will be multiplied by $\nu=O((\epsilon/d)^2)$.
\end{remark}

\begin{remark}
    Here we have given an example of a quantum simulation algorithm when we discretise the continuous-time algorithm in Algorithm~\ref{Alg-2}, by discretising Eq.~\eqref{eq:heatcontinuous}. However, it is also possible to directly use Algorithm~\ref{Alg-1} via the linear scheme in~\eqref{N-linear-scheme}, and we apply instead a time-marching quantum simulation algorithm to approximate $|u(t)\rangle_{DV}$. The comparison between these methods and more precise implementation under different boundary conditions will be explored in future work. 
\end{remark}
The above is the cost for the optimal algorithm and it involves block-encoding oracles and a different initial ancilla state from the simplest one stated. However, there are simpler Hamiltonian simulation algorithms (for example using sparse-access instead of block-encoding) one can also use and the simplest initial ancilla state gives the first-order Schr\"odingerisation scheme. The cost in these cases is also not prohibitive, see Appendix~\ref{app:simplercostly}.

\section{Estimating physically relevant quantities of the Hamilton-Jacobi equation} \label{sec:observables}

In the previous section, it is easy to see that we can always use quantum simulation to simulate  $|u(t)\rangle$. However, what we are really interested in are physically relevant quantities of the viscosity solution of the original Hamilton-Jacobi equation in Eq.~\eqref{hj0}. Since full tomography of the output state is extremely inefficient in general, we must find alternative methods. It can be non-trivial to find suitable measurement protocols to extract those quantities of interest.  \\

In this section, we construct quantum protocols to extract four quantities of interest 

\begin{enumerate}
      \item the value of $S$ at a point,  
   
    \item the gradient $\nabla S$ at a point, 
    
    \item  the minimum value of $S$, 
    
     \item the value of a known function $f(x)$ at the location of the minimum of $S$.  
    \end{enumerate}

The gradient $\nabla S(t,x)$ corresponds to physically-relevant observables like  velocity in the Hamilton-Jacobi PDE, and $|\nabla S|^2/2$ would correspond to a kinetic energy. The gradient is in fact also the solution to the forced Burgers' equation \eqref{burgers} in Eq.~\ref{burgers} since here the gradient $\nabla S$ is by definition curl-free (unlike the usual Burger's equation which is only curl-free in 1D). \\

In many applications of the Hamilton-Jacobi equation, for example in geometric optics and semi-classical computation of quantum dynamics (the WKB analysis), the minimum value of $S(t, x)$ for a given $t$ is of interest since it corresponds to stationary phase, which is an important quantity when one wants to evaluate a highly oscillatory integral \cite{evans2022partial}.  While it would be inefficient to extract all the values of $S(t, x_a)$ in order to identify the minimum value, we can identify a quantum algorithm that can directly estimate this quantity.\\

Here a function $f(x)$ can represent any cost function one is interested in at the stationary point $x^*$ when $S$ is at its minimum and its gradient vanishes. For example, to obtain a total energy, one could be interested in a potential energy contribution $f(x)=V(x)$ at $x^*$.\\

For each of the four quantities we mentioned above, we construct both analog and digital quantum protocols. See Table I for the list of protocols. \\

\begin{table}[h!]
\centering
    \large 
\caption{List of algorithms for simulating $|u(t)\rangle$ and for estimating different quantities . Here we denote $x^*=\text{argmin}_x S(t,x)$ at some fixed $t>0$. All these algorithms require calls to either Algorithm~\ref{Alg-CVsimulation} (or Algorithm~\ref{Alg-DVsimulation}) for the preparation of $|u(t)\rangle_{CV}$ (or $|u(t)\rangle_{DV}$) or the estimation of the normalisation constant $\|\mathbf{u}(t)\|^2$.}
\begin{tabular}{|l|l|l|ll}
\cline{1-3}
State or quantity  & Analog protocol & Digital protocol   \\ \cline{1-3}
$|u(t)\rangle$, \quad $\|\mathbf{u}(t)\|^2$ & Algorithm~\ref{Alg-CVsimulation} & Algorithm~\ref{Alg-DVsimulation} \\ \cline{1-3}
$S(t, x_a)$ at some $x=x_a$ & Algorithm~\ref{Alg-CVSx} & Algorithm~\ref{Alg-DVSx} \\ \cline{1-3}
$\partial_k S(x_a)$ at some $x=x_a$  & Algorithm~\ref{Alg-CVgradS} & Algorithm~\ref{Alg-DVgradS} \\ \cline{1-3}
$S_{\min}(t)=\min_x S(t,x)$   & Algorithm~\ref{Alg-CVSmin} & Algorithm~\ref{Alg-DVSmin}  \\ \cline{1-3}
$f(x^*)$ for some given function $f$ & Algorithm~\ref{Alg-CVfmin} & Algorithm~\ref{Alg-DVfmin} \\ \cline{1-3}
\end{tabular}
\end{table} \label{tab:list}

\subsection{Estimating the value at a point}
We first observe that at a  location $x=x_a$, if we measure the probability $|\langle x_a|u(t)\rangle|^2$, then we can write 
\begin{align}
   - \nu \ln|\langle x_a|u(t)\rangle|^2=-\nu \ln \left(\frac{e^{-S_{\nu}(t, x_a)/\nu}}{\|\mathbf{u}(t)\|^2}\right)= S_{\nu}(t, x_a)+\nu \ln \|\mathbf{u}(t)\|^2. 
\end{align}
This means the viscosity solution $S(t, x_a)$ can be estimated from measuring the normalisation constant and the probability $|\langle x_a|u(t)\rangle|^2$:
\begin{align}
    |S(t, x_a)-S_{\nu}(t, x_a)|=|S(t, x_a)+(\nu \ln|\langle x_a|u(t)\rangle|^2+\nu \ln \|\mathbf{u}(t)\|^2)| \leq \epsilon_{\min}, \qquad \nu=O(\epsilon^2_{\min}/d^2),
\end{align}
 where $\epsilon_{\min}$ is the minimum error if the normalisation constant is known and $|u(t)\rangle$ is prepared exactly and we use the result in Lemma~\ref{lem:gomeserror}. Thus using Algorithm~\ref{Alg-CVsimulation} or Algorithm~\ref{Alg-DVsimulation} to prepare $|u(t)\rangle$ and to measure its normalisation constant, it is then straightforward to estimate $S(t, x_a)$ at a particular location $x=x_a$. See Algorithm~\ref{Alg-CVSx} for a summary and the cost in Lemma~\ref{lem:cvsx}. 

\subsubsection{Analog protocol to measure value at a point} 
\begin{algorithm}[H]
		\caption{An analog protocol to estimate $S(t,x)$ at some given point $x=x_a$. Output estimates $S(t, x_a)$ to precision $\epsilon$ with cost given in Lemma~\ref{lem:cvsx}.}
		\label{Alg-CVSx}
\begin{enumerate}
    \item Input:  $x_a$ and $|u(t)\rangle_{CV}$, $\|\mathbf{u}(t)\|_{CV}^2$ from Algorithm~\ref{Alg-CVsimulation} 
    \item Output: $- \nu \ln|\langle x_a|u(t)\rangle_{CV}|^2-\nu \ln \|\mathbf{u}(t)\|_{CV}^2$ 
\end{enumerate}
\end{algorithm}
 Although here we use the projection of $|u(t)\rangle$ onto outcome $|x\rangle=|x_a\rangle$ after applying a projective measurement $|x\rangle \langle x|$, which is an ideal measurement, this can be straightforwardly generalised to more realistic measurements as well. 
 
\begin{lemma} \label{lem:cvsx}
    The minimum precision in estimating $S(t, x_a)$ with using $|u(t)\rangle$ and the normalisation constant of $S_{\nu}(t, x_a)$ is $\epsilon_{\min}$. Then, using Algorithm~\ref{Alg-CVSx}, the number of preparations of state $|v(t)\rangle$ in Step (2) of Algorithm~\ref{Alg-CVsimulation} required to estimate $S(t, x_a)$ to precision $\epsilon$ with success probability $1-\delta$ is $O(\ln(1/\delta) \nu^2 /(\epsilon-\epsilon_{\min})^2)$, and the constant $\nu$ in Algorithm~\ref{Alg-CVsimulation} can be chosen to be $O(\epsilon^2_{\min}/d^2)$.  
\end{lemma}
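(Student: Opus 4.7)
The plan is to first consolidate the measurement model of Algorithm~\ref{Alg-CVSx} into a single estimator, and then turn a standard concentration bound for an empirical mean into a bound on the logarithmic transformation used to recover $S_\nu(t,x_a)$. Recall from Algorithm~\ref{Alg-CVsimulation} that when $\mathcal{N}_0=1$ and the ancilla integral is normalised, one has $p_{\mathrm{succ}}=\|\mathbf{u}(t)\|^2$. Combined with the position measurement on $|u(t)\rangle$, the joint probability of flagging success on the ancilla and detecting outcome $x=x_a$ is
\begin{equation}
p_{\mathrm{joint}} \;=\; \|\mathbf{u}(t)\|^2\,|\langle x_a|u(t)\rangle|^2 \;=\; e^{-S_\nu(t,x_a)/\nu},
\end{equation}
so that $S_\nu(t,x_a) = -\nu\ln p_{\mathrm{joint}}$. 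The estimator of Algorithm~\ref{Alg-CVSx} is then equivalent to recording the empirical frequency $\hat{p}$ of this joint event over $M$ independent preparations of $|v(t)\rangle$.

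Next, I would perform a triangle-inequality split of the total error,
\begin{equation}
|S(t,x_a) - \widehat{S}| \;\le\; |S(t,x_a) - S_\nu(t,x_a)| + |S_\nu(t,x_a) - \widehat{S}|.
\end{equation}
The first term is $O(\epsilon_{\min})$ by Lemma~\ref{lem:gomeserror} provided that $\nu = O((\epsilon_{\min}/d)^2)$, which fixes the choice of $\nu$ in the statement. The remaining error budget $\epsilon - \epsilon_{\min}$ must be absorbed entirely by the statistical estimation of $-\nu\ln p_{\mathrm{joint}}$ from $\hat{p}$.

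For the statistical step, I would apply a Chernoff/Hoeffding bound to the empirical mean of $M$ Bernoulli trials, obtaining $|\hat{p} - p_{\mathrm{joint}}| \le \Delta$ with probability at least $1-\delta$ whenever $M = \Omega(\ln(1/\delta)/\Delta^2)$. Using the mean-value theorem applied to $\phi(p) = -\nu \ln p$, the log-estimator satisfies
\begin{equation}
\bigl|{-\nu\ln\hat p}+\nu\ln p_{\mathrm{joint}}\bigr| \;\le\; \frac{\nu\,\Delta}{p_{\mathrm{joint}}(1-o(1))},
\end{equation}
so requiring the right-hand side to be at most $\epsilon - \epsilon_{\min}$ yields $\Delta = O((\epsilon - \epsilon_{\min})/\nu)$ (up to an $O(1)$ factor from $p_{\mathrm{joint}}$), and hence $M = O(\ln(1/\delta)\,\nu^2/(\epsilon-\epsilon_{\min})^2)$ preparations of $|v(t)\rangle$, matching the claim.

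The main subtlety I expect is the appearance of $p_{\mathrm{joint}}$ in the denominator of the log-sensitivity, because $p_{\mathrm{joint}}$ could in principle be exponentially small. The clean form of the bound implicitly treats $p_{\mathrm{joint}}$ as an $O(1)$ constant absorbed into the big-$O$; making this rigorous either requires the mild regularity assumption that $S_\nu(t,x_a)/\nu = O(1)$ at the interrogation point $x_a$, or a more careful sensitivity analysis that retains a $1/p_{\mathrm{joint}}^2$ factor in the sample complexity. I would note this dependence explicitly in the proof while deriving the stated scaling under the implicit $p_{\mathrm{joint}} = \Omega(1)$ regime in which Algorithm~\ref{Alg-CVSx} is intended to be used.
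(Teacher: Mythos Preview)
Your argument is correct and reaches the same scaling, but the route differs from the paper's in one structural respect. The paper keeps the two measurements of Algorithm~\ref{Alg-CVSx} separate: it introduces an error $\epsilon_j$ for the position-measurement probability $|\langle x_a|u(t)\rangle|^2$ and an error $\epsilon_{\mathrm{norm}}$ for the success-flag probability $\|\mathbf{u}(t)\|^2$, propagates each through the logarithm, and then ties the two sample counts together via the observation that every preparation of $|v(t)\rangle$ contributes simultaneously to both estimates (so $\mathcal{N}_{\mathrm{norm}}\sim\mathcal{N}_j/\|\mathbf{u}(t)\|^2$). This yields the stated cost with an explicit prefactor $J=(1/|\langle x_a|u(t)\rangle|^2+1/\|\mathbf{u}(t)\|)^2/\|\mathbf{u}(t)\|^2$ that is then absorbed into the big-$O$.

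Your approach collapses the two stages into a single Bernoulli trial for the joint event ``success flag \emph{and} outcome $x_a$'', whose probability is exactly $e^{-S_\nu(t,x_a)/\nu}$; a single Hoeffding bound plus the mean-value estimate for $-\nu\ln p$ then gives the result directly. This is more economical and makes the hidden dependence on the underlying probability transparent as a single $1/p_{\mathrm{joint}}^2$ factor, which is exactly the same issue the paper's $J$ encodes. The paper's two-error bookkeeping, on the other hand, stays closer to the literal description of Algorithm~\ref{Alg-CVSx} (estimate the two quantities and add their logs) and shows explicitly how the normalisation estimate is obtained ``for free'' from the post-selection statistics. Either way, the caveat you flag---that the clean $O(\nu^2\ln(1/\delta)/(\epsilon-\epsilon_{\min})^2)$ scaling silently absorbs a factor that can be large when $S_\nu(t,x_a)/\nu$ is not $O(1)$---is present in both arguments.
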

\begin{proof}
Here we have two sources of error: one is the error $\epsilon_{j}$ in measuring the probability $|\langle x_a|u(t)\rangle|^2$ and the other error $\epsilon_{norm}$ in measuring $\|\mathbf{u}(t)\|^2$, which we have already seen is also a probability of obtaining $\xi>0$ in the ancilla qumode. This means the corresponding errors in $-\nu \ln |\langle x_a|u(t)\rangle|^2$ and $-\nu \ln \|\mathbf{u}(t)\|^2$ are respectively $\tilde{\epsilon}_{j}\sim \nu \epsilon_j/|\langle x_a|u(t)\rangle|^2$ and $\tilde{\epsilon}_{norm} \sim \nu \epsilon_{norm}/ \|\mathbf{u}(t)\|^2$ respectively. So the total error in $S(t,x_a)$ becomes 
\begin{align} \label{eq:cvsineq}
     |S(t, x_a)-\tilde{S}_{\nu}(t, x_a)|\leq |S(t, x_a)-S_{\nu}(t, x_a)|+|S_{\nu}(t, x_a)-\tilde{S}_{\nu}(t, x_a)|\leq \epsilon_{\min}+\tilde{\epsilon}_j+\tilde{\epsilon}_{norm} \sim \epsilon.
\end{align}
Let the number of $|u(t)\rangle$ samples used  be denoted $\mathcal{N}_j\sim \ln (1/\delta)/\epsilon_j^2$ and the number of quantum simulation protocol with respect to $\mathbf{H}(t)$ be denoted $\mathcal{N}_{norm}\sim \ln(1/\delta)/\epsilon^2_{norm}$. Since each preparation of $|u(t)\rangle$ already provides information about $\|\mathbf{u}(t)\|^2$, then clearly $\mathcal{N}_{norm} \sim \mathcal{N}_j/\|\mathbf{u}(t)\|^2$, since $O(\|\mathbf{u}(t)\|^2)$ is the success probability of getting $|u(t)\rangle$ from the quantum simulation algorithm with respect to $\mathbf{H}(t)$, so $\epsilon^2_{norm}\sim \epsilon^2_j \|\mathbf{u}(t)\|^2$. Inserting into Eq.~\eqref{eq:cvsineq} implies $\nu \epsilon_j(1/|\langle x_a|u(t)\rangle|^2+1/\|\mathbf{u}(t)\|) \sim \epsilon-\epsilon_{\min}$. A sufficient total number of quantum simulation algorithms with respect to $\mathbf{H}(t)$ required is therefore $\mathcal{N}_{norm}\sim \ln (1/\delta)/(\epsilon_j^2 \|\mathbf{u}(t)\|^2)$, where $\epsilon_j$ can be constrained from the above relationship. So the total cost is $\sim \ln(1/\delta) \nu^2 J/(\epsilon-\epsilon_{\min})^2$, where $J=(1/|\langle x_a|u(t)\rangle|^2+1/\|\mathbf{u}(t)\|)^2/\|\mathbf{u}(t)\|^2$. Then from Lemma~\ref{lem:gomeserror}, we see that we can choose $\nu=O(\epsilon^2_{\min}/d^2)$.

\end{proof}

\subsubsection{Digital protocol to measure value at a point} \label{sec:DVobservables}
\begin{algorithm}[H]
		\caption{Summary of digital protocol to estimate $S(t,x)$ at some given point $x=x_a$. Output estimates $S(t, x_a)$ to precision $\epsilon$ with cost given in Lemma~\ref{lem:dvsx}.}
		\label{Alg-DVSx}
\begin{enumerate}
    \item Input:  $x_a$ and $|u(t)\rangle_{DV}$, , $\|\mathbf{u}(t)\|_{DV}^2$ from Algorithm~\ref{Alg-DVsimulation} 
    \item Output: $- \nu \ln|\langle x_a|u(t)\rangle_{DV}|^2-\nu \ln \|\mathbf{u}(t)\|_{DV}^2$.
\end{enumerate}
\end{algorithm}
Here the digital protocol follows the same idea as the analog protocol, except we need to use the digital state $|u(t)\rangle_{DV}$ and its corresponding normalisation constant. Here $|u(t)\rangle_{DV}$ results from a discretised PDE, so we need to modify $\epsilon_{\min}$ to include the error from discretisation. 

\begin{lemma} \label{lem:dvsx}
 The minimum precision in estimating $S(t, x_a)$ when using $|u(t)\rangle$ constructed from $d\ln N_x$ qubits and the normalisation constant of $S_{\nu}(t, x_a)$ is $\epsilon_{\min}$, where a large enough $N_x$ is chosen so $1/N_x=O\left(\sqrt{\epsilon_{\min}/d}\right)$. Then, using Algorithm~\ref{Alg-DVSx}, the number of preparations of state $|v(t)\rangle$ in Step (2) of Algorithm~\ref{Alg-DVsimulation} required to estimate $S(t, x_a)$ to precision $\epsilon$ with success probability $1-\delta$ is $O(\ln(1/\delta) \nu^2 /(\epsilon-\epsilon_{\min})^2)$, and it is sufficient to choose the constant $\nu$ Algorithm~\ref{Alg-DVsimulation} to be $O(\epsilon^2_{\min}/d^2)$. 
\end{lemma}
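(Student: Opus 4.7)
The plan is to mirror the argument of Lemma~\ref{lem:cvsx}, replacing every continuous object by its digital counterpart and carefully bookkeeping one additional source of error, namely the spatial discretisation error introduced by encoding $u(t,x)$ into $d \ln N_x$ qubits. Concretely, I would write the total error as
\begin{equation*}
|S(t,x_a) - \tilde{S}^{DV}_\nu(t,x_a)| \;\le\; \underbrace{|S(t,x_a) - S_\nu(t,x_a)|}_{\text{viscosity regularisation}} \;+\; \underbrace{|S_\nu(t,x_a) - S^n_{\nu,\Delta}(t,x_a)|}_{\text{time/space discretisation}} \;+\; \underbrace{|S^n_{\nu,\Delta}(t,x_a) - \tilde{S}^{DV}_\nu(t,x_a)|}_{\text{measurement error}},
\end{equation*}
where the first two terms together constitute the new $\epsilon_{\min}$ and the last term quantifies the sampling error.

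For the first two terms, I would invoke Lemma~\ref{lem:viscosityerror} and Lemma~\ref{error-dis}. From Lemma~\ref{error-dis}, the dominant spatial discretisation term is $O(d(\Delta x)^2)$, so choosing $1/N_x = O(\sqrt{\epsilon_{\min}/d})$ bounds this contribution by $O(\epsilon_{\min})$. The $d\sqrt{\nu}$ viscosity error from Lemma~\ref{lem:viscosityerror} is forced to be $O(\epsilon_{\min})$ by taking $\nu = O(\epsilon_{\min}^2/d^2)$, exactly as in the continuous case. The time step $h$ and other parameters in Lemma~\ref{error-dis} are assumed chosen compatibly (as in \eqref{requirement-aa}) so the remaining terms are also $O(\epsilon_{\min})$.

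For the third (measurement) term, the argument is essentially identical to the analog protocol. The digital state satisfies $|\langle x_a | u(t)\rangle_{DV}|^2 = e^{-S^n_{\nu,\Delta}(t,x_a)/\nu}/\|\mathbf{u}(t)\|_{DV}^2$, so the Cole--Hopf formula $\tilde{S}^{DV}_\nu(t,x_a) = -\nu \ln|\langle x_a|u(t)\rangle_{DV}|^2 - \nu \ln \|\mathbf{u}(t)\|_{DV}^2$ holds exactly. Let $\epsilon_j$ be the empirical error in estimating $|\langle x_a|u(t)\rangle_{DV}|^2$ and $\epsilon_{\text{norm}}$ the error in $\|\mathbf{u}(t)\|_{DV}^2$ (which is read off from the success probability of Algorithm~\ref{Alg-DVsimulation}). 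Linearising the logarithm gives measurement contributions $\tilde\epsilon_j \sim \nu \epsilon_j / |\langle x_a|u(t)\rangle_{DV}|^2$ and $\tilde\epsilon_{\text{norm}} \sim \nu \epsilon_{\text{norm}} / \|\mathbf{u}(t)\|_{DV}^2$. Equating their sum to $\epsilon - \epsilon_{\min}$ and using the Hoeffding-type relation $\mathcal{N}_{\text{samples}} \sim \ln(1/\delta)/\epsilon_j^2$, with amplitude amplification providing the quadratic speed-up for the normalisation estimation baked into Algorithm~\ref{Alg-DVsimulation}, yields the stated $O(\ln(1/\delta)\,\nu^2/(\epsilon-\epsilon_{\min})^2)$ preparations.

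The principal obstacle I anticipate is the behaviour of the logarithm, which amplifies small absolute errors whenever $|\langle x_a|u(t)\rangle_{DV}|^2$ or $\|\mathbf{u}(t)\|_{DV}^2$ is small. In the digital setting this is slightly more delicate than in the analog case because the amplitude $|\langle x_a|u(t)\rangle_{DV}|^2$ is really a discretised mass on a grid cell of volume $(\Delta x)^d$, so if one is not careful one picks up additional $(\Delta x)$-dependent factors that must be reconciled with the chosen $1/N_x = O(\sqrt{\epsilon_{\min}/d})$. Provided $|\langle x_a|u(t)\rangle_{DV}|^2$ and $\|\mathbf{u}(t)\|_{DV}^2$ are bounded below by constants (or their decay is absorbed into the implicit prefactor $J$ as in the proof of Lemma~\ref{lem:cvsx}), the sampling cost retains the same form and the claim follows.
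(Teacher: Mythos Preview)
Your proposal is correct and takes essentially the same approach as the paper: the paper's proof likewise mirrors Lemma~\ref{lem:cvsx}, groups the viscosity and discretisation errors into a single $|S(t,x_a)-S_{\nu,\Delta}(t,x_a)|\le\epsilon_{\min}$ term via Lemma~\ref{error-dis} (your first two terms combined), and then bounds the sampling term $\zeta=\epsilon-\epsilon_{\min}$ exactly as in the analog case to obtain the stated cost. Your three-term decomposition is just a slightly more explicit unpacking of the same argument.
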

\begin{proof}
   The proof follows in the same way as Lemma~\ref{lem:cvsx}, where the main difference is that we also need to take into account error coming from discretisation using finite $N_x$. Then $S_{\nu, \Delta}(t, x_a)=-\nu \ln|\langle x_a|u(t)\rangle_{DV}|^2-\nu \ln \|\mathbf{u}(t)\|_{DV}^2$ is the estimate for $S_{\nu}$ when using the solution $u$ from the discretised parabolic PDE. Using Lemma~\ref{error-dis} 
   \begin{align}
   |S(t, x_a)-S_{\nu, \Delta}(t, x_a)|
     \leq \epsilon_{\min}, \qquad \nu=O(\epsilon^2_{\min}/d^2), \qquad (1/N_x)=O\left(\sqrt{\epsilon_{\min}/d}\right). 
\end{align}
This means the total error 
\begin{align}
    |S(t, x_a)-\tilde{S}_{\nu}(t, x_a)|\leq  |S(t, x_a)-S_{\nu, \Delta}(t, x_a)|+ |S_{\nu, \Delta}(t, x_a)-\tilde{S}_{\nu}(t, x_a)|\leq \epsilon_{\min}+\zeta=\epsilon
\end{align}
where the error $\zeta$ comes from sampling. The cost to estimate $S(t,x_a)$ has the same form as the analog protocol $O(\ln(1/\delta) \nu^2 /(\epsilon-\epsilon_{\min})^2)$. 
\end{proof}

\subsection{Estimating the gradient at a point} 

If given copies of $|u(t)\rangle$, there are different ways to estimate the gradient $ \partial S_{\nu}(t, x_a)/\partial x_k$ at a particular point $x=x_a$. Here we choose a more elegant method based on weak measurement, which involves measuring an ancilla or probe state and using a 'pointer-variable' measurement to output the final observable. We examine both the analog and digital protocols below.

\subsubsection{Analog protocol for gradient estimation} \label{sec:CVgradient}

We summarise the analog protocol in Algorithm~\ref{Alg-CVgradS} with cost given in Lemma~\ref{lem:CVgradient}.

\begin{algorithm}[H]
		\caption{Summary of analog protocol to estimate $|g_{ka}| \equiv \partial _kS_{\nu}(t,x_a)$, see Fig.~\ref{fig:CVgradient}. Output estimates $\partial_k S(t, x_a)$ to precision $\epsilon$ with cost given in  Lemma~\ref{lem:CVgradient}.}
\label{Alg-CVgradS}
\begin{enumerate}
    \item Input:   $|u(t)\rangle_{CV}$ from Algorithm~\ref{Alg-CVsimulation};  copies of single-mode vacuum coherent state $|\alpha_0\rangle$ (ancilla qumode); ability to perform unitary operation $\exp(2i  \kappa \nu \hat{p}_k \otimes \hat{p})$, $x_a$;
    \item Apply $\exp(2i \kappa \nu \hat{p}_k \otimes \hat{p})$ to $|u(t)\rangle_{CV}|\alpha_0\rangle$, where $\kappa \ll 1$ makes this more experimentally feasible;
    \item Project the $d$-mode (non-ancilla) state onto $|x_a\rangle \langle x_a|$. This has success probability $\phi_{coh}(x_a)$;
    \item  If step (3) succeeds, the resulting ancilla mode is $|\Psi\rangle$. Measure  $\langle \Psi|\hat{x}|\Psi\rangle$ and $\langle \Psi|\hat{p}|\Psi \rangle$;
    \item Output: $|g_{ka}|=\sqrt{\langle \Psi|\hat{x}|\Psi\rangle^2+\langle \Psi|\hat{p}|\Psi\rangle^2}/\kappa$
\end{enumerate}
\end{algorithm}

\begin{figure}[h] 
\centering
\includegraphics[width=12cm]{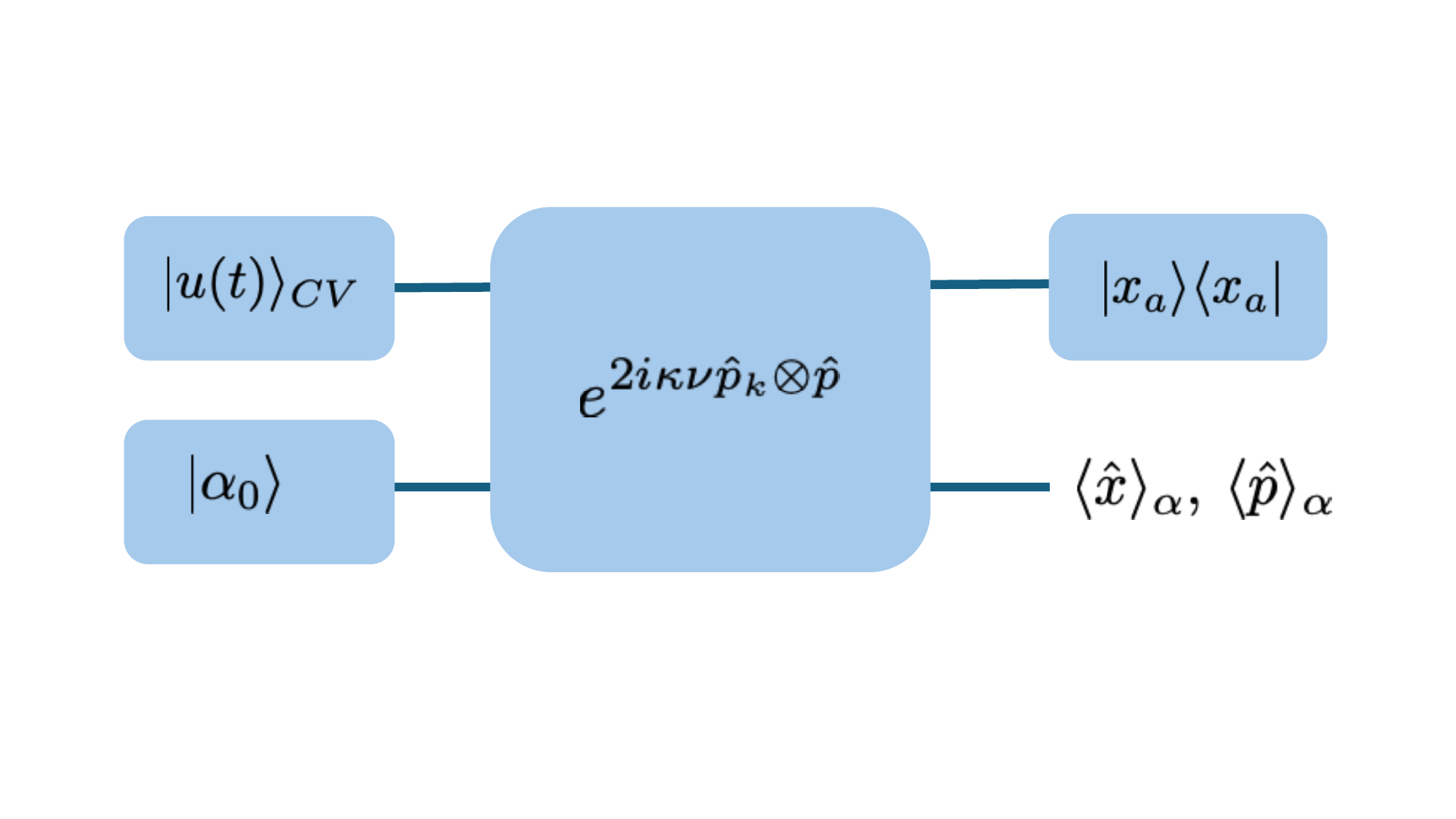} 

\caption{\justifying Schematic diagram of analog quantum protocol for extracting the gradient $|g_{ka}|=\partial S_{\nu}(t, x_a)/\partial x_k$, when given copies of the coherent state $|\alpha_0\rangle$ (easy to prepare) and the state $|u(t)\rangle$. Here $u(t,x)$ is related to $S_{\nu}(t,x)$ by the Cole-Hopf transformation as outlined in ~\ref{sec:heatquantumsimulation}. The coherent state is an ancilla qumode that acts as a measurement probe, and we proceed to use a 'pointer-variable' measurement. Given the initial state $|u(t)\rangle|\alpha_0\rangle$, we apply a Gaussian operation $\exp(2i\kappa\nu \hat{p}_k \otimes \hat{p})$, where $\hat{p}_k$ acts on the $|u(t)\rangle$ register and $\hat{p}$ acts on the coherent state register and $\kappa \ll 1$. Then projecting the resulting first register onto the desired position $|x_a\rangle \langle x_a|$, the resulting ancilla qumode will have its coherence shifted by a value that is $-i$ times $\partial S_{\nu}(t, x_a)/\partial x_k$. To extract this coherence, the expectation value of the ancilla mode with respect to $\hat{x}$ and $\hat{p}$ are taken. Then we can estimate the real and imaginary components of $g_{ak}$ by measuring the quadratures $\langle \hat{x}\rangle_{\alpha}=\kappa\text{Im}(g_{ka})$, $\langle \hat{p}\rangle_{\alpha}=-\kappa \text{Re}(g_{ka})$, and thus recover $|g_{ka}|$.}\label{fig:CVgradient}
\end{figure}
   We first begin with the following simple observation 
    \begin{align} \label{eq:gradS1}
        |g_{ka}| \equiv \frac{\partial S_{\nu}(t,x_a)}{\partial x_k}=-2 \nu \frac{1}{u(t,x_a)}\frac{\partial u(t,x_a)}{\partial x_k}, \qquad g_{ka}=-2i \nu \frac{\langle x_a|\hat{p}_k|u(t)\rangle}{\langle x_a|u(t)\rangle},
    \end{align}
 where $|g_{ka}|$ is real-valued. If the amplitudes of $|u(t)\rangle$ are always real-valued, then $|g_{ka}|=g_{ka}$, however, complex phases could appear in the amplitudes during quantum simulation. To retrieve the real solutions of the original Hamilton-Jacobi equation, the absolute value is taken. 
Then $g_{ka}$ can be estimated with the help of a vacuum coherent state, which we label $|\alpha_0\rangle$. Apply the unitary operation $\exp(2i\delta \nu \hat{p}_k \otimes \hat{p})$ onto $|u(t)\rangle |\alpha_0\rangle$ where the second $\hat{p}$ acts on the ancilla mode, for $\kappa \ll 1$ we can write
\begin{align} \label{eq:CVgenerica}
   &  |x_a\rangle \langle x_a|e^{2 i\kappa\nu \hat{p}_k \otimes \hat{p}}|u(t)\rangle |\alpha_0\rangle=|x_a\rangle\langle x_a|u(t)\rangle\left(\mathbf{1}-\kappa g_{ka}\hat{p}\right)|\alpha_0\rangle+O(\kappa^2)=|x_a\rangle\langle x_a|u(t)\rangle e^{-i\kappa g_{ka}\hat{p}}|\alpha_0\rangle+O(\kappa^2). 
\end{align}
We examine only the ancilla state after the projective measurement, which now has the form $|\Psi\rangle$, so up to $O(\delta^2)$
\begin{align}
 &  |x_a\rangle \langle x_a|e^{2i \kappa \nu \hat{p}_k \otimes \hat{p}}|u(t)\rangle |\alpha_0\rangle \propto |x_a\rangle |\Psi\rangle, \nonumber \\
 &|\Psi\rangle= \frac{e^{-\kappa g_{ka}\hat{p}}|\alpha_0\rangle}{\|e^{- \kappa g_{ka}\hat{p}}|\alpha_0\rangle\|}, \qquad \|e^{-\kappa g_{ka}\hat{p}}|\alpha=0\rangle\|^2=e^{\kappa^2\text{Re}( g_{ka})^2},
\end{align}
where the normalisation can be easily verified using $|\alpha_0\rangle=(1/\pi)^{1/4}\int e^{-p^2/2}|p\rangle dp$. Then it can be straightforwardly verified that 
\begin{align}
& \langle \Psi|\hat{x}|\Psi\rangle = \kappa \text{Im} (g_{ka}), \qquad 
    \langle \Psi|\hat{p}|\Psi\rangle =-\kappa \text{Re}(g_{ka}).
\end{align}
Thus to retrieve the gradient we use
\begin{align} \label{eq:cvgka}
    |g_{ka}|=\sqrt{\langle \Psi|\hat{x}|\Psi\rangle^2+\langle \Psi|\hat{p}|\Psi\rangle^2}/\kappa.
\end{align}

To prepare $|\Psi\rangle$, we require the application of a Gaussian two-mode entangling gate (the continuous-variable controlled-phase (CZ) gate) $\exp(2i \kappa\nu \hat{p}_k \otimes \hat{p})$ with small $\nu \ll 1$ coupling constant, followed by a projective measurement onto $|x_a\rangle$. This projective measurement has success probability
\begin{align}
   \text{Tr}\left((\mathbf{1} \otimes |x_a\rangle \langle x_a|)e^{2i \kappa \nu \hat{p}_k \otimes \hat{p}}(|u(t)\rangle \langle u(t)| \otimes |\alpha_0\rangle \langle \alpha_0|)e^{-2i \kappa \nu \hat{p}_k \otimes \hat{p}}\right)=\int \frac{e^{-p^2}}{\sqrt{\pi}}\frac{|u(x_a+p)|^2}{\|\mathbf{u}(t)\|^2} dp=\phi_{coh}(x^{j}).
\end{align}
This success probability can also be easily boosted to a value as close to $|u(t, x_a)|^2/\|\mathbf{u}(t)\|^2$ as needed. For example, instead of using the coherent state input, one can instead choose a state with much less variance in one quadrature direction compared to another and also changing $\exp(2i \kappa \nu \hat{p}_k \otimes \hat{p})$ to $\exp(2i \kappa \nu \hat{p}_k \otimes \hat{x})$. For example, a state highly squeezed in the $\hat{x}$ quadrature would provide much less variance in recovering $\text{Im}(g_{jk})$. However, such an initial state preparation would be more difficult than preparing a coherent state, and the payoff may not be so high since improvement in $\text{Im}(g_{jk})$ estimation can always be improved by increasing sampling size instead of changing the ancilla state.

\begin{lemma} \label{lem:CVgradient}
    The minimum possible precision to recover $\partial S(t, x_a)/\partial x_k$ using $\partial S_{\nu}(t, x_a)/\partial x_k$ is $\epsilon_{\min}$. Then using Algorithm~\ref{Alg-CVgradS} with $\nu \sim O(\epsilon^2_{\min}/d^2)$ and with success probability $1-\delta$, one needs $O(d\ln(1/\delta)/(\kappa^2(\epsilon-\epsilon_{\min})^2))$  copies of $|u(t)\rangle$ from Algorithm~\ref{Alg-CVsimulation} and uses of the two mode control-gate $\exp(2i \kappa \nu \hat{p}_k \otimes \hat{p})$ to recover  $\partial S(t, x_a)/\partial x_k$ to precision $\epsilon$.  
\end{lemma}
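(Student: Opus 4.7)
The plan is to decompose the error in estimating $\partial_k S(t,x_a)$ via $|g_{ka}^{\mathrm{est}}|$ into three contributions, bound each, and then count the quantum resources required. Writing
\begin{align}
\bigl| |g_{ka}^{\mathrm{est}}| - \partial_k S(t,x_a)\bigr| \leq \bigl|\partial_k S - \partial_k S_\nu\bigr| + \bigl|\partial_k S_\nu - |g_{ka}|\bigr| + \bigl||g_{ka}| - |g_{ka}^{\mathrm{est}}|\bigr|, \nonumber
\end{align}
the three terms correspond respectively to (i) the viscous regularisation error, (ii) the truncation error from expanding the two-mode unitary to first order in $\kappa$, and (iii) the finite-sample estimation error for the ancilla quadratures.

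For (i), since $|\partial_k S - \partial_k S_\nu| \leq \|\nabla S - \nabla S_\nu\|_{l_2(\mathbb{T}^d)}$, Lemma~\ref{lem:gomeserror} shows that a budget of $\epsilon_{\min}$ is achieved by choosing $\nu = O(\epsilon_{\min}^2/d^2)$ together with the matching step size $h$ from \eqref{requirement-0}. For (ii), since $g_{ka} = \partial_k S_\nu$ is bounded under the standing hypotheses on $\mathcal{L}$ and $\mathcal{H}$, the $O(\kappa^2 g_{ka}^2)$ correction in Eq.~\eqref{eq:CVgenerica} can be made negligible by choosing $\kappa$ sufficiently small, so it contributes only subleading terms to the sampling budget.

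For (iii), I would use $\langle \Psi|\hat x|\Psi\rangle = \kappa\,\mathrm{Im}(g_{ka})$ and $\langle \Psi|\hat p|\Psi\rangle = -\kappa\,\mathrm{Re}(g_{ka})$ together with the reconstruction identity in Eq.~\eqref{eq:cvgka}. Linearising the map $(\langle \hat x\rangle,\langle \hat p\rangle)\mapsto\sqrt{\langle\hat x\rangle^2+\langle\hat p\rangle^2}/\kappa$ around the true quadratures shows that an additive error $\eta$ in each quadrature propagates to an error $O(\eta/\kappa)$ in $|g_{ka}^{\mathrm{est}}|$. Setting $\eta = O(\kappa(\epsilon-\epsilon_{\min}))$ and invoking a Hoeffding/Chernoff bound for the $O(1)$-variance quadrature measurements on the near-coherent state $|\Psi\rangle$ then yields $O(\ln(1/\delta)/(\kappa^2(\epsilon-\epsilon_{\min})^2))$ copies of $|\Psi\rangle$ as sufficient for confidence $1-\delta$.

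Finally, each copy of $|\Psi\rangle$ consumes one copy of $|u(t)\rangle$ and one application of the two-mode gate $\exp(2i\kappa\nu\hat p_k\otimes\hat p)$, followed by the postselection onto $|x_a\rangle$ that succeeds with probability $\phi_{\mathrm{coh}}(x_a)$. In $d$ dimensions this success probability is controlled by the multimode overlap of $|u(t,\cdot)|^2$ with the coherent-state Gaussian; under generic localisation assumptions this is $\Omega(1/d)$, contributing the stated factor of $d$ in the final copy count. The main obstacle I anticipate is making this last step rigorous, specifically quantifying a lower bound on $\phi_{\mathrm{coh}}(x_a)$ without additional hypotheses on the localisation of $u(t,\cdot)$ near $x_a$; a secondary subtlety is that the linearisation in (iii) becomes singular when $|g_{ka}|\to 0$, and would need a separate treatment in that degenerate regime.
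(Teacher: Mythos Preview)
Your decomposition into (i)--(iii), the linearisation of the reconstruction map $(\langle\hat x\rangle,\langle\hat p\rangle)\mapsto\sqrt{\langle\hat x\rangle^2+\langle\hat p\rangle^2}/\kappa$, and the concentration bound for the quadrature estimates all match the paper's proof. The gap is in where the factor of $d$ comes from.

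In the paper, the $d$ does \emph{not} arise from the postselection probability $\phi_{\mathrm{coh}}(x_a)$. Rather, it comes from the norm bookkeeping forced by Lemma~\ref{lem:gomeserror}, which only controls $\|\nabla S-\nabla S_\nu\|_{l_2(\mathbb{T}^d)}$, not the pointwise error. The paper therefore routes the whole error through the $l_2$ norm,
\[
\bigl|\partial_k S(t,x_a)-|\tilde g_{ka}|\bigr|\le\|\partial_k S-|\tilde g_k|\|_2\le\|\partial_k S-|g_k|\|_2+\||g_k|-|\tilde g_k|\|_2\le\epsilon_{\min}+\sqrt{d}\,\zeta,
\]
using $\||g_k|-|\tilde g_k|\|_2\le\sqrt{d}\,||g_k|-|\tilde g_k||_{\max}=\sqrt{d}\,\zeta$. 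This forces the sampling precision to be $\zeta=(\epsilon-\epsilon_{\min})/\sqrt d$, and substituting into the $O(\ln(1/\delta)/(\kappa\zeta)^2)$ sample count yields the factor of $d$ directly. The postselection overhead $1/\phi_{\mathrm{coh}}$ is kept as a separate multiplicative constant and is not quantified as $\Omega(1/d)$ anywhere; your heuristic that the multimode overlap scales like $1/d$ is neither used nor justified in the paper, and as you correctly anticipate would be hard to make rigorous. So your proof goes through once you move the $\sqrt d$ from the postselection step to the norm conversion in step~(i)/(iii), and simply carry $1/\phi_{\mathrm{coh}}(x_a)$ as an $O(1)$ prefactor.
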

\begin{proof}
We denote the estimate of $g_{ka}$ using Algorithm~\ref{Alg-CVgradS} by $\tilde{g}_{ka}$. Then combining with Lemma~\ref{lem:gomeserror} 
\begin{align}
   &\Bigg|\frac{\partial S(t, x_a)}{\partial x_k}-|\tilde{g}_{ka}|\Bigg|\leq \Bigg|\frac{\partial S(t, x)}{\partial x_k}-|\tilde{g}_{k}|\Bigg|_{\max} \leq  \|\partial S(t, x)/\partial x_k-|\tilde{g}_{k}|\|_2 \leq  \|\partial S(t, x)/\partial x_k-|g_{k}|\|_2+\||g_{k}|-|\tilde{g}_{k}|\|_2 \nonumber \\
   & \leq O(\tilde{\epsilon})+ \sqrt{d}||g_{k}|-|\tilde{g}_{k}||_{max}=O(\tilde{\epsilon})+\sqrt{d}\zeta=\epsilon_{\min}+\sqrt{d}\zeta=\epsilon, \quad \nu=O(\tilde{\epsilon}^2/d^2),
\end{align}
where $g_{k}=-2i \nu\langle x|\hat{p}_k|u(t)\rangle/\langle x|u(t)\rangle$ and $\zeta$ is the maximum sampling error in estimating $g_{ka}$. 
The total error $\epsilon$ can only be reduced by minimising $\zeta$ error. Suppose one measures the expectation values of the ancilla qumode $\langle \Psi |\hat{x}|\hat{\Psi}\rangle$, $\langle \Psi|\hat{p}|\Psi\rangle$ each to precision $\gamma$ with probability $1-\delta$. Now 
$\gamma$ is related to $\zeta$ using Eq.~\eqref{eq:cvgka}, where
$\zeta=(\gamma/\kappa)(\langle \Psi |\hat{x}|\hat{\Psi}\rangle+\langle \Psi |\hat{p}|\hat{\Psi}\rangle)/\sqrt{\langle \Psi |\hat{x}|\hat{\Psi}\rangle^2+\langle \Psi |\hat{p}|\hat{\Psi}\rangle^2}\sim O(\gamma/\kappa)$. Thus the number of $|\Psi\rangle$ states required is $O(\ln(1/\delta)/\gamma^2)=O(\ln(1/\delta)/(\zeta \kappa)^2)$
where $\zeta=(\epsilon-\epsilon_{\min})/\sqrt{d}$. Each $|\Psi\rangle$ requires the preparation of $O(\phi_{coh}(x^j))$ copies of $|u(t)\rangle$ from Algorithm~\ref{Alg-CVsimulation} and the same number of uses of $\exp(2i \kappa \nu \hat{p}_k \otimes \hat{p})$, so the total number of $|u(t)\rangle$ state preparations required need to be multiplied by $1/\phi_{coh}$. From Lemma~\ref{lem:gomeserror} we see it is sufficient to choose $\nu =O(\epsilon^2_{\min}/d^2)$.
\end{proof}

\subsubsection{Digital protocol for gradient estimation} \label{sec:DVgradient}

We will first make a discrete-variable estimate of $g_{jk}$ from Section~\ref{sec:CVgradient} using the discrete-variable state $|u(t)\rangle_{DV}$ and discrete-variable operator $\hat{P}_k$:
\begin{align}
    \hat{g}_{ka}=-2 i \nu \frac{\langle x_a|\hat{P}_k|u(t)\rangle_{DV}}{\langle x_a|u(t)\rangle_{DV}}.
\end{align}
 Instead of discretising the ancilla coherent state, we can instead find a simpler digital protocol by using a single-qubit ancilla $|0\rangle$. We summarise the protocol in Algorithm~\ref{Alg-DVgradS} with cost given in Lemma~\ref{lem:DVgradient}. 

\begin{algorithm}[H]
		\caption{The digital protocol to estimate $|\hat{g}_{ka}|$ at some given $x=x_a$, see Fig.~\ref{fig:DVgradient}. Output estimates $\partial_k S(t, x_a)$ to precision $\epsilon$ with cost given in  Lemma~\ref{lem:DVgradient}}
		\label{Alg-DVgradS}
\begin{enumerate}
    \item Input:   $|u(t)\rangle_{DV}$ from Algorithm~\ref{Alg-DVsimulation};  copies of single-qubit ancilla state $|0\rangle$; ability to perform unitary operation $\exp(2i \kappa \nu \hat{P}_k \otimes \sigma_x)$, $x_a$;
    \item  Simulate $\exp(2i \kappa\nu \hat{P}_k \otimes \sigma_x)|u(t)\rangle_{DV}|0\rangle$;
    \item Project the non-ancilla state onto $|x_a\rangle \langle x_a|$. This has success probability $\phi(x_a)$;
    \item  If step (3) succeeds, the resulting single-qubit ancilla state is $|\hat{\Psi}\rangle$. Measure $\langle \hat{\Psi}|\sigma_z|\hat{\Psi}\rangle$
    \item Output: $|\hat{g}_{ka}|=\sqrt{2/(1+\langle \hat{\Psi}|\sigma_z|\hat{\Psi}\rangle)-1}$
\end{enumerate}
\end{algorithm}

In the limit when $\kappa \ll 1$, one can apply the projective measurement $|x_a\rangle \langle x_a|$ onto the initial state $|u(t)\rangle |0\rangle$ after applying the unitary operation $\exp(2i \kappa \nu \hat{P}_k \otimes \sigma_x)$: 
\begin{align}
   &  |x_a\rangle \langle x_a|e^{2i \kappa \nu \hat{P}_k \otimes \sigma_x}|u(t)\rangle |0\rangle=|x_a\rangle\langle x_a|u(t)\rangle\left(\mathbf{1}-\kappa \hat{g}_{ka}\sigma_x\right)|0\rangle+O(\kappa^2) \nonumber \\
   & \propto |x_a\rangle |\hat{\Psi}\rangle+O(\kappa^2), \qquad |\hat{\Psi}\rangle= \frac{(\mathbf{1}- \kappa \hat{g}_{ka}\sigma_x)|0\rangle}{\||\Psi \rangle\|}, \qquad \| |\Psi \rangle\|^2=1+|\hat{g}_{jk}|^2, 
\end{align}
where $|\hat{g}_{ka}|^2=\text{Re}(\hat{g}_{ka})^2+\text{Im}(\hat{g}_{ka})^2$. It is simple to verify that the expectation values 
\begin{align}
    \langle \hat{\Psi}|\sigma_x|\hat{\Psi}\rangle=-2\frac{\text{Re}( \kappa\hat{g}_{ka})}{1+\kappa^2|\hat{g}_{ka}|^2}, \qquad  \langle \hat{\Psi}|\sigma_y|\hat{\Psi}\rangle=-2\frac{\text{Im}( \kappa \hat{g}_{ka})}{1+\kappa^2|\hat{g}_{ka}|^2}, \qquad  \langle \hat{\Psi}|\sigma_z|\hat{\Psi}\rangle=\frac{1-\kappa^2|\hat{g}_{ka}|^2}{1+ \kappa^2|\hat{g}_{ka}|^2}. 
\end{align}
Since we only need $|\hat{g}_{ka}|$, it is simple to see 
\begin{align}
    |\hat{g}_{ka}|=\frac{1}{\kappa}\sqrt{\frac{2}{1+\langle \hat{\Psi}|\sigma_z|\hat{\Psi}\rangle}-1}.
\end{align}
To prepare $|\hat{\Psi}\rangle$, we require the application of the hybrid CV-DV gate $\exp(2i \kappa \nu \hat{P}_k \otimes \sigma_x)$, followed by a projective measurement onto $|x_a\rangle$. This projective measurement has success probability
\begin{align}
   \text{Tr}\left((\mathbf{1} \otimes |x_a\rangle \langle x_a|)e^{2i \kappa \nu \hat{P}_k \otimes \sigma_x}(|u(t)\rangle \langle u(t)| \otimes |0\rangle \langle 0|)e^{-2i \kappa \nu \hat{P}_k \otimes \sigma_x}\right)=\phi(x_a),
\end{align}
See Fig.~\ref{fig:DVgradient} for the schematic figure of the protocol to estimate $\hat{g}_{ka}$. The total cost is given in Lemma~\ref{lem:DVgradient}.

\begin{figure}[h] 
\centering
\includegraphics[width=12cm]{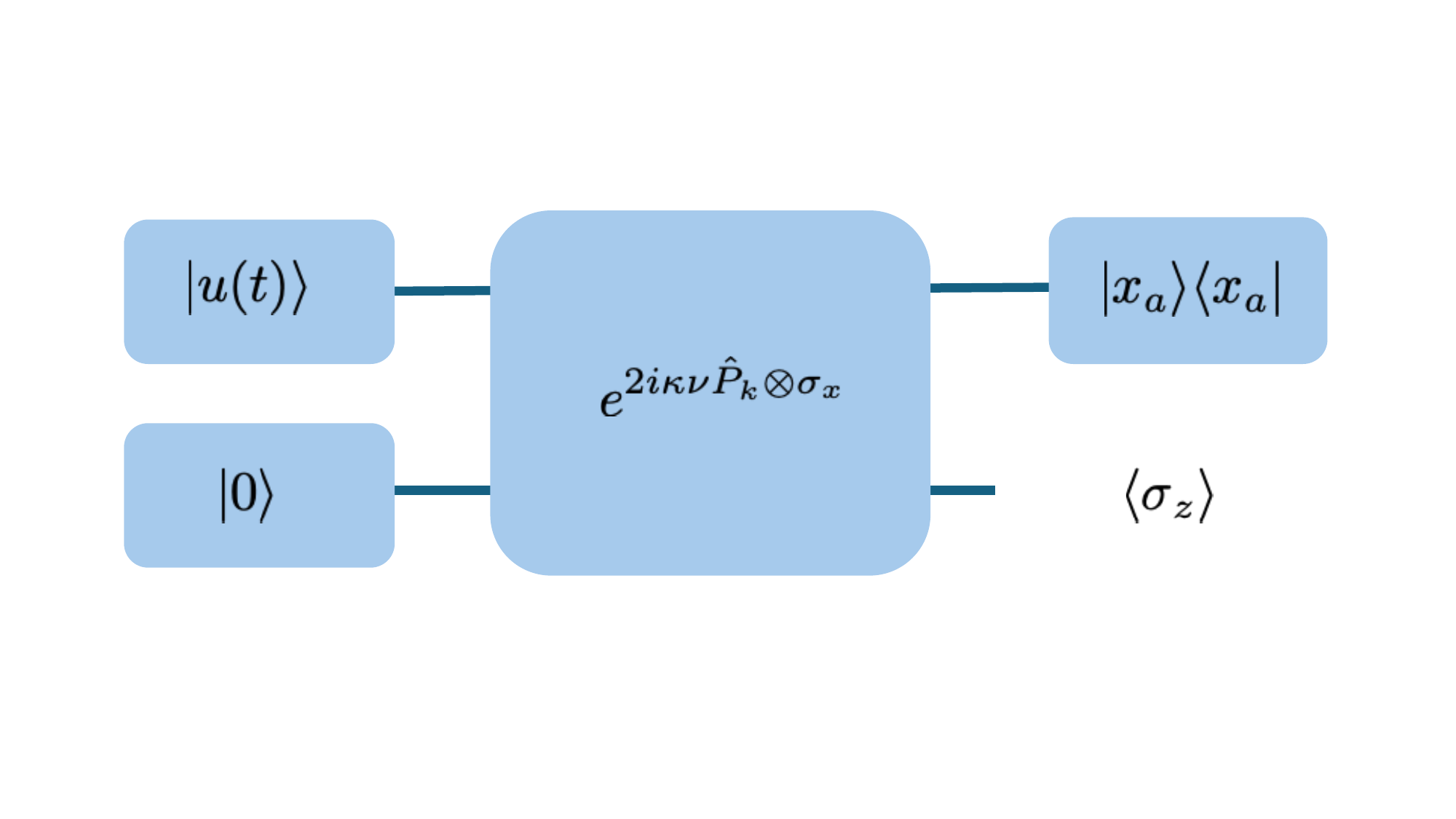} 

\caption{\justifying Schematic diagram of digital quantum protocol for extracting the digital approximation of the gradient $\partial S_{\nu}(t, x=x_a/\partial x_k)$ by  $|\hat{g}_{ka}|$, when given copies of the single-qubit ancilla state $|0\rangle$ and the discrete-variable state $|u(t)\rangle$. Given the initial state $|u(t)\rangle|0\rangle$, we apply the operation $\exp(2i \kappa \nu \hat{P}_k \otimes \sigma_x)$, where $\hat{P}_k$ acts on the $|u(t)\rangle$ register and $\sigma_x$ acts on the qubit ancilla and $\kappa \ll 1$. By projecting the resulting first register onto the desired position $|x_a\rangle \langle x_a|$, the resulting qubit state can be used to estimate $|\hat{g}_{jk}|$ by measuring $\langle \sigma_z\rangle$ of the resulting ancilla qubit.}\label{fig:DVgradient}
\end{figure}

\begin{lemma} \label{lem:DVgradient}
    Let the minimum error to recover $\partial S(t, x_a)/\partial x_k$ be $\epsilon_{\min}$. To estimate $\partial S(t, x_a)/\partial x_k$ to precision $\epsilon$ requires $O(d\ln(1/\delta)/(\kappa^2(\epsilon-\epsilon_{\min})^2))$ copies of $|u(t)\rangle$ from Algorithm~\ref{Alg-DVsimulation}, and the same uses of the qubit-control gate $\exp(2i \epsilon \nu \hat{P}_k \otimes \sigma_x)$, where $\hat{P_k}$ acts on a system of $O(d\ln(N_x))$ qubits. In  Algorithm~\ref{Alg-DVsimulation} it is sufficient to choose $\nu=O(\epsilon^2_{\min}/d^2)$ where $\epsilon_{\min}=d^{3/2}/N_x$.
\end{lemma}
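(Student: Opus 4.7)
The plan is to mirror the proof of Lemma~\ref{lem:CVgradient}, but replace the continuous-variable ancilla/measurement with the qubit ancilla setup of Algorithm~\ref{Alg-DVgradS}, and additionally absorb the spatial-discretisation error using Lemma~\ref{error-dis} instead of the purely continuous Lemma~\ref{lem:gomeserror}. First I would write the total error as
\begin{align}
\Bigl|\frac{\partial S(t,x_a)}{\partial x_k}-|\widetilde{\hat g}_{ka}|\Bigr|
\leq \Bigl\|\frac{\partial S(t,x)}{\partial x_k}-|\hat g_k|\Bigr\|_{l_2(\mathbb{T}^d)}
+ \bigl\||\hat g_k|-|\widetilde{\hat g}_k|\bigr\|_{l_2(\mathbb{T}^d)}
\leq \epsilon_{\min}+\sqrt{d}\,\zeta,
\end{align}
where $\hat g_k$ is the noiseless discrete-variable estimator $-2i\nu\langle x|\hat P_k|u(t)\rangle_{DV}/\langle x|u(t)\rangle_{DV}$, $\widetilde{\hat g}_k$ is its sampled counterpart obtained in step~(5) of Algorithm~\ref{Alg-DVgradS}, and $\zeta=\||\hat g_k|-|\widetilde{\hat g}_k|\|_{\max}$ is the worst-case sampling error. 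The first term is controlled by Lemma~\ref{error-dis}, specifically by the gradient bound \eqref{error-bd} together with the parameter choice \eqref{requirement-bb}, which sets $\nu=O(\epsilon_{\min}^2/d^2)$ and $\Delta x=O(\epsilon_{\min}/d^{3/2})$, i.e.\ $\epsilon_{\min}=d^{3/2}/N_x$.

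Next I would convert the desired accuracy $\zeta$ for $|\widetilde{\hat g}_{ka}|$ into an accuracy requirement on $\langle\hat\Psi|\sigma_z|\hat\Psi\rangle$. Using the explicit relation derived in the text,
\begin{align}
|\hat g_{ka}|=\frac{1}{\kappa}\sqrt{\frac{2}{1+\langle\hat\Psi|\sigma_z|\hat\Psi\rangle}-1},
\end{align}
differentiation gives that estimating $\langle\sigma_z\rangle$ to precision $\gamma$ yields $|\widetilde{\hat g}_{ka}|$ to precision $\sim \gamma/\kappa$ (up to bounded factors depending on the unknown value, which drop out in big-$O$). Setting $\gamma/\kappa\sim \zeta=(\epsilon-\epsilon_{\min})/\sqrt{d}$ and invoking Hoeffding's inequality on the bounded observable $\sigma_z\in[-1,1]$ shows that $O(\ln(1/\delta)/\gamma^{2})=O(d\ln(1/\delta)/(\kappa^2(\epsilon-\epsilon_{\min})^{2}))$ samples of $|\hat\Psi\rangle$ suffice, with overall success probability $1-\delta$, to obtain every quadrature estimate within the required precision.

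Each preparation of $|\hat\Psi\rangle$ consists of one invocation of Algorithm~\ref{Alg-DVsimulation} to produce $|u(t)\rangle_{DV}$, one application of the hybrid CV-DV control gate $\exp(2i\kappa\nu\hat P_k\otimes\sigma_x)$, and one projection onto $|x_a\rangle$; the projection succeeds with probability $\phi(x_a)$, which can be boosted with amplitude amplification as in Algorithm~\ref{Alg-DVsimulation} and hence contributes only to the hidden constants in big-$O$. Since $\hat P_k$ is encoded on the $d\ln N_x$ position qubits, this is also the size of the register on which the control gate acts, as claimed.

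The main obstacle is actually not a conceptual one but a bookkeeping one: carefully propagating the $1/\kappa$ factor and the $\sqrt{d}$ from the max-to-$l_2$ norm conversion so that the stated cost $O(d\ln(1/\delta)/(\kappa^2(\epsilon-\epsilon_{\min})^2))$ emerges cleanly, and verifying that the higher-order terms $O(\kappa^2)$ discarded when writing $|\hat\Psi\rangle$ do not dominate in the regime $\kappa\ll 1$ chosen to make the small-coupling expansion valid. Once this is done, the combination of Lemma~\ref{error-dis} for the deterministic error and Hoeffding for the sampling error completes the argument.
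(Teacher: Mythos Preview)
Your proposal is correct and follows essentially the same approach as the paper's proof: decompose the total error into the deterministic approximation error controlled by Lemma~\ref{error-dis} (with the parameter choices $\nu=O(\epsilon_{\min}^2/d^2)$, $1/N_x=O(\epsilon_{\min}/d^{3/2})$) and a sampling error $\sqrt{d}\,\zeta$, then differentiate the relation between $|\hat g_{ka}|$ and $\langle\hat\Psi|\sigma_z|\hat\Psi\rangle$ to convert the required $\zeta$ into a precision $\gamma\sim\kappa\zeta$ on the ancilla observable, giving the stated $O(d\ln(1/\delta)/(\kappa^2(\epsilon-\epsilon_{\min})^2))$ sample count. The only differences are cosmetic: you invoke Hoeffding explicitly where the paper simply asserts the sampling cost, and you flag the need to check the $O(\kappa^2)$ remainder, which the paper does not mention.
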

\begin{proof}
Here the results are very similar to the analog case, except we need to take into account the discretisation error. Thus following the same reasoning as the analog case using Lemma~\ref{lem:gomeserror} and Lemma~\ref{error-dis} 
\begin{align}
   &\Bigg|\frac{\partial S(t, x_a)}{\partial x_k}-|\tilde{\hat{g}}_{ka}|\Bigg|\leq \Bigg|\frac{\partial S(t, x)}{\partial x_k}-|\tilde{\hat{g}}_{k}|\Bigg|_{\max} 
    \leq O(\tilde{\epsilon})+ \sqrt{d}||\hat{g}_{k}|-|\tilde{\hat{g}}_{k}||_{max}=O(\tilde{\epsilon})+\sqrt{d}\zeta=\epsilon_{\min}+\sqrt{d}\zeta=\epsilon, \nonumber \\
    & \nu=O(\tilde{\epsilon}^2/d^2), \quad 1/N_x=O(\tilde{\epsilon}/d^{3/2}),
\end{align}
where $\hat{g}_{ka}$ differs from  $g_{ka}=\partial S_{\nu}(t, x_a)/\partial x_k$ also by discretisation errors analysed in Lemma~\ref{error-dis}. The total error is $\epsilon=\epsilon_{\min}+\zeta$ where $\zeta$ is the precision of estimating $\hat{g}_{jk}$ using Algorithm~\ref{Alg-DVgradS}. Suppose we measure the expectation values of the ancilla qubit $\langle \hat{\Psi}|\sigma_z|\hat{\Psi}\rangle$ to  precision $\eta$ with probability $1-\delta$. Then $\zeta \sim \eta/(\kappa \hat{g}_{ka}(1+\langle \hat{\Psi}|\sigma_z|\hat{\Psi}\rangle)^2)\sim O(\eta/\kappa)$. The number of $|\Psi\rangle$ states required is therefore $O(\ln(1/\delta)/\eta^2)=O(\ln(1/\delta)/(\kappa^2\zeta^2)$, where $\zeta=(\epsilon-\epsilon_{\min})/\sqrt{d}$. Each $|\hat{\Psi}\rangle$ requires the preparation of $O(\phi(x^j))$ copies of $|u(t)\rangle$ from Algorithm~\ref{Alg-DVsimulation} and the same number of uses $\exp(2i\nu \hat{P}_k \otimes \sigma_x)$. 
\end{proof}

\subsection{Estimating the minimum value of the solution} \label{sec:maximalSCV}

For the following algorithms, we will use the following Assumption~\ref{assump:S}.  
\begin{assumption}\label{assump:S}
For the function $S_{\nu}(t,x)$ at some constant $t$, we will assume the following properties: 
   \begin{enumerate}
    \item $x^*_{\nu}$ is the unique global minimiser of $S_{\nu}(t, x)$; 
    \item $x^*_{\nu}$ is an interior point in the domain; 
    \item The Hessian $H_{S_{\nu}}(t, x^*_{\nu})=\nabla^2S_{\nu}(t, x^*_{\nu})$ is positive definite (so we have a minimum at that point) 
    \item $\|\mathbf{u}(t)\|_{CV}^2=\int e^{-S_{\nu}(t, x)/\nu} dx<\infty$ since we assume $|u(t)\rangle_{CV}$ is a physical state. No extra assumption is needed in the discrete-variable case since the normalisation constant  is always bounded. 
\end{enumerate}
\end{assumption}

We begin by defining the minimum value of $S_{\nu}(t, x)$ at some given time $t$: 
\begin{align}
    S_{\nu, \min}=\inf_{x} S_{\nu}(t, x), 
\end{align}
where we suppress the $t$ in $S_{\nu, \min}$ to simplify notation. Then using Assumption~\ref{assump:S} for $S_{\nu}$, we can show how $S_{\nu, \min}$ can be estimated using the normalisation constant $\|\mathbf{u}(t)\|^2$ for the state $|u(t)\rangle$. We summarise the analog protocol to estimate $S_{\min}$ in Algorithm~\ref{Alg-CVSmin} and find the cost in Lemma~\ref{lem:cvsmin}.

\subsubsection{Analog protocol to estimate $S_{\min}$}
\begin{algorithm}[H]
		\caption{The analog protocol to estimate $S_{\min}$. Output estimates $S_{\min}$ to precision $\epsilon_{S^*}$ with cost given in Lemma~\ref{lem:cvsmin}.}
		\label{Alg-CVSmin}
\begin{enumerate}
    \item Input: $\|\mathbf{u}(t)\|_{CV}^2$ from Algorithm~\ref{Alg-CVsimulation}; 
    \item Output: $-\nu \ln \|\mathbf{u}(t)\|_{CV}^2$.
\end{enumerate}
\end{algorithm}

 In the small $\nu \leq O(1/d) \ll 1$ regime, it is possible to show (see Appendix~\ref{app:cvsmin}) that $S_{\nu, \min}$ can be approximated using $-\nu \ln \|\mathbf{u}(t)\|^2$ such that 
\begin{align} \label{eq:smaxapprox} 
    |S_{\nu, \min}+\nu \ln \|\mathbf{u}(t)\|^2| \lesssim \tilde{O}(\nu d),
\end{align}
where $\tilde{O}$ contains log factors of $\nu$. From Section~\ref{sec:quantumsimulation} and Fig.~\ref{fig: Normalisation}, one sees that there is a simple protocol to estimate the normalisation constant $\|\mathbf{u}\|^2$ of $|u(t)\rangle$ as a byproduct of preparing $|u(t)\rangle$, and this therefore also becomes a protocol to estimate $S_{\nu, \min}$. Combining Eq.~\eqref{eq:smaxapprox} with Lemma~\ref{lem:gomeserror},  in order to use the normalisation constant $\|\mathbf{u}(t)\|^2$ to estimate the minimum of the viscosity solution $S_{\min}=\inf_{x} S(t,x)$ to precision $\epsilon_{\min}$, the minimum possible error in the scheme is:  
\begin{align}
    |S_{\min}+v\ln \|\mathbf{u}(t)\|^2| \leq |S_{\min}-S_{\nu, \min}|+|S_{\nu,\min}+\nu\ln \|\mathbf{u}(t)\|^2|\lesssim O(\tilde{\epsilon})+ \tilde{O}(\nu d)=\epsilon_{\min},  
\end{align}
where $\nu=O(\tilde{\epsilon}^2/d^2)$.
Thus the dominant error still comes from $|S_{\min}-S_{\nu, \min}|$, so we can write $\epsilon_{\min} \sim \sqrt{\nu} d$. This means that even when the precision of estimation for $\|\mathbf{u}(t)\|^2$ is perfect (i.e. an infinite number of measurements in the protocol in Fig.~\ref{fig: Normalisation}), there is still a minimum error $\epsilon_{
\min}$ in estimating $S_{\min}$ dependent on $\nu$, and $\epsilon_{\min} \rightarrow 0$ only when $\nu \rightarrow 0$.  \\

However, we only have the above minimum error when the normalisation constant itself is perfectly known. In general, the normalisation constant itself is not known perfectly due to a limited (finite) number of measurements allowed in the quantum protocol. Below we give the cost in number of measurements of the final ancilla state in Fig.~\ref{fig: Normalisation} to obtain the estimate $\|\tilde{\mathbf{u}}(t)\|^2$, so $S_{\min}$ can be obtained to precision $\epsilon_S$, which is a combination of $\epsilon_{\min}$ and the sampling error from the quantum protocol. 

\begin{lemma} \label{lem:cvsmin}
   The minimum precision in estimating $S_{\min}$ using the normalisation constant of $S_{\nu}$ is $\epsilon_{\min}$. Let $\epsilon_{S^*}$ be the total error in estimating $S_{\min}$ when using Algorithm~\ref{Alg-CVsimulation} to estimate the normalisation constant $\|\mathbf{u}(t)\|^2$. Then the number of preparations of state $|v(t)\rangle$ in Step (2) of  Algorithm~\ref{Alg-CVsimulation} required to estimate the normalisation constant is 
   \begin{align}
    O\left(\frac{\nu^2}{\|\mathbf{u}(t)\|^4}\ln\left(\frac{1}{\delta}\right)\frac{1}{(\epsilon_{S^*}-\epsilon_{\min})^2}\right), 
\end{align}
where it is sufficient to choose $\nu=O(\epsilon_{\min}^2/d^2)$ for some fixed $\epsilon_{\min}$. 
   \end{lemma}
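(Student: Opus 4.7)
The plan is to decompose the total error in estimating $S_{\min}$ into two independent contributions and then convert the required precision on $\|\mathbf{u}(t)\|^2$ into a sample complexity through a standard concentration bound. Write $\tilde{S}=-\nu\ln\|\tilde{\mathbf{u}}(t)\|^2$ for the estimator produced by Algorithm~\ref{Alg-CVSmin} when Algorithm~\ref{Alg-CVsimulation} returns the estimate $\|\tilde{\mathbf{u}}(t)\|^2$ of $\|\mathbf{u}(t)\|^2$. By the triangle inequality,
\begin{align}
|S_{\min}-\tilde{S}|\ \le\ |S_{\min}+\nu\ln\|\mathbf{u}(t)\|^2|\ +\ \nu\bigl|\ln\|\mathbf{u}(t)\|^2-\ln\|\tilde{\mathbf{u}}(t)\|^2\bigr|,
\end{align}
where the first term is controlled by $\epsilon_{\min}$ by combining Lemma~\ref{lem:gomeserror} with Eq.~\eqref{eq:smaxapprox} (and choosing $\nu=O(\epsilon_{\min}^2/d^2)$), so that everything left to bound is the second, sampling-induced term.

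First I would linearise the logarithm: letting $p=\|\mathbf{u}(t)\|^2$ and $\tilde{p}=\|\tilde{\mathbf{u}}(t)\|^2$, a first-order expansion gives $|\ln p-\ln\tilde{p}|\le |p-\tilde{p}|/p+O(|p-\tilde{p}|^2/p^2)$, so requiring the sampling contribution to be at most $\epsilon_{S^*}-\epsilon_{\min}$ amounts to requiring the probability estimate to satisfy
\begin{align}
|p-\tilde{p}|\ \lesssim\ \frac{\|\mathbf{u}(t)\|^2\,(\epsilon_{S^*}-\epsilon_{\min})}{\nu}.
\end{align}
Second, I would recall from Algorithm~\ref{Alg-CVsimulation} that, with $\|\mathbf{u}(0)\|=1$ and an ideal detector so that $\int_0^\infty e^{-\xi}g(\xi)\,d\xi=1$, each preparation of $|v(t)\rangle$ followed by the $\xi>0$ post-selection is a Bernoulli trial with success probability exactly $p_{\text{succ}}=\|\mathbf{u}(t)\|^2$. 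Thus $\tilde{p}$ is simply an empirical mean of independent $\{0,1\}$ random variables.

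Third, I would invoke Hoeffding's inequality: with $N$ i.i.d.\ Bernoulli trials, $\Pr(|\tilde{p}-p|>\epsilon_p)\le 2e^{-2N\epsilon_p^2}$, so $N=O(\ln(1/\delta)/\epsilon_p^2)$ preparations of $|v(t)\rangle$ suffice for success probability $1-\delta$. Substituting $\epsilon_p=\|\mathbf{u}(t)\|^2(\epsilon_{S^*}-\epsilon_{\min})/\nu$ yields
\begin{align}
N\ =\ O\!\left(\frac{\nu^2}{\|\mathbf{u}(t)\|^4}\,\ln\!\left(\frac{1}{\delta}\right)\frac{1}{(\epsilon_{S^*}-\epsilon_{\min})^2}\right),
\end{align}
which is exactly the claimed complexity.

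The main obstacle is really just bookkeeping: one must be careful that the linearisation of $\ln$ is valid in the regime of interest (i.e.\ $\epsilon_p\ll p$, which is automatic once the target total precision $\epsilon_{S^*}$ is small enough relative to $\nu/\|\mathbf{u}(t)\|^2$), and that the approximation $|S_{\min}+\nu\ln\|\mathbf{u}(t)\|^2|\lesssim\tilde{O}(\nu d)$ from Eq.~\eqref{eq:smaxapprox} is indeed absorbed into $\epsilon_{\min}$ under the choice $\nu=O(\epsilon_{\min}^2/d^2)$. No other subtlety arises, since the detector imperfection factor $\int_0^\infty e^{-\xi}g(\xi)\,d\xi$ only modifies the constants, and amplitude amplification is unnecessary in the analog setting.
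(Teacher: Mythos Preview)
Your proof is correct and follows essentially the same approach as the paper's: both decompose the error via the triangle inequality into the intrinsic term $|S_{\min}+\nu\ln\|\mathbf{u}(t)\|^2|\le\epsilon_{\min}$ (handled by Eq.~\eqref{eq:smaxapprox} and Lemma~\ref{lem:gomeserror} with $\nu=O(\epsilon_{\min}^2/d^2)$) and the sampling term, linearise the logarithm to convert a target $\zeta=\epsilon_{S^*}-\epsilon_{\min}$ into a precision $\epsilon_{\text{norm}}\sim\zeta\|\mathbf{u}(t)\|^2/\nu$ on the probability, and then use the standard $O(\ln(1/\delta)/\epsilon_{\text{norm}}^2)$ sample count. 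Your version is slightly more explicit in naming Hoeffding's inequality for the Bernoulli sampling step, whereas the paper simply asserts the $O(\ln(1/\delta)/\epsilon_{\text{norm}}^2)$ scaling; this is a cosmetic rather than substantive difference.
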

\begin{proof}
See the proof of Eq.~\eqref{eq:smaxapprox} in Appendix~\ref{app:cvsmin}, where in the proof of Eq.~\eqref{eq:smaxapprox} we used $\nu \leq O(1/d)$. In order to estimate the minimum viscosity solution $S_{\min}$ to precision $\epsilon_S$ using the estimated $\|\tilde{\mathbf{u}}(t)\|^2$, we combine Eq.~\eqref{eq:smaxapprox} and Lemma~\ref{lem:gomeserror} and see that it is sufficient to impose 
   \begin{align} \label{eq:sminerror}
      &  |S_{\min}+\nu \ln \|\tilde{\mathbf{u}}(t)\|^2| \leq |S_{\min}-S_{\nu, \min}|+|S_{\nu,\min}+\nu \ln \|\mathbf{u}(t)\|^2|+|\nu \ln \|\mathbf{u}(t)\|^2-\nu \ln \|\tilde{\mathbf{u}}(t)\|^2| \nonumber \\
      & \leq O(\tilde{\epsilon})+ \nu d\ln(2\pi \nu/\det H_{S_{\nu}}(x^*))/2+\tilde{O}_{hot}(\nu)+\zeta \lesssim  O(\tilde{\epsilon})+\tilde{O}(\nu d)+\zeta=\epsilon_{\min}+\zeta =\epsilon_{S^*},
   \end{align}
   where $\nu=O(\tilde{\epsilon}^2/d^2)$ and only the last error $\epsilon$ can be improved by increasing sampling in the quantum protocol in Fig.~\ref{fig: Normalisation}, and $\epsilon_{\min}(\nu)$ is the infimum error that can be achieved.  \\

One can relate $\zeta$ to the sampling cost in the following way. Let $\epsilon_{norm}$ denote the precision to which one can measure the normalisation $\|\mathbf{u}(t)\|^2$. Let $1-\delta$ be the success probability, then the total measurement cost (which is equivalent to the number of repetitions of the quantum simulation with respect to $\mathbf{H}(t)$, since the normalisation is estimated as a byproduct -- it is the probability of measuring the final ancilla state in $\xi>0$) in the protocol in Fig.~\ref{fig: Normalisation} is $O(\ln(1/\delta)/\epsilon^2_{norm})$. Then we can use 
\begin{align}
    |\nu \ln \|\mathbf{u}(t)\|^2-\nu \ln \|\tilde{\mathbf{u}}(t)\|^2| \leq \zeta \implies \zeta \sim \nu \epsilon_{norm}/\|\mathbf{u}(t)\|^2,
\end{align}
so the sampling cost $O(\ln(1/\delta)/\epsilon^2_{norm}) \sim O(\nu^2\ln(1/\delta)/(\zeta^2\|\mathbf{u}(t)\|^4)$. From Eq.~\eqref{eq:sminerror} it is sufficient to choose $\nu=O(\epsilon^2_{\min}/d^2)$ and $\zeta=\epsilon_{S^*}-\epsilon_{\min}(\nu)$, so the quantum sampling cost $O(\nu^2\ln(1/\delta)/(\zeta^2\|\mathbf{u}(t)\|^4)$  can be rewritten in terms of $\epsilon_{\min}(\nu)$ and $\epsilon_{S^*}$. 
\end{proof}

\subsubsection{Digital protocol to estimate $S_{\min}$}
\begin{algorithm}[H]
		\caption{The digital protocol to estimate $S_{\min}$. Output estimates $S_{\min}$ to precision $\epsilon_{S^*}$ with cost given in Lemma~\ref{lem:DVsmin}.}
		\label{Alg-DVSmin}
\begin{enumerate}
    \item Input: $\|\mathbf{u}(t)\|_{DV}^2$ from Algorithm~\ref{Alg-DVsimulation};
    \item Output: $-\nu \ln \|\mathbf{u}(t)\|_{DV}^2$.
\end{enumerate}
\end{algorithm}

In the digital protocol, one can proceed similarly, where the minimal viscosity solution of the Hamilton-Jacobi equation can also be approximated using only the normalisation constant. In the digital case, give the  discrete mesh points  $x_j=j/N_x$, and define the minimum of $S_{\nu}$ by 
\begin{align}
    S_{\nu, min}(t)=\min_{j} S_{\nu}(t, x_j).
\end{align}
One can use a similar bound to Eq.~\eqref{eq:smaxapprox} to estimate $S_{\nu, \min}$, except we can derive it in a much simpler way 
(similar to the LogSumExp (LSE) function $\ln \sum_j \exp(y_j)$, used widely in machine learning to construct a smooth approximation to a maximum function). It is simple to see the following (using $S_{\nu, \Delta, \min}(t)<S_{\nu, \Delta}(t, x_j)$ for any $x_j$ that is not the minimum point) and Lemma~\ref{lem:gomeserror} and Lemma~\ref{error-dis}:
\begin{align} \label{eq:logsum}
   & \bigg|\nu \ln \left(\|\mathbf{u}(t)\|_{DV}^2\right)+S_{\min}\bigg| \leq \bigg|\nu \ln \left(\|\mathbf{u}(t)\|_{DV}^2\right)+S_{\nu,\Delta, \min}\bigg|+\bigg|S_{\nu, \Delta, \min}-S_{\min}\bigg| \leq \bigg|\nu \ln \left(\sum_{j}^{N_x^d}e^{-\frac{S_{\nu, \Delta}(t, x_j)}{\nu}}\right)+S_{\nu,\Delta, \min}\bigg|+\tilde{\epsilon}\nonumber \\
   &=\bigg|\nu \ln\left(e^{-S_{\nu,\Delta, \min}(t)/\nu}\sum_{j}^{N_x^d}e^{\frac{-S_{\nu, \Delta}(t, x_j)+S_{\nu,\Delta, \min}(t)}{\nu}}\right)+S_{\nu,\Delta, \min}\bigg| +\tilde{\epsilon}=\nu \ln \left(\sum_{j}^{N_x^d}e^{\frac{-(S_{\nu, \Delta}(t, x_j)-S_{\nu,\Delta, \min}(t))}{\nu}}\right) +\tilde{\epsilon} \nonumber \\
   &\leq \nu \ln (N_x^d)+\tilde{\epsilon}=\nu d \ln N_x+\tilde{\epsilon}=\epsilon_{\min}, \qquad \nu=O(\tilde{\epsilon}^2/d^2), \qquad 1/N_x=O(\sqrt{\tilde{\epsilon}/d}),
\end{align}
where $\epsilon_{\min}$ is the minimum error in estimating $S_{\min}$ even if the normalisation constant is perfectly known, just like in the case of the analog protocol.

\begin{lemma} \label{lem:DVsmin}
   Let minimum precision in estimating $S_{\min}$ using the normalisation constant from $S_{\nu}$ be $\epsilon_{\min}$ when one uses $\ln N_x$ qubits to represent $|u(t)\rangle_{DV}$. Let $\epsilon_{S^*}$ be the total error in estimating $S_{\min}$ when we are using Algorithm~\ref{Alg-DVsimulation} to estimate the normalisation constant $\|\mathbf{u}(t)\|_{DV}^2$. The number of preparations of state $|v(t)\rangle$ in Step (2) of Algorithm~\ref{Alg-2} to estimate the normalisation constant is 
   \begin{align}
    O\left(\frac{\nu^2}{\|\mathbf{u}(t)\|^4}\ln\left(\frac{1}{\delta}\right)\frac{1}{(\epsilon_{S^*}-\epsilon_{\min})^2}\right), 
\end{align}
where the constant $\nu$ used in the quantum simulation algorithm can be chosen $\nu=1/N_x^4$ and the resulting minimum error is  $\epsilon_{\min}=\sqrt{\nu}d+\nu d\ln(1/\nu)/4$. 
\end{lemma}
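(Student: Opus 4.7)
The plan is to mirror the proof of Lemma~\ref{lem:cvsmin}, with two key adaptations for the discrete-variable setting: (a) the integral-based Laplace-type estimate of Appendix~\ref{app:cvsmin} is replaced by the elementary LogSumExp-style bound already carried out in Eq.~\eqref{eq:logsum}, and (b) the regulariser $\nu$ is coupled to the grid parameter $N_x$ via the choice $\nu = 1/N_x^4$ so the $\sqrt{\nu}d$ and $\nu d \ln N_x$ contributions can be written purely in terms of $\nu$.

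First I would decompose the total error by the triangle inequality:
\begin{align}
\left|S_{\min} + \nu \ln \|\tilde{\mathbf{u}}(t)\|_{DV}^2\right|
&\leq \left|S_{\min} - S_{\nu,\Delta,\min}\right| \nonumber\\
&\quad + \left|S_{\nu,\Delta,\min} + \nu \ln \|\mathbf{u}(t)\|_{DV}^2\right| \nonumber\\
&\quad + \left|\nu \ln \|\mathbf{u}(t)\|_{DV}^2 - \nu \ln \|\tilde{\mathbf{u}}(t)\|_{DV}^2\right|.
\end{align}
The first term is controlled by combining Lemma~\ref{lem:viscosityerror} (with the worst-case exponent $\beta = 1/2$) and Lemma~\ref{error-dis}, yielding $O(\sqrt{\nu}d)$ provided $h$ and $\Delta x = 1/N_x$ are chosen small enough to meet Eq.~\eqref{requirement-aa}. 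The second term is exactly the LogSumExp bound in Eq.~\eqref{eq:logsum}, which gives $\nu d \ln N_x$; substituting $\nu = 1/N_x^4$ converts this to $\nu d \ln(1/\nu)/4$. Summing these two deterministic contributions gives the stated $\epsilon_{\min} = \sqrt{\nu}d + \nu d \ln(1/\nu)/4$.

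The third term is the sampling error, which I would handle just as in Lemma~\ref{lem:cvsmin}. A first-order Taylor expansion of the logarithm gives $\zeta := |\nu \ln \|\mathbf{u}(t)\|^2 - \nu \ln \|\tilde{\mathbf{u}}(t)\|^2| \sim \nu \epsilon_{norm}/\|\mathbf{u}(t)\|^2$, where $\epsilon_{norm}$ is the precision to which $\|\mathbf{u}(t)\|^2$ is estimated. Since the Schr\"odingerisation normalisation estimator in Algorithm~\ref{Alg-DVsimulation} recovers $\|\mathbf{u}(t)\|^2$ as the success probability of a projective measurement on the ancilla, standard Hoeffding concentration gives $O(\ln(1/\delta)/\epsilon_{norm}^2)$ repetitions of the $|v(t)\rangle$ preparation to achieve precision $\epsilon_{norm}$ with confidence $1-\delta$. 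Setting $\zeta = \epsilon_{S^*} - \epsilon_{\min}$ and solving for the sample count yields the claimed bound $O(\nu^2 \ln(1/\delta)/((\epsilon_{S^*} - \epsilon_{\min})^2 \|\mathbf{u}(t)\|^4))$.

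The main obstacle is not a technical one but rather careful bookkeeping around the coupling $\nu = 1/N_x^4$: one must check that this choice remains consistent with Lemma~\ref{error-dis} after an appropriate selection of the time step $h$, so that the $h^2 d^3/\nu$, $h d^2$ and $(h/\nu)^{3/2} d^2$ terms appearing there are indeed dominated by $\sqrt{\nu}d$. Beyond this sanity check, the argument is essentially parallel to the continuous-variable case treated in Lemma~\ref{lem:cvsmin}, with the LogSumExp bound playing the role of the Laplace approximation.
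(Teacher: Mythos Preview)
Your proposal is correct and follows essentially the same route as the paper: reduce to the argument of Lemma~\ref{lem:cvsmin}, swap the Laplace estimate for the LogSumExp bound of Eq.~\eqref{eq:logsum}, and recover the sampling cost exactly as before. The paper makes the origin of the coupling $\nu=1/N_x^4$ explicit by taking $N_x^2=d/\tilde{\epsilon}$ and $\nu=\tilde{\epsilon}^2/d^2$ (consistent with Eq.~\eqref{requirement-aa}), which is precisely the bookkeeping check you flagged as the remaining obstacle.
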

\begin{proof}
Here Algorithm~\ref{Alg-DVsimulation} can be used to estimate the normalisation constant. The proof then proceeds in the same way as Lemma~\ref{lem:cvsmin}, except now $\epsilon_{
min}$ is slightly modified, see Eq.~\eqref{eq:logsum}. Using Eq.~\eqref{eq:logsum},  one can for example choose $N_x^2=d/\tilde{\epsilon}$ and $\nu=\tilde{\epsilon}^2/d^2$, which implies $\nu=1/N_x^4$. Then the minimum error can be rewritten $\epsilon_{\min}=\sqrt{\nu}d+\nu d\ln(1/\nu)/4=d/N_x^2+d\ln(N_x)/N_x^4 \sim d/N_x^2$. 
\end{proof}


 \subsection{Estimating value of known $f(x)$ at the minimal point} \label{sec:knownfunctionmax}
 
 Suppose there is a known function $f: X \rightarrow \mathbb{R}$ and one wants to estimate the value of this function at the minimal  point $x=x^*$ of $S$ at a particular $t$ defined by 
\begin{align}
    x^*=\text{argmin}_x S(t, x). 
\end{align}
However, one only has direct access to $S_{\nu}(t,x)$, and we define its minimum to occur at $x=x^*_{\nu}$
\begin{align}
    x^*_{\nu}=\text{argmin}_x S_{\nu}(t, x).
\end{align}
We can bound $\|x^*-x^*_{\nu}\|$ if for example by assuming a strongly convex initial condition $S_0(x)$ function, which for convex $\mathcal{H}(\nabla S, x)$ as we consider, implies convexity in the solution $S_{\nu}(t,x)$. Furthermore, if  assuming that $f(x)$ is a Lipschitz continuous function, then one can also bound $|f(x^*)-f(x^*_{\nu})|$.

\begin{lemma} \label{lem:fdiff}
    Let $S_0(x)$ be a smooth, $\mu$-strongly convex function, and  $f(x)$ be a Lipschitz continuous function with Lipschitz constant $K$. Then 
    \begin{align}
        |f(x^*)-f(x^*_{\nu})| \leq 2K\epsilon/\sqrt{\mu}, \qquad \nu=O(\epsilon^4/d^2).
    \end{align}
\end{lemma}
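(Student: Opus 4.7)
The plan is to use Lipschitz continuity of $f$ to reduce the claim to a bound on $\|x^*-x^*_\nu\|$, and then to control this distance by combining strong convexity of $S(t,\cdot)$ with the sup-norm error between $S$ and $S_\nu$ supplied by Lemma~\ref{lem:viscosityerror}.

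First, the Lipschitz hypothesis immediately gives $|f(x^*)-f(x^*_\nu)|\le K\|x^*-x^*_\nu\|$, so it suffices to show $\|x^*-x^*_\nu\|\le 2\epsilon/\sqrt{\mu}$. Assuming that strong convexity of $S_0$ propagates to $S(t,\cdot)$ with the same constant $\mu$ (discussed below), one has $\tfrac{\mu}{2}\|x^*-x^*_\nu\|^2 \le S(x^*_\nu)-S(x^*)$ around the minimiser $x^*$. Inserting and subtracting $S_\nu$ and using $S_\nu(x^*_\nu)\le S_\nu(x^*)$ (since $x^*_\nu$ minimises $S_\nu$) gives
\begin{align*}
S(x^*_\nu)-S(x^*) = [S(x^*_\nu)-S_\nu(x^*_\nu)] + [S_\nu(x^*_\nu)-S_\nu(x^*)] + [S_\nu(x^*)-S(x^*)] \le 2\,\|S-S_\nu\|_\infty,
\end{align*}
so $\|x^*-x^*_\nu\|\le 2\sqrt{\|S-S_\nu\|_\infty/\mu}$. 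Taking Lemma~\ref{lem:viscosityerror} in the worst-case regime $\beta=1/2$ yields $\|S-S_\nu\|_\infty\le C\,d\,\nu^{1/2}$ for an absolute constant $C$, and substituting produces $|f(x^*)-f(x^*_\nu)|\le 2K\bigl(Cd\nu^{1/2}/\mu\bigr)^{1/2}$. Demanding the right-hand side to be at most $2K\epsilon/\sqrt{\mu}$ forces $Cd\nu^{1/2}\le \epsilon^2$, i.e.\ $\nu=O(\epsilon^4/d^2)$, matching the statement.

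The main obstacle is the propagation of strong convexity from $S_0$ to $S(t,\cdot)$ with the same effective constant $\mu$ (up to a universal factor). Since $\mathcal{H}$ is convex in $p$, one appeals to the Hopf-Lax representation $S(t,x)=\inf_y\{t\mathcal{L}((x-y)/t)+S_0(y)\}$, which exhibits $S(t,\cdot)$ as an inf-convolution of the strongly convex $S_0$ with the convex Lagrangian term. Passing to Fenchel conjugates gives $S(t,\cdot)^* = S_0^* + t\mathcal{L}^*(\cdot/t)$, where $S_0^*$ is $(1/\mu)$-smooth; combined with strict convexity and superlinearity of $K$ (hence of $\mathcal{L}$), the sum on the right has a Lipschitz gradient, which dualises back to strong convexity of $S(t,\cdot)$, possibly with a modified constant that can be absorbed into $\mu$. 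An analogous statement for $S_\nu$ follows because the linear parabolic flow for $u = e^{-S_\nu/2\nu}$ preserves log-concavity under standard assumptions. Modulo this convexity bookkeeping the remaining steps are elementary inequalities.
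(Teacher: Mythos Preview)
Your argument follows essentially the same route as the paper: reduce to bounding $\|x^*-x^*_\nu\|$ via the Lipschitz constant $K$, then control this distance by combining strong convexity with the sup-norm error $\|S-S_\nu\|_\infty$, and finally choose $\nu$ so that the error is $O(\epsilon^2)$. The only structural difference is that you invoke strong convexity of $S(t,\cdot)$ about its minimiser $x^*$, whereas the paper uses strong convexity of $S_\nu(t,\cdot)$ about $x^*_\nu$; the two decompositions are mirror images of one another and lead to the identical bound $\|x^*-x^*_\nu\|\le 2\sqrt{\|S-S_\nu\|_\infty/\mu_t}$.

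One caution regarding your convexity-propagation sketch: the inf-convolution argument runs in the wrong direction. Writing $S(t,\cdot)^*=S_0^*+t\mathcal{H}$, the Hessian of the right-hand side is $\nabla^2 S_0^*+t\nabla^2\mathcal{H}$, which is \emph{at least} $\nabla^2 S_0^*$; hence the gradient-Lipschitz constant of the sum is at least $1/\mu$, and dualising yields strong convexity of $S(t,\cdot)$ with parameter \emph{at most} $\mu$, not at least. Concretely, for $\mathcal{H}(p)=|p|^2/2$ and $S_0(x)=\tfrac{\mu}{2}|x|^2$, Hopf--Lax gives $S(t,x)=\tfrac{\mu}{2(1+\mu t)}|x|^2$, so $\mu_t=\mu/(1+\mu t)<\mu$. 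The paper does not prove its corresponding claim either (it simply asserts $\mu_t\ge\mu$), so your gap matches the original, but the specific justification you offer is misleading as written. For fixed $t$ you still obtain some $\mu_t>0$, which is enough for the conclusion with $\mu$ replaced by $\mu_t$.
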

\begin{proof}
    Suppose $S_{\nu}(t,x)$ is a  $\mu_t$-strongly convex function at time $t$, so for any $x \in \mathbb{R}^d$, $S_{\nu}(x) \geq S_{\nu}(x^*_{\nu})+\mu_t\|x-x^*_{\nu}\|^2/2$, since $\nabla S_{\nu}(x^*_{\nu})=0$. Inserting $x=x^*$, one gets the inequality $S_{\nu}(x^*)-S_{\nu}(x^*_{\nu}) \geq \mu_t\|x^*-x^*_{\nu}\|^2/2$. From Lemma~\ref{lem:gomeserror}, one knows that $-\epsilon \leq S_{\nu}(x)-S(x) \leq \epsilon$ where $\nu=O(\epsilon^2/d^2)$, which implies the inequality chain $S_{\nu}(x^*)-\epsilon \leq S(x^*) \leq S(x^*_{\nu}) \leq S_{\nu}(x^*_{\nu})+\epsilon$, so $S_{\nu}(x^*) \leq S_{\nu}(x^*_{\nu})+2\epsilon$. Together this implies $\|x^*-x^*_{\nu}\|\leq 2\sqrt{\epsilon/\mu_t}$.  Using this with the Lipschitz continuity of $f(x)$, this implies 
    \begin{align}
      |f(x^*)-f(x^*_{\nu})|\leq K \|x^*-x^*_{\nu}\| \leq 2K \epsilon/\sqrt{\mu_t}, \qquad \nu=O(\epsilon^4/d^2).
    \end{align}
    It is possible to show, when $h$ is convex,  that $\mu_t \geq \mu$, where the initial condition $S_{\nu}(0,x)=S_0(x)$ is a $\mu$-strongly convex function. 
\end{proof}
For our following protocols to estimate $f(x^*)$, in addition to use  Assumption~\ref{assump:S} on $S_{\nu}(t,x)$, we also require the additional assumption of $S_0(x)$ being $\mu$-strongly convex (equivalent to $H_S(0,x^*_{\nu}) \geq \mu I)$. 

\subsubsection{Analog protocol to compute $f(x)$ at minimal point}
\label{sec:cvfmin}
We summarise the analog protocol to estimate $f(x^*)$ in Algorithm~\ref{Alg-CVfmin}, with cost given in Lemma~\ref{lem:cvfmax}. 
\begin{algorithm}[H]
		\caption{The analog protocol to estimate $f(x^*)$. Output estimates $f(x^*)$ to precision $\epsilon_f$ with cost given in Lemma~\ref{lem:cvfmax}.}
		\label{Alg-CVfmin}
\begin{enumerate}
    \item Input: Access to $f(\hat{x})$ as a sum of Hermitian operators and  $|u(t)\rangle_{CV}$ from Algorithm~\ref{Alg-CVsimulation}; 
    \item Output: $\langle u(t)|f(\hat{x})|u(t)\rangle_{CV}$.
\end{enumerate}
\end{algorithm}
 
In this section, we will show how by preparing $|u(t)\rangle$ and by promoting each $x_i \rightarrow \hat{x}_i$ in the function $f(x=(x_1, \cdots, x_d))$ to the operator form using the shorthand $f(\hat{x})$, we can estimate $f(x^*_{\nu})$ to high precision by measuring the expectation value $\langle u(t)|f(\hat{x})|u(t)\rangle$. Schematically, when $\nu$ is very small so one can make use of Laplace's method, we can write (see Appendix~\ref{app:fmin})
\begin{align} \label{eq:cvfstar}
    |\langle f(\hat{x}) \rangle -f(x^*_{\nu})|=\Bigg|\frac{\int f(x)e^{-S_{\nu}(t, x)/\nu}dx}{\int e^{-S_{\nu}(t,x)/\nu}dx}- f(x^*_{\nu})\Bigg| \leq O(\nu), \qquad \langle f(\hat{x}) \rangle =\langle u(t)|f(\hat{x})|u(t)\rangle,
\end{align}
plus higher order terms in $\nu$. 
Since $f(x)$ can always be approximated by sums of polynomials, the expectation value $\langle f(\hat{x}) \rangle$ can always be measured with homodyne measurements to obtain moments of $\hat{x}$. Denoting the estimate of this expectation by $\langle \tilde{f}(\hat{x})\rangle$, then one can only use the quantum protocol to reduce $|\langle f(\hat{x}) \rangle -\langle \tilde{f}(\hat{x}) \rangle | \leq \epsilon$ with more accurate estimation by quantum sampling. So, using Lemma~\ref{lem:fdiff} and Eq.~\eqref{eq:cvfstar}, the total error in obtaining $f(x^*)$ is $\epsilon_f$
\begin{align} \label{eq:fineq}
  &  |f(x^*)-\langle \tilde{f}(\hat{x})\rangle| \leq |f(x^*)-f(x^*_{\nu})|+|f(x^*_{\nu})-\langle f(\hat{x}) \rangle |+ |\langle f(\hat{x}) \rangle -\langle \tilde{f}(\hat{x})\rangle| \nonumber \\
  & \leq O(\epsilon)+O(d\nu)+\zeta=\epsilon_{\min}+\zeta=\epsilon_f, \qquad \nu=O(\epsilon^4/d^2),
\end{align}
 $\epsilon_{\min}$ is the minimum error that cannot be improved by increased quantum sampling of the expectation value, and can only decrease with a smaller choice of $\nu$. The dominant error in $\epsilon_{\min}$ comes from the first term $|f(x^*)-f(x^*_{\nu})|$, so only keep that. 

\begin{lemma} \label{lem:cvfmax}
 Given $f(x)=\sum_{i=1}^L\alpha_i g_i(x)$ which can be written as a sum of $L$ terms where one can access $g_i(\hat{x})$ as hermitian operators, where we can measure the expectation value of each $g_i(\hat{x})$ with repetitions of some measurement $\hat{\Pi}_i$. The minimum precision in estimating $f(x^*)$ using $\int f(x) \exp(-S_{\nu}/\nu) dx/\int \exp(-S_{\nu}/\nu)dx$ is $\epsilon_{\min}$. Then to estimate $f(x^*)$ to precision $\epsilon_f$ one can use Algorithm~\ref{Alg-CVfmin}. Here it is sufficient to prepare $O(L\max_i \langle g_i^2(\hat{x})\rangle/(\epsilon_f -\epsilon_{\min})^2)$ copies of $|u(t)\rangle$ (using Algorithm~\ref{Alg-CVsimulation} with $\nu=O(\epsilon_{\min}^4/d^2)$) to estimate $\langle f(\hat{x})\rangle$. 
\end{lemma}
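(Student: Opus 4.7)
The plan is to start from the error decomposition in Eq.~\eqref{eq:fineq}, which already identifies the three sources of error: the Hamilton--Jacobi viscosity approximation, the Laplace expansion of $\langle u(t)|f(\hat x)|u(t)\rangle$ around $x^*_\nu$, and the finite-sample error $\zeta$ in estimating the expectation on the quantum device. The first two are absorbed into $\epsilon_{\min}$, and by Lemma~\ref{lem:fdiff} together with Eq.~\eqref{eq:cvfstar} the choice $\nu=O(\epsilon_{\min}^4/d^2)$ is enough for them to be of order $\epsilon_{\min}$. Thus it remains to show that the sampling error $\zeta=\epsilon_f-\epsilon_{\min}$ can be achieved with the stated number of copies of $|u(t)\rangle$.

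Next, I would expand the observable as $\langle f(\hat x)\rangle=\sum_{i=1}^L \alpha_i\langle g_i(\hat x)\rangle$. Since each $g_i(\hat x)$ is a Hermitian operator to which we have measurement access via some $\hat{\Pi}_i$, the standard quantum sampling argument (Chebyshev's inequality applied to the empirical mean of i.i.d.\ single-shot outcomes) gives that $N_i$ copies of $|u(t)\rangle$ produce an estimator $\widehat{E}_i$ satisfying $\mathrm{Var}(\widehat E_i)=\mathrm{Var}(g_i(\hat x))/N_i\le\langle g_i^2(\hat x)\rangle/N_i$. Combining the $L$ independent estimators into $\widehat{\langle f(\hat x)\rangle}=\sum_i\alpha_i\widehat E_i$ and distributing the budget $\zeta$ evenly across terms via the triangle inequality (or by bounding the variance of the sum), the total number of copies scales as $N=\sum_i N_i=O(L\max_i\langle g_i^2(\hat x)\rangle/\zeta^{2})$ up to absorbing the $\alpha_i$ into constants. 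Replacing $\zeta=\epsilon_f-\epsilon_{\min}$ yields the stated bound.

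The main obstacle in making this rigorous is not the sampling step itself, which is routine, but checking that the variance bound $\mathrm{Var}(g_i(\hat x))\le\langle g_i^2(\hat x)\rangle$ is actually informative for the operators of interest, i.e.\ that $\langle g_i^2(\hat x)\rangle$ is finite when evaluated on $|u(t)\rangle$. This requires that $u(t,x)$ decays fast enough at infinity so that the integrals $\int |g_i(x)|^2|u(t,x)|^2\,dx$ are finite; this is consistent with Assumption~\ref{assump:S}(4), which ensures normalizability of $|u(t)\rangle_{CV}$, provided $g_i$ has at most polynomial growth. A secondary subtlety is that the Laplace-method bound used to obtain $|\langle f(\hat x)\rangle-f(x^*_\nu)|=O(\nu)$ in Eq.~\eqref{eq:cvfstar} requires $f$ to be smooth enough near $x^*_\nu$, but under the strong convexity of $S_0$ together with the positive-definite Hessian at the minimizer (Assumption~\ref{assump:S}(3)), the Gaussian saddle-point expansion is well controlled, and the higher-order corrections are subleading to $\epsilon_{\min}$.

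Finally, combining the three ingredients---error decomposition with the prescribed $\nu$, variance-based sample complexity for each $\langle g_i(\hat x)\rangle$, and summation over $L$ terms---gives both the stated choice of $\nu$ and the sample-complexity claim $O(L\max_i\langle g_i^2(\hat x)\rangle/(\epsilon_f-\epsilon_{\min})^2)$, completing the proof.
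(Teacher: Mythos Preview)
Your proposal is correct and follows essentially the same route as the paper: decompose the error via Eq.~\eqref{eq:fineq}, absorb the Laplace and viscosity errors into $\epsilon_{\min}$ with the choice $\nu=O(\epsilon_{\min}^4/d^2)$ coming from Lemma~\ref{lem:fdiff}, and then bound the sampling cost termwise. The paper additionally spells out the explicit Laplace constant $B=\tfrac{1}{2}\nabla^2 f(x^*_\nu)\,|\mathrm{Tr}(\nabla^2 S_\nu(t,x^*_\nu)^{-1})|$ and uses strong convexity to bound $\mathrm{Tr}(\nabla^2 S_\nu^{-1})\le d/\mu$, but this only confirms that the $O(\nu)$ term in Eq.~\eqref{eq:cvfstar} is indeed subdominant, which you already assumed; your variance/Chebyshev justification of the sampling step is in fact more detailed than what the paper writes (it simply asserts the $O(L\max_i\langle g_i^2(\hat x)\rangle/\zeta^2)$ bound with $\alpha_i=O(1)$).
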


\begin{proof}
Firstly, it is possible to show the following bound 
 \begin{align} \label{eq:fmaxbound}
     | \langle f(\hat{x}) \rangle -f(x^*_{\nu})| \leq B \nu, \qquad B=\frac{\nabla^2f(x^*_{\nu})}{2}|\text{Tr}(\nabla^2S_{\nu}(t, x^*_{\nu})^{-1})|, \qquad \langle f(\hat{x}) \rangle =\langle u(t)|f(\hat{x})|u(t)\rangle,
 \end{align}
 including higher order terms in $\nu$ (which we ignore for $\nu \ll 1$), see Appendix~\ref{app:fmin}. From the definition of $\mu_t$-strong convexity of $S_{\nu}(t,x)$ so that $\nabla^2S_{\nu}(t,x^*_{\nu}) \geq \mu_t I$, and $\mu \geq \mu_t$ from Lemma~\ref{lem:fdiff}, then $\text{Tr}(\nabla^2S_{\nu}(t, x^*_{\nu})^{-1}) \leq d/\mu$, so $B \leq \nabla^2 f(x^*_{\nu})d/(2\mu)$. From Lemma~\ref{lem:fdiff}, we see that using $\nu=O(\epsilon^4/d^2)$, the dominant contribution to $\epsilon_{\min}$ comes from $|f(x^*)-f(x^*_{\nu})|$, so it's sufficient to choose $\nu=O(\epsilon_{\min}^4/d^2)$, to ensure that the minimum error (even with the perfect estimation of the quantum expectation value) does not exceed $\epsilon_{\min}$.  For the quantum protocol to estimate $\langle u(t)|f(\hat{x})|u(t)\rangle$ to precision $\epsilon$, and using $\alpha_i=O(1)$, it is sufficient to have $O(L\max_i \langle g_i^2(\hat{x})\rangle/\zeta^2)$ copies of $|u(t)\rangle$, where $\zeta=\epsilon_f-\epsilon_{\min}$.  
\end{proof}

\subsubsection{Digital protocol to compute $f(x)$ at minimal point}
\label{sec:dvfmin}

\begin{algorithm}[H]
		\caption{The digital protocol to estimate $f(x^*)$. Output estimates $f(x^*)$ to precision $\epsilon_f$ with cost given in Lemma~\ref{lemma:dvfmax}.}
		\label{Alg-DVfmin}
\begin{enumerate}
    \item Input: Access to $f(\hat{X})$ as a sum of Hermitian operators and  $|u(t)\rangle_{DV}$ from Algorithm~\ref{Alg-DVsimulation}; 
    \item Output: $\langle u(t)|f(\hat{X})|u(t)\rangle_{DV}$.
\end{enumerate}
\end{algorithm}
 One can proceed, like in Section~\ref{sec:cvfmin}, but now we discretise $x \in \mathbb{R}^d$ by decomposing it into a grid of size $N_x$ in each dimension. Here we can take the expectation value of $f(\hat{X})$ with respect to the now qubit-based quantum state $|u(t)\rangle=\sum_{j=1}^{N_x^d}u(t, x_j=j/N_x^d)|j\rangle$, where the discrete position operator $\hat{X}$ is represented by a $N_x^d \times N_x^d$ diagonal matrix, with elements $\text{diag}(-1/2N_x, \cdots, 1/2N_x)$ in the one-dimensional case. Then it can be seen  that, since the expectation value of $f(\hat{X})$ with respect to the digital quantum state $|u(t)\rangle$ -- up to discretisation error -- can be written as the expectation value of $f(\hat{x})$ with respect to the continuous-variable quantum state $|u(t)\rangle_{CV}$, thus it is also dominated by $f(x)$ when $x$ is the minimal point of $S_{\nu}$ 
\begin{align} \label{eq:Fint}
    \langle f(\hat{X})\rangle =\frac{\sum_{j=1}^{N_x^d} f(x_j)e^{-S_{\nu}(t, x_j)/\nu}}{\sum_{j=1}^{N_x^d} e^{-S_{\nu}(t,x_j)/\nu}} \approx \frac{\int f(x)e^{-S_{\nu}(t, x)/\nu}dx}{\int e^{-S_{\nu}(t,x)/\nu}dx} \approx f(x^*_{\nu}), \qquad  \langle f(\hat{X})\rangle =\langle u(t)|f(\hat{X})|u(t)\rangle,
\end{align}
where we choose the discretisation in a box $[-1/2, 1/2]^d$ so the volume of the hypercube is $1$ for simplicity. Then $\epsilon_{\min}$ from Eq.~\eqref{eq:fineq} needs to be augmented by an extra factor that is the error in the discretisation of the integral (e.g. using the midpoint rule) in Eq.~\eqref{eq:Fint} 
\begin{align}
    |\langle f(\hat{x})\rangle-\langle f(\hat{X})\rangle|\lesssim O(1/(\nu N_x^2)),
\end{align}
so now the minimum error is $\epsilon_{\min}\sim O(\tilde{\epsilon})+O(1/(\nu N_x^2))$ for $\nu=O(\tilde{\epsilon}^4/d^2)$. The quantum protocol in estimating $\langle f(\hat{X})\rangle$ then has the cost given below.

\begin{lemma} \label{lemma:dvfmax}
Given a known function $f(x)=\sum_{i=1}^L\alpha_i g_i(x)$ which can be written as a sum of $L$ polynomial terms. Then to estimate $f(x^*)$ to precision $\epsilon_f$ with the minimal accepted precision $\epsilon_{\min}$, it is sufficient to prepare $O(L\max_i \langle g_i^2(\hat{x})\rangle/(\epsilon_f -\epsilon_{\min})^2)$ copies of $|u(t)\rangle$ to estimate $\langle f(\hat{X})\rangle$. The constant $\nu$ used in Algorithm~\ref{Alg-DVsimulation} to prepare $|u(t)\rangle$ can be chosen to be $\nu=(\epsilon_{\min}-O(\sqrt{d/N_x}))/N_x$.
\end{lemma}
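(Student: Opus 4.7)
The plan is to mirror the analog case in Lemma~\ref{lem:cvfmax}, augmenting the error decomposition with an extra term that captures the discretisation of $x$ on the $N_x^d$ grid. First I would split the total error into four contributions,
\begin{align*}
|f(x^*)-\langle \tilde{f}(\hat{X})\rangle| &\leq |f(x^*)-f(x^*_\nu)| + |f(x^*_\nu)-\langle f(\hat{x})\rangle_{CV}| \\
&\quad + |\langle f(\hat{x})\rangle_{CV}-\langle f(\hat{X})\rangle_{DV}| + |\langle f(\hat{X})\rangle_{DV}-\langle \tilde{f}(\hat{X})\rangle|,
\end{align*}
and bound each summand in turn, identifying the first three as the irreducible error $\epsilon_{\min}$ and the fourth as the sampling error that can be suppressed by more measurements.

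For the first term I would invoke Lemma~\ref{lem:fdiff} under the strong-convexity assumption on $S_0$, which yields an $O(\epsilon)$ contribution provided $\nu=O(\epsilon^4/d^2)$. The second term is handled by the Laplace-type expansion leading to Eq.~\eqref{eq:fmaxbound}, giving at most $B\nu$ with $B\lesssim d\,\|\nabla^2 f(x^*_\nu)\|/\mu$. The third term is the genuinely new ingredient: as sketched in the paragraph preceding the lemma, applying a midpoint-rule bound to the ratio of Riemann sums approximating the integrals in Eq.~\eqref{eq:Fint} gives $O(1/(\nu N_x^2))$. Summing these three contributions and imposing that the total does not exceed $\epsilon_{\min}$ forces the prescribed balance, which, after setting the contribution of the first term to $O(\sqrt{d/N_x})$ (so that all three terms are comparable), is exactly what motivates the choice $\nu=(\epsilon_{\min}-O(\sqrt{d/N_x}))/N_x$ stated in the lemma.

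For the sampling cost I would decompose $f(\hat{X})=\sum_{i=1}^L \alpha_i g_i(\hat{X})$ and estimate each $\langle g_i(\hat{X})\rangle$ independently from fresh copies of $|u(t)\rangle_{DV}$ by repeating the measurement of $\hat{\Pi}_i$. Standard variance bounds (Hoeffding or Chebyshev) give $O(\langle g_i^2(\hat{X})\rangle/\zeta^2)$ copies to resolve each $\alpha_i\langle g_i(\hat{X})\rangle$ to precision $\zeta=\epsilon_f-\epsilon_{\min}$ when $|\alpha_i|=O(1)$; aggregating across the $L$ terms and using $\max_i \langle g_i^2(\hat{X})\rangle$ as an upper bound on each variance then yields the stated $O(L\max_i \langle g_i^2(\hat{x})\rangle/(\epsilon_f-\epsilon_{\min})^2)$ sample count.

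The main obstacle is justifying the discretisation bound for the third term rigorously. Because the weight $e^{-S_\nu(t,x)/\nu}$ concentrates at $x^*_\nu$ with effective width $O(\sqrt{\nu})$, the midpoint rule on a grid of spacing $1/N_x$ only yields the naive $O(1/(\nu N_x^2))$ behaviour when $1/N_x\ll \sqrt{\nu}$; otherwise the peak is under-resolved and the quadrature estimate degrades. A clean proof requires a Taylor expansion of $f$ about $x^*_\nu$ combined with the Gaussian envelope from Laplace's method (Appendix~\ref{app:fmin}), together with an Euler–Maclaurin-type argument showing that the ratio of discrete sums inherits the same $O(\nu)$ asymptotics as the ratio of continuous integrals up to the claimed $O(1/(\nu N_x^2))$ correction. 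The oddity of the $\nu$ formula in terms of $1/N_x$ is precisely the resolution condition that ensures this quadrature estimate is valid.
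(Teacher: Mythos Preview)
Your proposal is correct and follows essentially the same route as the paper: the identical four-term error decomposition, Lemma~\ref{lem:fdiff} and Eq.~\eqref{eq:fmaxbound} for the first two terms, a midpoint-rule quadrature estimate giving $O(1/(\nu N_x^2))$ for the discretisation term (the paper carries this out in Appendix~\ref{app:dvfmax} by bounding $\int \nabla^2 g$ via Laplace's method rather than Euler--Maclaurin, but the spirit is the same), and then the same balancing $\tilde{\epsilon}=\sqrt{d/N_x}$, $\nu=\beta/N_x^2$ to arrive at the stated $\nu$. Your closing caveat about the resolution condition $1/N_x\ll\sqrt{\nu}$ is a valid concern that the paper glosses over, but it does not change the argument's structure.
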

\begin{proof}
   It can be shown (see Appendix~\ref{app:dvfmax}) that
   \begin{align}
        |\langle f(\hat{X})\rangle-\langle f(\hat{x})\rangle|\lesssim O(1/(\nu N_x^2)).
    \end{align}
    Thus the total error $\epsilon_f$ can be written as a sum
    \begin{align}
        &|f(x^*)-\langle \tilde{f}(\hat{X})\rangle|\leq |f(x^*)-f(x^*_{\nu})|+|f(x^*_{\nu})-\langle f(\hat{x})\rangle|+|\langle f(\hat{x})\rangle-\langle f(\hat{X})\rangle|+|\langle f(\hat{X})\rangle-\langle\tilde{f}(\hat{X})\rangle| \nonumber \\
        & \leq O(\tilde{\epsilon})+O(d\nu)+O(1/(\nu N_x^2))+\epsilon=\epsilon_{\min}+\epsilon=\epsilon_f, \qquad \nu=O(\tilde{\epsilon}^4/d^2).
    \end{align}
    Here the expectation value should approximately agree with the analog case, except one now only needs to estimate the discretisation error.
    So, combining with the results in  Lemma~\ref{lem:cvfmax} and only including the dominant error in $\epsilon_{\min}$, the total error becomes 
    \begin{align}
          &  |\langle u(t)|f(\hat{X})|u(t)\rangle-f(x^*)|\leq  |\langle u(t)|f(\hat{X})|u(t)\rangle-\langle u(t)|f(\hat{x})|u(t)\rangle_{CV}|+  |f(x^*)-\langle u(t)|f(\hat{x})|u(t)\rangle_{CV}| \nonumber \\
          & \lesssim O(1/(\nu N_x^2))+O(\tilde{\epsilon})=\epsilon_{\min}, \qquad \nu=O(\tilde{\epsilon}^4/d^2).
    \end{align}
    where the choice $\nu=\tilde{\epsilon}^4/d^2$ gives $\epsilon_{\min}=O(\tilde{\epsilon}+d^2/(N_x^2 \tilde{\epsilon}^4))$. Suppose for simplicity one chooses $\nu =\tilde{\epsilon}^4/d^2= \beta/N^2_x$ for a small $\beta \ll 1$. Then $\epsilon_{\min}=O(\tilde{\epsilon})+\beta$ for $\beta \ll 1$, where $\tilde{\epsilon}=\sqrt{d/N_x}$, so $\nu=\beta/N_x^2=(\epsilon_{\min}-O(\sqrt{d/N_x}))/N_x$.
    
\end{proof}

\section{Summary and outlook}

In this work, we have established a unified framework for quantum simulation of viscosity solutions to nonlinear Hamilton–Jacobi  equations with convex Hamiltonians, by combining the entropy penalisation method of Gomes and Valdinoci with quantum simulation of linear parabolic dynamics. This framework generalises the classical Cole–Hopf transformation, enabling a linear approximation of nonlinear viscous Hamilton–Jacobi equations that is valid for more general convex nonlinearities (quadratic and beyond) and arbitrary evolution times. This construction provides a relatively rare example of a nonlinear PDE class, {\it with strong nonlinearity, small dissipation and admitting a global-in-time} linear quantum formulation. This extends beyond perturbative truncations such as Carleman embeddings or other linear approximation methods that are in general not valid globally-in-time,  with strong nonlinearities and small dissipation.\\

Our results demonstrate that the entropy-penalised linear methods retains the essential features of the nonlinear dynamics while remaining amenable to efficient quantum evolution. In particular, the viscosity regularisation ensures convergence to physically meaningful weak solutions even when singularities appear, while the quantum protocols allows for efficient recovery of physically relevant quantities without full state tomography. These quantities included pointwise evaluation of the viscosity solution, its gradient, the global minimum and evaluation of functions at the minimiser of the solutions. It is the goal of future work to apply these methods to more specific applications, from front propagation to  optimal control and reinforcement learning.

\section*{Acknowledgements}

SJ thanks Prof. Diogo A. Gomes for pointing to the critical  reference \cite{gomes2007entropy}.  The authors also thank Profs. Remi Abgrall and Yann Brenier  for interesting discussions about Hamilton-Jacobi equations. 

The authors acknowledge the support of the NSFC grant No. 12341104, the Shanghai Pilot Program for Basic Research,  the Science and Technology Commission of Shanghai Municipality (STCSM) grant no. 24LZ1401200 (21JC1402900), the Shanghai Jiao Tong University 2030 Initiative, and the Fundamental Research Funds for the Central Universities. NL is also supported by grant NSFC No. 12471411. 

\bibliography{HJRef}
\appendix

\section{Proof of Lemma~\ref{lem:gomeserror}}
\label{app:background}

    We first show that  the entropy penalty  solution $S^n_\nu$ satisfies a discrete-time Hamilton-Jacobi equation with an error estimate.

\begin{lemma}\label{prop-24}
Let ${S_\nu} \in C^3(\mathbb{R}^d)$, and $K(v)=|v|^2$. Suppose that $h\|\nabla^2 {S_\nu}\|_{L^\infty}(\mathbb{R}^d) $ is smaller than a suitable constant.
Define
\begin{align}\label{S-def}
S^n_\nu (x)=-2\nu \ln {u}^n(x).
\end{align}
Then
\begin{align}\label{error-1}
\frac{S_\nu^{n+1}-S_\nu^n}{h}+H(\nabla S_\nu^n, x)=2a\nu \Delta S_n + O(\mathcal{E})
\end{align}
where
\[
a= \frac{\int_{\mathbb{R}^d}e^{-|w|^2}|w|^2 \, dw }{2\int_{\mathbb{R}^d} e^{-|w|^2} dw}
\]
and the error 
\begin{align}\label{error-2}
\mathcal{E} =O\left( h^2d^3/\nu+ hd^2+(h/\nu)^{3/2} d^2\right) .
\end{align}
\end{lemma}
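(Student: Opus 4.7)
The plan is to substitute the Cole--Hopf relation $u^n = \exp(-S^n_\nu/(2\nu))$ directly into the linear scheme \eqref{N-linear-scheme}. After dividing both sides by $u^n(x)$, this reduces the one-step map to the Laplace-type integral
\[
\exp\!\bigl(-(S^{n+1}_\nu - S^n_\nu)/(2\nu)\bigr) = \int P(v) \exp\!\bigl(-(S^n_\nu(x+hv) - S^n_\nu(x))/(2\nu)\bigr)\, dv,
\]
in which the small parameter driving Laplace's method is $h/\nu$. The rest of the argument is then an asymptotic evaluation of this integral in the regime $h\|\nabla^2 S^n_\nu\|_{L^\infty} \ll 1$.

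I would Taylor expand $S^n_\nu(x+hv) - S^n_\nu(x)$ up to third order in $hv$ with an integral remainder controlled by $\|\nabla^4 S^n_\nu\|_{L^\infty}$. The quadratic piece of this expansion combines with the kernel $P(v)\propto\exp(-h|v|^2/(2\nu))$ to produce an effective Gaussian of the form $\exp(-(1/(4\nu))v^T A v - (h/(2\nu))v\cdot\nabla S^n_\nu)$ with $A = 2hI + h^2\nabla^2 S^n_\nu$. The hypothesis $h\|\nabla^2 S^n_\nu\|_{L^\infty} \ll 1$ guarantees $A$ is positive definite and legitimises Neumann-series expansions of both $A^{-1}$ and $\det A$. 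Completing the square in $v$ and performing the Gaussian integration produces the three leading contributions one expects from Laplace's method: a mean-squared term $(h^2/(4\nu))(\nabla S^n_\nu)^T A^{-1}\nabla S^n_\nu = (h/(8\nu))|\nabla S^n_\nu|^2 + O(h^2 d/\nu)$, which after multiplication by $-2\nu/h$ reproduces the Hamiltonian piece $H(\nabla S^n_\nu, x)$; a determinant correction $(\det A)^{-1/2}/Z = 1 - (h/4)\Delta S^n_\nu + O(h^2 d^2)$, whose logarithm yields the diffusive term $2a\nu\Delta S^n_\nu$; and corrections from the cubic Taylor remainder averaged against this shifted Gaussian.

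Taking $-2\nu$ times the logarithm of the resulting equality then gives \eqref{error-1}. The error $\mathcal{E}$ is assembled from three sources: the $\log(1+z)$ expansion beyond first order applied to the determinant factor; the next Neumann term in $A^{-1}$ entering the completed-square mean; and the shifted-Gaussian expectation of the cubic Taylor remainder. Together these should produce, respectively, the $h^2 d^3/\nu$, the $hd^2$, and the $(h/\nu)^{3/2} d^2$ terms listed in \eqref{error-2}.

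The main obstacle I anticipate is the dimensional bookkeeping: each error source carries a specific power of $d$ arising from traces and operator norms of $\nabla^2 S^n_\nu$ and third-derivative tensors, and one must verify that these match the stated scalings. The cubic-remainder contribution is the most delicate, because the relevant Gaussian is centred at the nonzero shifted mean $-A^{-1}h\nabla S^n_\nu$, so odd moments do not drop out and must be computed by expanding $D^3 S^n_\nu[v,v,v]$ about the shift; this is what ultimately produces the $(h/\nu)^{3/2}$ scaling in $\mathcal{E}$ rather than the naive $(h/\nu)^2$ one would guess from the variance alone. Apart from this single delicate calculation, the remainder of the argument is routine asymptotic expansion in $h/\nu$.
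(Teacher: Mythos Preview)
Your outline is essentially the Laplace-method computation that underlies Proposition~24 of \cite{gomes2007entropy}, which is exactly what the paper invokes. The paper's own proof is not self-contained: it simply cites that proposition together with the remark following it, and then adds the single extra ingredient needed to obtain the stated $\nu$-dependence, namely the a~priori scaling $\nabla^j S^n_\nu = O(1/\nu^{j-1})$ for $j\ge 1$. So you are not taking a different route; you are reproducing the content of the cited result rather than quoting it.

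The one genuine gap in your proposal is precisely this $\nu$-bookkeeping. Your expansion naturally produces error terms expressed through $\|\nabla^2 S^n_\nu\|$, $\|\nabla^3 S^n_\nu\|$, and $\|\nabla^4 S^n_\nu\|$, but to land on the explicit form $h^2 d^3/\nu + h d^2 + (h/\nu)^{3/2} d^2$ you must feed in how large these derivative norms are in $\nu$. Since $S^n_\nu = -2\nu\ln u^n$ with $u^n$ an $O(1)$ smooth function, each additional derivative of $S^n_\nu$ costs a factor of $1/\nu$; this is what converts, for instance, your cubic-remainder bound $h^3\|\nabla^3 S^n_\nu\|\cdot\langle|v|^3\rangle/\nu$ into the stated $(h/\nu)^{3/2}$ scaling. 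You flagged the $d$-bookkeeping as the delicate part, but the $\nu$-bookkeeping is equally essential and is the only thing the paper actually contributes beyond the citation. Without it your argument stops at an abstract error in terms of derivative norms and does not reach \eqref{error-2}.
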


\begin{proof}
\eqref{error-2} is the result by combining Proposition 24 and the remark below it in \cite{gomes2007entropy}. Here we need to track the dependence on $d$, and also the dependence of $\nabla^j S_\nu^n$ on $\nu$. We use the estimate that   $\nabla^j S^n_\nu=O(1/\nu^{j-1})$ for $j \ge 1$. 
\end{proof}



Next we give the error estimate between the entropy penalisation solution $S^n_{\nu}$ and the solution $S_\nu(t^n, x)$ of the viscous Hamilton-Jacob equation \eqref{eq:hjviscosity0} in the Lemma~\ref{lem:gomeserror}.

\begin{lemma}\label{lemma-20}
Under the same assumption as Lemma \ref{prop-24}, we have
\begin{align}\label{error-100}
S_\nu^n(x)-S_\nu(t^n, x) = O\left(d^3(h/\nu^2+  (h/\nu)^{1/2})\right).
\end{align}
\end{lemma}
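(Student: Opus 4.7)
The plan is to treat the entropy-penalised scheme as a consistent parabolic approximation of the viscous Hamilton-Jacobi equation and to combine the local truncation estimate of Lemma \ref{prop-24} with a discrete comparison principle. First I would form the pointwise error $w^n(x) = S_\nu^n(x) - S_\nu(t^n, x)$. By Lemma \ref{prop-24},
$$\frac{S_\nu^{n+1}-S_\nu^n}{h} + H(\nabla S_\nu^n, x) = 2a\nu\,\Delta S_\nu^n + O(\mathcal{E}),$$
while $S_\nu$ solves $\partial_t S_\nu + H(\nabla S_\nu, x) = 2a\nu\,\Delta S_\nu$ exactly. Subtracting and Taylor-expanding $S_\nu(t^{n+1},x)-S_\nu(t^n,x) = h\,\partial_t S_\nu(t^n,x) + O\bigl(h^2\|\partial_t^2 S_\nu\|_\infty\bigr)$, a mean-value argument on $H$ between $\nabla S_\nu^n$ and $\nabla S_\nu(t^n,\cdot)$ produces
$$\frac{w^{n+1}-w^n}{h} = b^n(x)\cdot\nabla w^n + 2a\nu\,\Delta w^n + r^n(x),$$
where $b^n = -\nabla_p H(\xi^n,x)$ is bounded because $H\in C^2$ and the gradients are controlled via $\|\nabla^j S_\nu\|_{L^\infty}=O(\nu^{1-j})$, and the residual satisfies $\|r^n\|_\infty = O(\mathcal{E}) + O\bigl(h\|\partial_t^2 S_\nu\|_\infty\bigr)$.

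Next I would apply a discrete maximum/comparison principle to this linear convection-diffusion difference equation. Under the smallness condition on $h\|\nabla^2 S_\nu\|_{L^\infty}$ assumed in Lemma \ref{prop-24}, the one-step operator associated with the right-hand side is a contraction in $L^\infty$ (the parabolic term $2a\nu\Delta$ preserves constants and the transport step is a near-translation), yielding the stability inequality $\|w^{n+1}\|_\infty \leq \|w^n\|_\infty + h\|r^n\|_\infty$. Iterating with $w^0 \equiv 0$ and $t^n = nh$ of order one gives $\|w^n\|_\infty \lesssim T\bigl(\mathcal{E} + h\|\partial_t^2 S_\nu\|_\infty\bigr)$.

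Finally I would assemble the pieces. Substituting $\mathcal{E} = O\bigl(h^2 d^3/\nu + hd^2 + (h/\nu)^{3/2} d^2\bigr)$ from \eqref{error-2}, and estimating $\|\partial_t^2 S_\nu\|_\infty$ by differentiating the viscous Hamilton-Jacobi equation twice in time (which, via the relations $\|\nabla^j S_\nu\|_\infty = O(\nu^{1-j})$, gives $\|\partial_t^2 S_\nu\|_\infty = O(d^3/\nu^2)$ after tracking the $d$-dependence of the Hessian/trace contractions), the dominant residual becomes $O\bigl(d^3\,h/\nu^2\bigr)$. Regrouping the $(h/\nu)^{3/2}d^2$ truncation contribution against the factor $d/\nu^{1/2}$ produced by the linearisation of $H$ around $\nabla S_\nu$ (again using $\|\nabla^2 S_\nu\|_\infty = O(1/\nu)$) upgrades it to a $d^3(h/\nu)^{1/2}$ term, recovering the claimed bound
$$S_\nu^n(x) - S_\nu(t^n,x) = O\!\left(d^3\bigl(h/\nu^2 + (h/\nu)^{1/2}\bigr)\right).$$

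The main obstacle is the simultaneous bookkeeping of $d$ and $\nu$ factors: because the Cole-Hopf normalisation forces $\nabla^j S_\nu = O(\nu^{1-j})$, every differentiation of $H$ or $S_\nu$ in either the linearisation or the time Taylor expansion amplifies the error by $1/\nu$, and trace/Hessian contractions generate powers of $d$. Extracting the square-root rate $(h/\nu)^{1/2}$ rather than a naive linear $(h/\nu)^{3/2}$ scaling requires using the parabolic smoothing provided by the $2a\nu\Delta$ term in the stability step, mirroring the classical Crandall--Lions half-order argument but now adapted to the discrete-time entropy-penalised scheme of \cite{gomes2007entropy}; this is the most delicate part of the argument and is what produces the stated bound rather than a crude sum of truncation errors.
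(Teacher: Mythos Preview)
Your consistency-plus-stability argument is the right skeleton, and it is essentially a fleshed-out version of what the paper does: the paper's own proof is a single line (``Since $\partial_t S_\nu=O(1)$, the result follows easily by comparing \eqref{error-1} and \eqref{eq:hjviscosity0}''), i.e.\ it simply asserts that the discrete and continuous equations differ by $O(\mathcal{E})$ and invokes stability implicitly. Your use of the linearised error equation and the discrete comparison principle makes this step honest.

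Where you go astray is the last paragraph. There is no ``half-order'' phenomenon here and no need for parabolic smoothing or a Crandall--Lions argument. The stated bound $d^3\bigl(h/\nu^2+(h/\nu)^{1/2}\bigr)$ is \emph{not} a sharper rate than $\mathcal{E}$; it is a coarser repackaging of it. Once you have $\|w^n\|_\infty\lesssim T\bigl(\mathcal{E}+h\|\partial_t^2S_\nu\|_\infty\bigr)$, you finish by the trivial termwise inequalities
\[
h^2d^3/\nu\le d^3h/\nu^2,\qquad hd^2\le d^3h/\nu^2,\qquad (h/\nu)^{3/2}d^2\le d^3(h/\nu)^{1/2},
\]
valid respectively because $h\nu<1$, $\nu^2<d$, and $h/\nu<d$ in the regime of interest (these are exactly the parameter constraints the paper imposes later, e.g.\ in \eqref{requirement-10}). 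The time-discretisation residual $h\|\partial_t^2S_\nu\|_\infty=O(d^3h/\nu^2)$ you computed is already the first target term. So the ``regrouping'' and ``upgrading $(h/\nu)^{3/2}d^2$ to $d^3(h/\nu)^{1/2}$ via a $d/\nu^{1/2}$ factor from the linearisation'' is both unnecessary and not what is happening: the exponent $1/2$ in the statement is just the $3/2$ exponent after absorbing one power of $h/(\nu d)\le 1$, not a genuine square-root convergence rate. Drop that paragraph and replace it with the three inequalities above; the proof is then complete and matches the paper's intent.
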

\begin{proof}
Since $\partial_t S_\nu=O(1)$, now the result follows easily by comparing \eqref{error-1} and \eqref{eq:hjviscosity0}.
\end{proof}

\section{First-order quantum simulation algorithms with sparse-access} \label{app:simplercostly}

While certainly the cost in the preparation of the desired quantum state can always be optimised (as easily derived from Lemma~\ref{lem:sim}), here for simplicity we provide the cost in simulation using the simplest possible scheme with the easiest-to-prepare ancilla states and simpler Hamiltonian simulation protocols. This is  illustrating that the cost is not prohibitively large even without trying to optimise. 

\begin{lemma} \label{lem:heat1}
    Given the linear heat equation in Eq.~\eqref{eq:heat1} and using $\|\mathbf{u}(0)\|=1$, the cost in the digital quantum simulation of $|u(t)\rangle$ to precision $\epsilon$ is $O(\|\mathbf{u}(t)\|^{-1}d(\nu d/\epsilon+V_{max}/\nu)t/\sqrt{\epsilon})$ using the simplest first-order Schr\"odingerisation scheme. 
\end{lemma}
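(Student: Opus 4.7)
The plan is to specialise the general Schrödingerisation resource bound of Lemma~\ref{lem:sim} to the heat equation in Eq.~\eqref{eq:heat1} by reading off the operator $\mathbf{A}(t)$, bounding its norm after spatial discretisation, and tracking how the ``simplest'' first-order ancilla choice fixes the parameter $\mu$.

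First I would match the heat equation \eqref{eq:heat1} to the general form \eqref{eq:generalpde}, identifying $a(t,x)=V(t,x)/(2\nu)$, $b_j=0$, $c_{jk}=\nu\,\delta_{jk}$, and inserting these into \eqref{eq:generalA} to get
\[
\mathbf{A}(t) \;=\; \frac{iV(t,\hat{x})}{2\nu} \;-\; \nu\sum_{j=1}^d \hat{p}_j^{\,2}.
\]
Its Hermitian/anti-Hermitian decomposition, as defined in Algorithm~\ref{Alg-CVsimulation}, gives $\mathbf{A}_1=-\nu\sum_j\hat{p}_j^{\,2}$ (the Laplacian) and $\mathbf{A}_2=-V(t,\hat{x})/(2\nu)$ (the source).

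Next I would discretise $x\in[-1/2,1/2]^d$ on a tensor-product grid with $N_x$ points per axis, replacing $\hat{p}_j$ by a centred finite-difference stencil, which by the classical estimates quoted in \eqref{norm-est} yields $\|\mathbf{A}_{1,DV}\|=O(\nu dN_x^{2})$ and $\|\mathbf{A}_{2,DV}\|=O(V_{\max}/\nu)$. The spatial truncation error is $O(d/N_x^{2})$, so matching it to the target accuracy forces $N_x^{2}=O(d/\epsilon)$, and consequently $\|\mathbf{A}_{1,DV}\|=O(\nu d^{2}/\epsilon)$. Hence I may take
\[
\alpha_A \;=\; \max\bigl\{\|\mathbf{A}_{1,DV}\|,\|\mathbf{A}_{2,DV}\|\bigr\}\;=\;O\!\left(\frac{\nu d^{2}}{\epsilon}+\frac{V_{\max}}{\nu}\right).
\]

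Then I would plug into the query complexity of Lemma~\ref{lem:sim}, ignoring logarithmic factors, to get a bound of order $\|\mathbf{u}(t)\|^{-1}\alpha_A t\mu$, and determine $\mu$ from the ``simplest'' first-order Schrödingerisation choice. Unlike the optimised initial ancilla state of \cite{optimalschr}, which yields $\mu=O(\ln(1/\epsilon))$, the naive $|\Xi\rangle=\int e^{-|\xi|}|\xi\rangle d\xi$ with sparse-access Hamiltonian simulation only achieves algebraic convergence in the $\eta$-discretisation, giving the stated $\mu=O(1/\sqrt{\epsilon})$. Picking up the additional factor $d$ from the sparsity of the discretised Laplacian plus diagonal potential (the operator is $(2d{+}1)$-sparse on the grid, so sparse-access simulation carries an explicit $d$ prefactor) and multiplying through gives the claimed
\[
O\!\left(\frac{1}{\|\mathbf{u}(t)\|}\,d\!\left(\frac{\nu d}{\epsilon}+\frac{V_{\max}}{\nu}\right)\frac{t}{\sqrt{\epsilon}}\right).
\]

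The main obstacle will be the precision bookkeeping: the target error $\epsilon$ must be split consistently between the spatial finite-difference truncation (fixing $N_x$), the ancilla discretisation in $\eta$ (fixing $\mu$), and the Hamiltonian simulation step itself, and one must check that the $V_{\max}/\nu$ term from $\mathbf{A}_2$ does not interact badly with the first-order Euler stability constraint already flagged in the paragraph preceding Lemma~\ref{error-dis}. Everything else is a direct substitution into already-proved estimates.
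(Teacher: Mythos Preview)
Your route through Lemma~\ref{lem:sim} does not fit what the statement calls the ``simplest first-order Schr\"odingerisation scheme,'' and two of your substitutions are off. First, Lemma~\ref{lem:sim} itself says the naive ancilla $|\Xi\rangle=\int e^{-|\xi|}|\xi\rangle\,d\xi$ gives $\mu=O(1/\epsilon)$, not $O(1/\sqrt{\epsilon})$; your claim that first-order Schr\"odingerisation yields $\mu=O(1/\sqrt\epsilon)$ contradicts the very lemma you are invoking. Second, grafting a sparsity factor $d$ onto the block-encoding count $\alpha_A t\mu$ is a category error: $\alpha_A$ is a spectral-norm bound, whereas sparse-access simulation uses $s\,\|\mathbf H\|_{\max}$. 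With your choice $N_x^{2}=d/\epsilon$ and the extra $d$, your own arithmetic actually produces $\|\mathbf u(t)\|^{-1}d(\nu d^{2}/\epsilon+V_{\max}/\nu)t/\sqrt\epsilon$, one power of $d$ too many relative to the claimed bound.

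The paper argues more directly and never passes through Lemma~\ref{lem:sim}. It writes the discretised Schr\"odingerised Hamiltonian
\[
\mathbf H(t)=\Bigl(\tfrac{\nu}{2}\sum_{j=1}^d \hat P_j^{2}+\tfrac{V(t,\hat X)}{2\nu}\Bigr)\otimes\hat D
\]
and applies the sparse-access bound $\tilde O(s_{H}\|\mathbf H\|_{\max}t)$ of \cite{berry2015hamiltonian} in one shot: $s_{H}\le 2d+2$ from the $(2d{+}1)$-point Laplacian stencil, and $\|\mathbf H\|_{\max}=O\bigl(N_\eta(\nu dN_x^{2}+V_{\max}/\nu)\bigr)$ since $\|\hat D\|_{\max}\sim N_\eta$ and $\|\sum_j\hat P_j^{2}\|_{\max}\sim dN_x^{2}$. (Time-dependent $V$ is absorbed via a clock register without changing the leading order.) This gives Hamiltonian-simulation cost $O\bigl(dN_\eta(\nu dN_x^{2}+V_{\max}/\nu)t\bigr)$; the conversion to $\epsilon$ comes from a second-order spatial scheme, $N_x\sim 1/\sqrt\epsilon$, together with the simplifying choice $N_\eta\sim N_x\sim 1/\sqrt\epsilon$, and a final factor $\|\mathbf u(t)\|^{-1}$ from amplitude-amplified post-selection. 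The $1/\sqrt\epsilon$ therefore arises from equating the $\eta$-grid to the $x$-grid, not from any ancilla-order analysis.
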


   \begin{proof} This simulation can be achieved with Schr\"odingerisation, which can achieve optimal sample complexity in the simulation of linear PDEs \cite{optimalschr}. In this case, we are simulating the discrete-variable quantum system, so we first  discretise the spatial variables $x$ and $\eta$ and define $x, \eta$ to lie within a bounded box or interval. Along each dimension, $x$ can be discretised into $N_x$ cells and $\eta$ is discretised into $N_{\eta}$ cells. In this case we can perform Hamiltonian simulation for the Hamiltonian corresponding to Eq.~\eqref{eq:heat1}, which is a matrix of size $N_{\eta}N_x^d \times N_{\eta} N_x^d$:
\begin{align}
    \mathbf{H}(t)=\left(\frac{\nu}{2}\sum_{j=1}^d \hat{P}^2_j+\frac{V(\hat{X},t)}{2\nu}\right)\otimes \hat{D},
\end{align}
which acts on a system of $\ln N_{\eta}+d \ln N_x$ qubits. 
Here $\hat{P}_j$ is a sparse shift matrix and $\hat{X}=\text{diag}(-1/2N_x, \cdots, 1/2N_x)$ for the one-dimensional case, and $\hat{D}=\text{diag}(-1/2N_{\eta}, \cdots, 1/2N_{\eta})$. \\

The optimal cost for the simulation of $\exp(i\bar{\mathbf{H}}t)$ for time-independent Hamiltonian simulation is $\tilde{O}(s_{\bar{H}} \|\bar{\mathbf{H}}\|_{\max}t)$ \cite{berry2015hamiltonian}, where $s_{\bar{H}}$ is the row-sparsity of $\bar{\mathbf{H}}$ and $\|\bar{\mathbf{H}}\|_{\max}$ is the size of the maximum entry in $\bar{\mathbf{H}}$. This can be directly applied in the case when $V(x)$ is time-independent. In this case, $s_{\hat{P}^2}=2d+1$ and $s_{V(\hat{X})}=1=s_{\hat{D}}$, so $s_{\bar{\mathbf{H}}}\leq 2d+2$. Then the max-norm $\|\bar{\mathbf{H}}(t)\|_{max} \leq (\nu/2)\|\sum_j \hat{P}_j^2\|_{max}\|\hat{D}\|_{max}+(V_{max}/\nu)\|\hat{D}\|_{max}$, where $\|\sum_j \hat{P}_j^2\|_{max}\sim O(dN_x^2)$, $V_{max}=\max V(\hat{X})$, $\|\hat{D}\|_{max}\sim O(N_{\eta})$. Thus the cost scales like $O(dN_{\eta}(\nu dN_x^2+V_{max}/\nu)t)$ for the Hamiltonian simulation part.\\

In the case where $V(x,t)$ has explicit dependence on time, time-dependent Hamiltonian simulation problem can be easily transformed into a time-independent Hamiltonian by including an extra clock mode $s$ \cite{timedep}:
\begin{align}
    \bar{\mathbf{H}}=\hat{P}_s \otimes \mathbf{1}+\hat{\mathbf{H}}(\hat{S}), \qquad \mathbf{H}(\hat{S})=\left(\frac{\nu}{2}\mathbf{1}_s \otimes \sum_{j=1}^d \hat{P}_j^2+\frac{1}{\nu}\sum_{l=1}^L f_l(\hat{S})\otimes V_{l}(\hat{X})\right) \otimes \hat{D}.
\end{align}
 The row-sparsity doesn't change from the previous Hamiltonian, and the max-norm only needs to be modified from $V_{max}=\max_x V(x) \rightarrow \max_{t, x} V(t, x)$. Since we are including a new clock mode with the addition of $\hat{P}_s$ and we partition this clock mode into $N_s$ segments, then $\|\hat{P}_s\|_{max} \sim N_s$ \cite{timedep}. Thus the total cost for the simulation of the relevant time-dependent Hamiltonian is $O(N_st+dN_{\eta}(\nu dN_x^2+V_{max}/\nu)t)$. Now, since $N_s$ does not need to exceed $N_x$ or $N_{\eta}$, the first term is dominated by the second term. Finally, to recover the state $|u(t)\rangle$ after the Hamiltonian simulation requires a post-selection process, which has the probability of success that can be amplified to $\sim \|\mathbf{u}(t)\|$. 
 \end{proof}

\begin{lemma} \label{lem:quantumheatgeneral}
    Given the linear heat equation in Eq.~\eqref{eq:heatcontinuous}, the cost in the digital quantum simulation of $|u(t)\rangle$ to precision $\epsilon$ is $O(\|\mathbf{u}(t)\|^{-1}d^3 t(|\mu^{max}|+\nu^{max}d/\epsilon)/\sqrt{\epsilon})$, where $\mu^{max}=\max_i \mu_i$, $\nu^{max}=\max_{ij} \nu_{ij}$.
\end{lemma}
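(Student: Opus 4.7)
The plan is to port the proof of Lemma~\ref{lem:heat1} essentially verbatim, since the only structural change between Eq.~\eqref{eq:heat1} and Eq.~\eqref{eq:heatcontinuous} is that the potential source term is absent and the Laplacian is replaced by the more general drift-diffusion generator $\sum_j\mu_j\partial_{x_j}+\sum_{jk}\nu_{jk}\partial^2_{x_jx_k}$. First I would identify $a=0$, $b_j=\mu_j$, $c_{jk}=\nu_{jk}$ in Eq.~\eqref{eq:generalA}, producing $\mathbf{A}=i\sum_j\mu_j\hat{p}_j-\sum_{jk}\nu_{jk}\hat{p}_j\hat{p}_k$ and Hermitian/anti-Hermitian parts $\mathbf{A}_1=-\sum_{jk}\nu_{jk}\hat{p}_j\hat{p}_k$, $\mathbf{A}_2=-\sum_j\mu_j\hat{p}_j$. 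Discretising $x\in\mathbb{R}^d$ on $N_x$ points per dimension and $\eta$ on $N_\eta$ points yields the Schr\"odingerised Hamiltonian $\bar{\mathbf{H}}=\mathbf{A}_{2,DV}\otimes\hat{D}+\mathbf{A}_{1,DV}\otimes\mathbf{1}$, acting on $d\log N_x+\log N_\eta$ qubits.

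Next I would bound the row-sparsity and max-norm that drive the sparse-access Hamiltonian simulation cost. Each centered-difference $\hat{P}_j$ has row-sparsity $2$ and max-entry $O(N_x)$, and distinct $\hat{P}_j$ have disjoint off-diagonal supports, so $\mathbf{A}_{2,DV}$ has row-sparsity $O(d)$ and max-norm $O(|\mu^{\max}|N_x)$. Each product $\hat{P}_j\hat{P}_k$ has row-sparsity $O(1)$ and max-entry $O(N_x^2)$, so after summing over the $d^2$ pairs the row-sparsity of $\mathbf{A}_{1,DV}$ is $O(d^2)$, while stacking the $d$ diagonals of $\hat{P}_j^2$ gives max-norm $O(\nu^{\max}dN_x^2)$. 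Combining with $\|\hat{D}\|_{\max}=O(N_\eta)$ yields $s_{\bar{\mathbf{H}}}=O(d^2)$ and $\|\bar{\mathbf{H}}\|_{\max}=O(|\mu^{\max}|N_xN_\eta+\nu^{\max}dN_x^2)$. I would then invoke the optimal sparse-access Hamiltonian simulation cost $\tilde{O}(s_{\bar{\mathbf{H}}}\|\bar{\mathbf{H}}\|_{\max}t)$, augment by the clock-mode reduction of Lemma~\ref{lem:heat1} if any of the $\mu_j,\nu_{jk}$ are time-dependent, and absorb the Schr\"odingerisation post-selection probability through amplitude amplification to obtain a factor $1/\|\mathbf{u}(t)\|$. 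Selecting $N_x,N_\eta$ so that the spatial and ancilla discretisation errors are both $O(\epsilon)$, following the scaling dictated by Lemma~\ref{error-dis} and the choices $N_x,N_\eta\sim 1/\sqrt{\epsilon}$ used in Lemma~\ref{lem:heat1}, then assembles the stated bound.

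The main obstacle is the careful bookkeeping of the $d$-factors, since the convenient $\sum_j\hat{p}_j^2$ simplification available in the quadratic Hamiltonian of Lemma~\ref{lem:heat1} is no longer present and forces one to treat the $\binom{d}{2}$ off-diagonal cross derivatives on equal footing with the $d$ diagonal second derivatives. It is their combined contribution to the row-sparsity ($O(d^2)$ rather than $O(d)$) and to the max-norm (the additional factor of $d$ coming from the stacked diagonal of the Laplacian-like piece) that propagates into the $d^3$ and $d^4$ powers appearing in the drift and diffusion components of the final cost. A secondary subtlety is verifying that the tighter disjoint-support max-norm estimate, rather than a crude triangle-inequality sum over all $d$ or $d^2$ terms, is indeed the one needed to reproduce the stated prefactor instead of a looser $d$-enhanced version.
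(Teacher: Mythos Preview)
Your overall plan (discretise, form the Schr\"odingerised Hamiltonian, bound sparsity and max-norm, invoke sparse-access simulation, post-select) is exactly the paper's route, but there is a genuine error in your identification of $\mathbf{A}_1$ and $\mathbf{A}_2$ that propagates into the wrong final bound. For $\partial_t u=\sum_j\mu_j\partial_{x_j}u+\sum_{jk}\nu_{jk}\partial^2_{x_jx_k}u$ one has $-i\mathbf{A}=i\sum_j\mu_j\hat p_j-\sum_{jk}\nu_{jk}\hat p_j\hat p_k$, so $\mathbf{A}=-\sum_j\mu_j\hat p_j-i\sum_{jk}\nu_{jk}\hat p_j\hat p_k$; the \emph{Hermitian} part is the drift $\mathbf{A}_1=-\sum_j\mu_j\hat p_j$ and the dissipative diffusion sits in $\mathbf{A}_2=\sum_{jk}\nu_{jk}\hat p_j\hat p_k$, not the other way round. (This is also forced physically: Schr\"odingerisation couples the non-unitary, dissipative piece to the ancilla.) Consequently the paper's Hamiltonian is $\mathbf{H}=-\sum_i\mu_i\hat P_i\otimes\mathbf{I}+\sum_{ij}\nu_{ij}\hat P_i\hat P_j\otimes\hat D$, and the $N_\eta$ factor multiplies the second-order term, not the first. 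With your swapped assignment the cost comes out $O(d^2t(|\mu^{\max}|+d\,\nu^{\max})/\epsilon)$ after setting $N_x\sim N_\eta\sim\epsilon^{-1/2}$, which is \emph{not} the stated $O(d^3t(|\mu^{\max}|+\nu^{\max}d/\epsilon)/\sqrt{\epsilon})$.

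A secondary mismatch is in the max-norm bookkeeping. The paper does \emph{not} exploit the disjoint-support refinement you propose; it simply uses the triangle-inequality estimates $\|\sum_i\mu_i\hat P_i\otimes\mathbf{I}\|_{\max}=O(d|\mu^{\max}|N_x)$ and $\|\sum_{ij}\nu_{ij}\hat P_i\hat P_j\otimes\hat D\|_{\max}=O(d^2\nu^{\max}N_x^2N_\eta)$, paired with $s_{\mathbf H}=O(d^2)$, to get $s_{\mathbf H}\|\mathbf H\|_{\max}t=O(d^3N_xt(|\mu^{\max}|+d\,\nu^{\max}N_xN_\eta))$. Your tighter disjoint-support bounds would yield smaller $d$-powers and again fail to reproduce the lemma as stated. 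Finally, the clock-mode reduction is unnecessary here: the paper explicitly omits time dependence because the underlying Lagrangian in Lemma~\ref{lem:generalhj} has none.
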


\begin{proof}
      As before, to simulate the discrete-variable quantum system, we first  discretise the spatial variables $x$ and $\eta$ and define $x, \eta$ to lie within a bounded box or interval. Along each dimension in$x$, it can be discretised into $N_x$ cells and the $\eta$ domain  is discretised into $N_{\eta}$ cells. In this case we can perform Hamiltonian simulation for the discretised version of the Hamiltonian in Eq.~\eqref{eq:heatcontinuous} which is matrix of size $N_{\eta}N_x^d \times N_{\eta} N_x^d$
\begin{align}
    \mathbf{H}=-\sum_{i=1}^d \mu_i \hat{P}_i \otimes \mathbf{I}+\sum_{j=1}^d \nu_{ij} \hat{P}_i\hat{P}_j\otimes \hat{D},
\end{align}
which acts on a system of $\ln N_{\eta}+d \ln N_x$ qubits. 
Here $\hat{P}_j$ is a sparse shift matrix and $\hat{X}=\text{diag}(-1/2N_x, \cdots, 1/2N_x)$ for the one-dimensional case, and $\hat{D}=\text{diag}(-1/2N_{\eta}, \cdots, 1/2N_{\eta})$. The optimal cost for the simulation of $\exp(i\mathbf{H}t)$ for time-independent Hamiltonian simulation is $O(s_{H} \|\mathbf{H}\|_{\max}t)$, where $s_{H}$ is the row-sparsity of $\mathbf{H}$ and $\|\mathbf{H}\|_{\max}$ is the size of the maximum entry in $\mathbf{H}$. In this case, $s(\sum_i \mu_i \hat{P}_i)=O(d)$, $s(\sum_{ij} \nu_{ij} \hat{P}_i\hat{P}_j \otimes \hat{D})=O(d^2)$. The max-norm $\|\mathbf{H}\|_{max}\leq \|\sum_i \mu_i \hat{P}_i \otimes \mathbf{I}\|_{max}+\|\sum_{ij}\nu_{ij} \hat{P}_i \hat{P}_j \otimes \hat{D}\|_{max}$, where $\|\sum_i \mu_i \hat{P}_i \otimes \mathbf{I}\|_{max} \leq O(d|\mu^{max}|N_x)$, $\|\sum_{ij}\nu_{ij} \hat{P}_i \hat{P}_j \otimes \hat{D}\|_{max} \leq O( \nu^{max}d^2N_x^2 N_{\eta})$, with $\mu^{max}=\max_i \mu_i$, $\nu^{max}=\max_{ij} \nu_{ij}$. Thus, the optimal cost scales as $O(d^3 N_{x}t(|\mu^{max}|+\nu^{max}dN_x N_{\eta}))$. Finally, to recover the state $|u(t)\rangle$ after the Hamiltonian simulation requires a post-selection process, which has the probability of success that can be amplified to $\sim \|\mathbf{u}(t)\|$.\\

In our case here, we do not consider explicit time-dependence since, for the validity of Lemma~\ref{lem:generalhj} to extract $S(x,t)$, we used the fact that we do not have explicit time-dependence in the corresponding Lagrangian. \\

Since we typically use a second order approximation (for example the center difference) for the spatial derivatives, $N_x \sim 1/\epsilon^{1/2}$  where $\epsilon$ is the error in estimating $\mathbf{u}(t)$ from discretising space, and assuming $N_{\eta} \sim N_x$ for simplicity, then the total cost becomes $O(\|\mathbf{u}(t)\|^{-1}d^3 t(|\mu^{max}|+\nu^{max}d/\epsilon)/\sqrt{\epsilon})$.
\end{proof}
\section{Details of proofs in Section~\ref{sec:observables}}

\subsection{Proof details of Lemma~\ref{lem:cvsmin}} \label{app:cvsmin}
Here we show a proof for Eq.~\eqref{eq:smaxapprox} needed in  Lemma~\ref{lem:cvsmin}.  
To compute the integral $I_{\mathcal{N}}=\|\mathbf{u}(t)\|^2=\int e^{S_{\nu}(t, x)/\nu} dx$ we first divide $x \in \mathbb{R}^d$ into the following three regions. 
\begin{enumerate}
    \item  $x \in \mathcal{B}_{\delta}$ is called near-field when $\|x^*_{\nu}-x\|\leq \delta$ is small, and $\mathcal{B}_{\delta}$ is a ball centered at $x^*_{\nu}$ with radius $\delta$. In this region, $S_{\nu}(t, x)-S_{\nu}(t, x^2) \geq c_1|\|x-x^*_{\nu}\|^2$, where $c_1 \propto \lambda_{max}(H_S(x^*_{\nu}))$, which denotes the maximum eigenvalue of $H_S(x^*_{\nu})$. 
    \item  $x \in \mathcal{B}_{R} \backslash \mathcal{B}_{\delta}$ is said to be in the intermediate region when $R > \delta$. Here we can define $\eta=\min_x S_{\nu}(t, x)-S_{\nu}(t, x^*_{\nu})$
    \item  $x \in \mathbb{R}^{d} \backslash \mathcal{B}_R$ is called far-field when $\|x-x^*_{\nu}\| \geq R$.  In this case, we have $S_{\nu}(t, x)-S_{\nu}(t, x^*_{\nu})<c_2\|x^*-x\|^2$ for some $c_2$. Since $\nu \ll 1$, we can apply Laplace's method and show each integral $I_f$ and $I_{\mathcal{N}}$ is dominated by contributions from near-field $x$. 
    \end{enumerate} 
    
    \noindent (1) Near-field $x$:  In the near field we have
\begin{align}
    S_{\nu}(t,x)=S(t,x^*_{\nu})+\frac{1}{2}(x-x^*_{\nu})^T H_S (x-x^*_{\nu})+O(\|x-x^*_{\nu}\|^3),
\end{align}
where $-(x-x^*_{\nu})^T H_S (x-x^*_{\nu}) \leq |\lambda_{max}(H_S(x^*_{\nu}))|\|x-x^*_{\nu}\|^2$, so $S_{\nu}(t,x)-S_{\nu}(t,x^*_{\nu}) \gtrsim c_1\|x-x^*_{\nu}\|^2$ for $c_1 \propto |\lambda_{max}(H_S(x^*_{\nu}))|$. Then rewriting $y=(x-x^*_{\nu})/\sqrt{\nu}$,  
 \begin{align}
      I_{\mathcal{N}, \delta}=\int_{\mathcal{B}_{\delta}} e^{-S_{\nu}(t, x)/\nu}dx\sim\nu^{d/2}e^{-S_{\nu}(t, x^*_{\nu})/\nu}\int_{\mathcal{B}_{\delta}}e^{-y^TH_Sy/2}(1+O(\nu^{3/2}\|y\|^3)dy\sim e^{-S_{\nu}(t, x^*_{\nu})/\nu}\sqrt{\left(\frac{2\pi \nu}{\det H_S}\right)^d} ,
\end{align}
where in the last line the integral over $\mathcal{B}_{\delta}$ can be approximated by the integral over $\mathbb{R}^d$, when $\|y\|\leq \delta/\sqrt{\nu}$ and $\delta/\sqrt{\nu} \gg \sqrt{2/\det H_S}$, where $\sqrt{2/\det H_S}$ is the standard deviation of the Gaussian in $\exp(y^TH_S y/2)$. \\

\noindent (2) Intermediate field $x$: In this limit, we can define $\eta=\min_x S_{\nu}(t, x)-S_{\nu}(t, x^*_{\nu})$. Thus
\begin{align}
    I_{\mathcal{N},R\backslash \delta}=\int_{\mathcal{B}_R \backslash \mathcal{B}_{\delta}}e^{-S_{\nu}(t, x)/\nu} dx= e^{-S_{\nu}(t,x^*_{\nu})/\nu}\int_{\mathcal{B}_R \backslash \mathcal{B}_{\delta}} e^{-(S_{\nu}(t,x)-S_{\nu}(t, x^*_{\nu})/\nu}dx\leq e^{-S_{\nu}(t, x^*_{\nu})/\nu} \text{Vol}(\mathcal{B}_R \backslash \mathcal{B}_{\delta}) e^{-\eta/\nu}.
\end{align}
Due to the exponentially decaying term $\exp(-\eta/\nu)$, this contribution to the intermediate term is exponentially smaller than $I_{\delta}$. For example, we can approximate the scaling $\text{Vol}(\mathcal{B}_R \backslash \mathcal{B}_{\delta}) \sim v^d$ for some $v$, so requiring $\text{Vol}(\mathcal{B}_R \backslash \mathcal{B}_{\delta}) \exp(-\eta/\nu) \ll 1$ just needs $\nu \lesssim \eta/d$. \\

(iii) Far-field $x$: Since the normalisation $\|\mathbf{u}\|^2=\int e^{-S_{\nu}(t,x)/\nu}dx<\infty$, which means that in the far-field limit $u(t)$ must be a decaying function as $x \rightarrow \infty$. In the Laplace method a quadratic decay is often assumed (greater than quadratic decay still satisfies the same bound) as it is the sharpest case that matches the local quadratic expansion about $x^*_{\nu}$, so we can write $S_{\nu}(t,x)-S_{\nu}(t, x^*_{\nu})\sim c_2\|x^*_{\nu}-x\|^2$ for some $c_2$. Then 
\begin{align} \label{eq:farfieldN}
    I_{\mathcal{N}, R}=\int_{\mathbb{R}^d \backslash \mathcal{B}_R}e^{-S_{\nu}(t, x)/\nu}dx= e^{-S_{\nu}(t, x^*_{\nu})/\nu} \int_{\mathbb{R}^d \backslash \mathcal{B}_R}e^{-(S_{\nu}(t, x)-S_{\nu}(t, x^*_{\nu}))/\nu}dx \sim e^{-S_{\nu}(t, x^*_{\nu})/\nu}\int_{\mathbb{R}^d \backslash \mathcal{B}_R} e^{-c_2(x^*_{\nu}-x)^2/\nu} dx. 
\end{align}
Let $z=x-x^*_{\nu}$, then $dx=dz=\omega_{d-1}r^{d-1}dr$, $\omega_{d-1}=2\pi^{d/2}/\Gamma(d/2)$ is the surface area of the unit sphere in of $d-1$ dimensions. So we can rewrite the last integral as 
\begin{align}
   & \int_{\mathbb{R}^d \backslash \mathcal{B}_R} e^{-c_2(x^*_{\nu}-x)^2/\nu} dx=\int_{\|x\|>R}e^{-c_2\|z\|^2/\nu}dz=\omega^{d-1}\int_R^{\infty}e^{-c_2r^2/\nu}r^{d-1}dr \nonumber \\
   &=\frac{\omega^{d-1}\nu^{d/2}}{2c_2^{d/2}}\int_{c_2R^2/\nu}^{\infty}s^{d/2-1}e^{-s}ds=\frac{\omega^{d-1}\nu^{d/2}}{2c_2^{d/2}}\Gamma(d/2,c_2R^2/\nu)
\end{align}
where in the last line we used the substitution $s=c_2r^2/\nu$ and the definition of the incomplete upper gamma function $\Gamma(a, b)=\int_b^{\infty}\tau^{a-1}e^{-\tau}d\tau \sim b^{a-1}e^{-b}(1+(a-1)/b+(a-1)(a-2)/b^2+\cdots)$. For any fixed dimension $d$, one can choose $\nu \ll 1$ so we are  in the regime $d \ll c_2R^2/\nu$. In this regime, one can write $\Gamma(d/2, c_2R^2/\nu)\leq (c_2 R^2/\nu)^{d/2-1} e^{-c_2R^2/\nu}(1+(d/2-1) \nu/(c_2R^2)+O(d^2\nu^2/R^4))$. Insert this back into Eq.~\eqref{eq:farfieldN}
\begin{align}
    I_{\mathcal{N}, R}\lesssim e^{-S_{\nu}(t, x^*_{\nu})/\nu}\frac{1}{\Gamma(d/2)}\left(\frac{\pi \nu}{c_2}\right)^{d/2} \left(\frac{c_2 R^2}{\nu}\right)^{d/2-1}e^{-c_2 R^2/\nu}
\end{align}
where clearly we can see that the exponential fall-off dominates, so $(c_2 R^2/\nu)^{d/2-1}e^{-c_2 R^2/\nu} \ll 1$. Thus this term is also clearly much smaller than the dominant near-field term. \\

After taking the logarithm of the integral, the intermediate and far-field terms will be even more negligible so long as $\nu$ can be made small enough, so the dominant contribution comes from the near-term values and we can write 
\begin{align} \label{eq:normerror}
    |-\nu \ln \|\mathbf{u}(t)\|^2-S_{\nu, \min}|\lesssim \nu d\ln(2\pi \nu/\det H_S)/2 
\end{align}
and other terms in the upper bound are higher order terms in $\nu$.

\subsection{Proof details of Lemma~\ref{lem:cvfmax}} \label{app:fmin}
To prove Eq.~\eqref{eq:fmaxbound}, let us rewrite
\begin{align}
    \langle u(t)|f(\hat{x})|u(t)\rangle=\frac{I_f}{I_\mathcal{N}}, \qquad I_f=\int f(x) e^{-S_{\nu}(t, x)/\nu}dx, \qquad I_{\mathcal{N}}=\int e^{-S_{\nu}(t, x)/\nu}dx.
\end{align}
For each integral $I_f$ and $I_{\mathcal{N}}$, we can divide it into three regions:
\begin{enumerate}
    \item  $x \in \mathcal{B}_{\delta}$ is called near-field when $\|x^*_{\nu}-x\|\leq \delta$ is small, and $\mathcal{B}_{\delta}$ is a ball centered at $x^*_{\nu}$ with radius $\delta$. In this region, $S_{\nu}(t, x)-S_{\nu}(t, x^*_{\nu}) \geq c_1|\|x-x^*_{\nu}\|^2$, where $c_1 \propto \lambda_{max}(H_S(x^*_{\nu}))$, which denotes the maximum eigenvalue of $H_S(x^*_{\nu})$. 
    \item  $x \in \mathcal{B}_{R} \backslash \mathcal{B}_{\delta}$ is said to be in the intermediate region when $R > \delta$. Here we can define $\eta=\min_x S_{\nu}(t, x)-S_{\nu}(t, x^*_{\nu})$
    \item  $x \in \mathbb{R}^{d} \backslash \mathcal{B}_R$ is called far-field when $\|x-x^*_{\nu}\| \geq R$.  In this case, we have $S_{\nu}(t, x)-S_{\nu}(t, x^*_{\nu})<c_2\|x^*_{\nu}-x\|^2$ for some $c_2$ (or some stronger polynomial dependence). Since $\nu \ll 1$, we can apply Laplace's method and show each integral $I_f$ and $I_{\mathcal{N}}$ is dominated by contributions from near-field $x$. 
    \end{enumerate}
\noindent (1) Near-field $x$:  In the near field we have
\begin{align}
    S_{\nu}(t,x)=S(t,x^*_{\nu})+\frac{1}{2}(x-x^*_{\nu})^T H_S (x-x^*_{\nu})+O(\|x-x^*_{\nu}\|^3),
\end{align}
where $(x-x^*_{\nu})^T H_S (x-x^*_{\nu}) \leq |\lambda_{max}(H_S(x^*_{\nu}))|\|x-x^*_{\nu}\|^2$, so $S_{\nu}(t,x)-S_{\nu}(t,x^*_{\nu}) \gtrsim c_1\|x-x^*_{\nu}\|^2$ for $c_1 \propto |\lambda_{max}(H_S(x^*_{\nu}))|$. Then rewriting $y=(x-x^*_{\nu})/\sqrt{\nu}$,  
 \begin{align}
      I_{\mathcal{N}, \delta}=\int_{\mathcal{B}_{\delta}} e^{-S_{\nu}(t, x)/\nu}dx\sim\nu^{d/2}e^{-S_{\nu}(t, x^*_{\nu})/\nu}\int_{\mathcal{B}_{\delta}}e^{-y^TH_Sy/2}(1+O(\nu^{3/2}\|y\|^3)dy\sim e^{-S_{\nu}(t, x^*_{\nu})/\nu}\sqrt{\left(\frac{2\pi \nu}{-\det H_S}\right)^d} ,
\end{align}
where in the last line the integral over $\mathcal{B}_{\delta}$ can be approximated by the integral over $\mathbb{R}^d$, when $\|y\|\leq \delta/\sqrt{\nu}$ and $\delta/\sqrt{\nu} \gg \sqrt{2/\det H_S}$, where $\sqrt{2/\det H_S}$ is the standard deviation of the Gaussian in $\exp(-y^TH_S y/2)$. \\

Similarly for $I_f$, using $f(x)\sim f(x^*_{\nu})+\nabla f(x^*_{\nu})(x-x^*_{\nu})+(x-x^*_{\nu})^T\nabla^2 f(x^*_{\nu})(x-x^*_{\nu})/2$ in the near-field. Then combined with the previous result we get 
\begin{align}    
I_{f, \delta} & \approx \int_{\mathcal{B}_{\delta}}f(x) e^{-S_{\nu}(t, x)/\nu} dx \sim \nu^{d/2}e^{-S_{\nu}(t, x^*_{\nu})/\nu} \int_{\mathcal{B}_{\delta}} (f(x^*_{\nu})+\sqrt{\nu}\nabla f(x^*_{\nu})y+\frac{\nu}{2} y^T \nabla^2 f(x^*_{\nu}) y)e^{-y^T H_S y/2}dy \nonumber \\
& \sim \sqrt{\left(\frac{2\pi \nu}{\det H_S}\right)^d}e^{-S_{\nu}(t, x^*_{\nu})/\nu}\left(f(x^*_{\nu})+\frac{\nu}{2} \nabla^2 f(x^*_{\nu}) \text{Tr}(h^{-1}_S(x^*_{\nu}))\right)
\end{align}
up to order $O(\nu^{3/2})$. \\

\noindent (2) Intermediate field $x$: In this limit, we can define $\eta=\min_x S_{\nu}(t, x)-S_{\nu}(t, x^*)$. Thus
\begin{align}
    I_{\mathcal{N},R\backslash \delta}=\int_{\mathcal{B}_R \backslash \mathcal{B}_{\delta}}e^{-S_{\nu}(t, x)/\nu} \leq \text{Vol}(\mathcal{B}_R \backslash \mathcal{B}_{\delta})e^{-S_{\nu}(t, x^*_{\nu})/\nu} e^{-\eta/\nu}.
\end{align}
Due to the exponentially decaying term $\exp(-\eta/\nu)$, this contribution to the intermediate term is exponentially smaller than $I_{\mathcal{N},\delta}$. Similarly the intermediate region contribution to the $I_f$ integral is also much smaller than the near-field contribution, if we assume there is some constant $\eta'=\min_x f(x^*_{\nu})-f(x)$,
\begin{align}
    I_{f, R \backslash \delta}=\int_{\mathcal{B}_R \backslash \mathcal{B}_{\delta}} f(x) e^{S_{\nu}(t, x)/\nu} dx\lesssim \eta' \text{Vol}(\mathcal{B}_R \backslash \mathcal{B}_{\delta})e^{S_{\nu}(t, x^*_{\nu})/\nu} e^{-\eta/\nu}. 
\end{align}
(iii) Far-field $x$: We are given the fact that the normalisation $\|\mathbf{u}\|^2=\int e^{-S_{\nu}(t,x)/\nu}dx<\infty$, which means that in the far-field limit $u(t)$ must be a decaying function as $x \rightarrow \infty$. In the Laplace method a quadratic decay is often assumed (greater than quadratic decay still satisfies the same bound) as it is the sharpest case that matches the local quadratic expansion about $x^*_{\nu}$, so we can write $S_{\nu}(t,x)-S_{\nu}(t, x^*_{\nu}) \sim c_2\|x^*_{\nu}-x\|^2$ for some $c_2$. Then 
\begin{align} \label{eq:farfieldN}
    I_{\mathcal{N}, R}=\int_{\mathbb{R}^d \backslash \mathcal{B}_R}e^{-S_{\nu}(t, x)/\nu}dx \sim e^{-S_{\nu}(t, x^*_{\nu})/\nu}\int_{\mathbb{R}^d \backslash \mathcal{B}_R} e^{-c_2(x^*_{\nu}-x)^2/\nu} dx. 
\end{align}
Let $z=x-x^*_{\nu}$, then $dx=dz=\omega_{d-1}r^{d-1}dr$, $\omega_{d-1}=2\pi^{d/2}/\Gamma(d/2)$ is the surface area of the unit sphere in of $d-1$ dimensions. So we can rewrite the last integral as 
\begin{align}
   & \int_{\mathbb{R}^d \backslash \mathcal{B}_R} e^{-c_2(x^*-x)^2/\nu} dx=\int_{\|x\|>R}e^{-c_2\|z\|^2/\nu}dz=\omega^{d-1}\int_R^{\infty}e^{-c_2r^2/\nu}r^{d-1}dr \nonumber \\
   &=\frac{\omega^{d-1}\nu^{d/2}}{2c_2^{d/2}}\int_{c_2R^2/\nu}^{\infty}s^{d/2-1}e^{-s}ds=\frac{\omega^{d-1}\nu^{d/2}}{2c_2^{d/2}}\Gamma(d/2,c_2R^2/\nu)
\end{align}
where in the last line we used the substitution $s=c_2r^2/\nu$ and the definition of the incomplete upper gamma function $\Gamma(a, b)=\int_b^{\infty}\tau^{a-1}e^{-\tau}d\tau \sim b^{a-1}e^{-b}(1+(a-1)/b+(a-1)(a-2)/b^2+\cdots)$. For any fixed dimension $d$, we can choose $\nu \ll 1$ so we are in the regime $d \ll c_2R^2/\nu$. In this regime, we can write $\Gamma(d/2, c_2R^2/\nu)\leq (c_2 R^2/\nu)^{d/2-1} e^{-c_2R^2/\nu}(1+(d/2-1) \nu/(c_2R^2)+O(d^2\nu^2/R^4))$. We can insert this back into Eq.~\eqref{eq:farfieldN}
\begin{align}
    I_{\mathcal{N}, R}\lesssim e^{-S_{\nu}(t, x^*)/\nu}\frac{1}{\Gamma(d/2)}\left(\frac{\pi \nu}{c_2}\right)^{d/2} \left(\frac{c_2 R^2}{\nu}\right)^{d/2-1}e^{-c_2 R^2/\nu}
\end{align}
where clearly we can see that the exponential fall-off dominates, so $(c_2 R^2/\nu)^{d/2-1}e^{-c_2 R^2/\nu} \ll 1$. Thus this term is also clearly much smaller than the dominant near-field term. \\

Similarly for the numerator term in the far-field, choosing a function $|f(x)|\leq c_3(1+|x|^L)$ for some $c_3$, then this far-field term is also dominated by the exponentially decaying term so long as $d+m \ll c_2 R^2/\nu$
\begin{align}
    I_{f, R}=\int_{\mathbb{R}^d \backslash \mathcal{B}_R}f(x) e^{-S_{\nu}(t, x)/\nu}dx \lesssim e^{-S_{\nu}(t, x^*_{\nu})/\nu} O((R^2/\nu)^{d/2+L-1}e^{-c_2R^2/\nu}). 
\end{align}
Putting all the results together, we observe that since the contributions to the integrals from the intermediate and far-fields are negligible compared to the near-field, then 
\begin{align}
    \langle u(t)|f(\hat{x})|u(t)\rangle=\frac{I_{f, \delta}+I_{f,R\backslash\delta}+I_{f,R}}{I_{\mathcal{N}, \delta}+I_{\mathcal{N},R\backslash\delta}+I_{\mathcal{N},R}} \approx \frac{I_{f, \delta}}{I_{\mathcal{N},\delta}} \approx f(x^*_{\nu})+\frac{\nu}{2}\nabla^2f(x^*_{\nu})|\text{Tr}(\nabla^2S_{\nu}(t, x^*_{\nu})^{-1})|+O(\nu^{3/2}).
\end{align}
Note that the above holds well when $f(x^*_{\nu})$ is large enough compared to the intermediate and far-field contributions. However, in the case where $f(x)$ is very large when $|x-x^*_{\nu}|$ is very large, but is negligibly small in the near-field limit, then the intermediate and far-field terms could dominate. For example, a sufficient condition would be $f(x^*_{\nu})-f(x) \geq 0$ for all $x$. However, since intermediate and far-field terms are exponentially smaller than the near-field term as $\nu$ increases, this means that $f(x)$ in the intermediate and far-field $x$ region would need to be exponentially larger than $f(x)$ in the near-field region to compete with the near-field limit. Thus, in general, unless $f(x)$ is localised very strongly at the intermediate and far-field regions, then the near-field approximation dominates. This means we want $f(x)/f(x^*_{\nu}) \lesssim \exp(C/\nu)$ where $C$ is a constant for all $x$ in the intermediate and far-field regions. 

\subsection{Proof details of Lemma~\ref{lemma:dvfmax}} \label{app:dvfmax}
 For example, using the midpoint rule in approximating an integral, the total error  $|\int g(x) dx -\sum_j g(x_j) (1/N_x)^d| \sim (1/N_x^2) \int \nabla^2g(x)dx$. When $g(x)=f(x)e^{-S_{\nu}/\nu}$, then 
    \begin{align} \label{eq:gintegral}
        \int \nabla^2 f(x) e^{-S_{\nu}/\nu}dx=\frac{1}{\nu}\int \nabla f(x) \nabla S_{\nu} e^{-S_{\nu}/\nu} dx+\frac{1}{\nu^2}\int f(x)(\nabla S_{\nu})^2e^{-S_{\nu}/\nu}dx-\frac{1}{\nu}\int f(x)\nabla^2 S_{\nu}e^{-S_{\nu}/\nu} dx. 
    \end{align}
    Using the Laplace integral method as done previously in Section~\ref{sec:cvfmin}, the integrals should localise at the value of the minimum of $S_{\nu}$, so the first-order derivative terms $\nabla S_{\nu}$ become negligible. Thus, only the terms $\sim 1/\nu$ dominates. Out of those terms, again using the Laplace integral method where only the $x$ near $x^*$ dominate, then in the integrals above we need to evaluate Gaussian integrals, with the standard deviation term proportional to $\nu$, where the Gaussian integrals would obtain terms $\sim \nu^{d/2}$. Thus, in Eq.~\eqref{eq:gintegral}, we would have surviving terms that scale like $\nu^{d/2-1}$. Then
    \begin{align} \label{eq:dvfmax}
        \frac{\int f(x)e^{-S_{\nu}(t, x)/\nu}dx}{\int e^{-S_{\nu}(t,x)/\nu}dx} \approx \frac{(1/N_x^d)\sum_{j=1}^{N_x^d} f(x_j)e^{-S_{\nu}(t, x_j)/\nu}+O(\nu^{d/2-1}/N_x^2)}{(1/N_x^d)\sum_{j=1}^{N_x^d} e^{-S_{\nu}(t,x_j)/\nu}+O(\nu^{d/2-1}/N_x^2)}= \frac{\sum_{j=1}^{N_x^d} f(x_j)e^{-S_{\nu}(t, x_j)/\nu}+O(1/(\nu N_x^2))}{\sum_{j=1}^{N_x^d} e^{-S_{\nu}(t,x_j)/\nu}+O(1/(\nu N_x^2))}, 
    \end{align}
    thus
    \begin{align}
        |\langle f(\hat{X})\rangle-\langle f(\hat{x})\rangle|\lesssim O(1/(\nu N_x^2)).
    \end{align}

\end{document}